\documentclass[journal]{IEEEtran}
\usepackage{amsmath,amsfonts,amssymb}

\usepackage{algorithm}
\usepackage{algpseudocode}

\usepackage{booktabs}
\usepackage{multirow}
\usepackage{tabularx}
\usepackage{diagbox}

\usepackage[nocompress]{cite}
\usepackage[pdfa]{hyperref}
\usepackage{hyperref}

\usepackage[caption=false,font=footnotesize]{subfig}
\usepackage{graphicx}

\usepackage{tikz, pgfplots}
\usetikzlibrary{shapes, arrows, arrows.meta, fit, positioning, calc, 3d}
\usepgfplotslibrary{fillbetween}
\pgfplotsset{compat=1.18} 
\tikzset{>=latex}

\newtheorem{theorem}{Theorem}
\newtheorem{lemma}{Lemma}
\newtheorem{remark}{Remark}
\newtheorem{definition}{Definition}

\allowdisplaybreaks

\begin{document}
% FOR AVOIDING DASH FOR RECURRENT AUTHORS
\bstctlcite{IEEEtran:no_dash_repeated_names}

\title{Innovation-Based Sampling for New Information}
\author{Jiping~Luo, Anthony~Ephremides,~\IEEEmembership{Life~Fellow,~IEEE}, and Nikolaos~Pappas,~\IEEEmembership{Senior~Member,~IEEE}
\thanks{This work was supported by ELLIIT, CUGS, and the European Union (6G-LEADER, 101192080, and MAGIC-6G, 101292933).}
\thanks{Jiping Luo and Nikolaos Pappas are with the Department of Computer and Information Science, Link\"{o}ping University, 58183 Link\"{o}ping, Sweden (e-mail: jiping.luo@liu.se; nikolaos.pappas@liu.se).}
\thanks{Anthony Ephremides is with the Electrical and Computer Engineering Department, University of Maryland, College Park, MD 20742 USA (e-mail: etony@umd.edu).}
}

\maketitle
\begin{abstract}
This work introduces innovation-based sampling of continuous stochastic processes, in which a sample is generated only when the process reveals ``sufficiently new'' information relative to all previous samples. The resulting innovation process admits a lattice structure and a three-dimensional state representation. For a Wiener process, we characterize the direction, timing, and frequency of innovations. We show that direction reversals become increasingly rare and establish limit theorems for their number. Innovations also become progressively sparser: the expected number of innovations grows only as the square root of time, and hence the sampling rate vanishes asymptotically. We then study remote estimation from sparsely received innovations. Notably, the absence of an innovation carries information: the minimum mean-square error (MMSE) estimate evolves with the age of information (AoI) and achieves a substantial MSE reduction over the conventional silence-ignorant zero-order hold (ZOH) estimator. Finally, we develop tractable affine-age and exponential-age approximations for practical use. Overall, information is conveyed not only by the content of innovations, but also by their direction and timing.
\end{abstract}
\begin{IEEEkeywords}
Age of information, event-triggered sampling, goal-oriented semantic communications, innovation.
\end{IEEEkeywords}

\section{Introduction}
Future communication systems are increasingly expected to deliver only timely and task-relevant information~\cite{howard1966information, vitturi2013industrial, gielis2022critical, kountouris2021semantics}. In emerging cyber-physical systems, communicating every piece of raw data is wasteful, if not infeasible. This challenge has motivated growing interest in semantics-aware and goal-oriented communications, where a central question concerns how much sampling and communication are necessary and what information is worth conveying~\cite{luo2025information}. For example, in stock trading or environmental monitoring, new record values, such as unprecedented price movements, temperatures, atmospheric pressures, or river levels, may be more informative than repeated observations within an already explored range.

In this work, we approach this question through the notion of \emph{innovation}, a piece of ``sufficiently new'' information that would surprise the receiver because it has not been observed before or cannot be predicted from previous samples. Intuitively, information may lose its significance through repetition and predictability, whereas excursions into previously unexplored regions reveal new information. As we show in this paper, an innovation conveys information not only through its content but also through its direction and timing.

This work studies innovation-based sampling of continuous stochastic processes. Its purpose is to communicate new information while suppressing repeated excursions within the explored range. We characterize the occurrence times and frequency of innovations, and their directions and reversals. We further study reconstruction between successive innovations to quantify what the receiver can infer from both the communicated samples and periods of silence.

\begin{figure}[t!]
    \centering
    \includegraphics[width=0.95\linewidth]{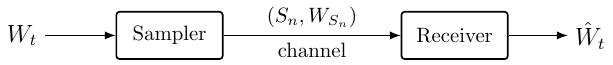}
    \caption{System model.}
    \label{fig:system_model}
\end{figure}

\subsection{Main Contributions}
For clarity, we briefly describe the system in Fig.~\ref{fig:system_model}. A sampler observes a continuous-time process $(W_{t})_{t\geq 0}$ and generates samples at times $(S_{n})_{n\geq 1}$. At each time $t$, a remote receiver produces an estimate $\hat{W}_{t}$ using the samples received by time $t$. Under innovation-based sampling, a sample is generated whenever the process exceeds the recorded maximum by a prescribed threshold $\delta$, i.e., an upper innovation, or falls below the recorded minimum by $\delta$, i.e., a lower innovation. The resulting sample sequence $(W_{S_{n}})_{n\geq 1}$ is called the innovation process. Let $T_{n}=S_{n+1}-S_{n}$ denote the $n$th inter-sampling time, $N_{t}$ the number of innovations generated by time $t$, and $R_{t}=\mathbb{E}[N_{t}]/t$ the expected sampling rate. Let $D_{n}$, $I_{n}$, and $V_{n}$ denote, respectively, the direction of the $n$th innovation, its direction-reversal indicator, and the number of reversals among the first $n$ innovations.

Our main contributions are as follows.

\subsubsection{Innovation-Based Sampling}
We introduce innovation-based sampling, which generates a sample only when the source enters a sufficiently new region relative to all previous samples. The resulting innovation process lies on a lattice, and its spread grows linearly with the number of innovations. Despite its history dependence, the sampler admits a three-dimensional state representation and therefore need not retain the full sampling history.

We then specialize the framework to a Wiener process.

\subsubsection{Characterization of the Innovation Process}
We show that the directions $(D_{n})_{n\geq 1}$ form a time-inhomogeneous binary Markov chain. The probability of a direction reversal at the $n$th innovation is $\mathbf{P}(I_{n}=1)=\frac{1}{n+1}$, yielding a reversal surprisal of $\log_{2}(n+1)$. We further show that the reversal indicators $(I_{n})_{n\geq 2}$ are mutually independent and establish a strong law of large numbers and a central limit theorem for the number of reversals $V_{n}$.

We also derive the distributions and moments of the sampling times $(S_{n})_{n\geq 1}$ and inter-sampling times $(T_{n})_{n\geq 1}$. The expected inter-sampling time $\mathbb{E}[T_{n}]$ grows linearly with $n$, whereas the expected sampling time $\mathbb{E}[S_{n}]$ grows quadratically. Moreover, the expected number of innovations $\mathbb{E}[N_{t}]$ grows on the order of $\sqrt{t}$, while the sampling rate $R_{t}$ decays on the order of $t^{-1/2}$ and vanishes asymptotically. This behavior is in sharp contrast to classical periodic and deviation-based rules, which have constant asymptotic sampling rates.

\subsubsection{Age-Aware Estimation}
We then study remote estimation under innovation-based sampling and derive the minimum mean-square error (MMSE) estimator. Notably, we show that silence is informative and that the MMSE estimate evolves with the age of information (AoI), i.e., the elapsed time since the latest received sample. In contrast, under deviation-based sampling, the MMSE estimator reduces to the silence-ignorant zero-order hold (ZOH) estimator, which retains the latest received sample until the next update. By exploiting silence, the MMSE estimator achieves an asymptotic MSE reduction of $82.14\%$ relative to ZOH.

Because the MMSE estimator involves an infinite series, we develop analytically tractable approximations. We introduce affine-age estimators whose correction terms vary linearly with AoI and exponential-age estimators whose corrections converge to the MMSE long-silence limit. For both classes, we derive closed-form MSE expressions and obtain the optimal coefficients. Relative to ZOH, the optimal affine-age and exponential-age estimators achieve asymptotic MSE reductions of $74.29\%$ and $81.85\%$, respectively.

\subsection{Article Organization}
Section~\ref{sec:related-work} reviews related work on goal-oriented semantic communications, sampling rules, and record theory. Section~\ref{sec:system-model} presents the system model and formalizes innovation-based sampling. Section~\ref{sec:sampling-wiener-process} characterizes the innovation directions, reversals, sampling times, and sampling rate for a Wiener process. Section~\ref{sec:estimation-wiener-process} develops the MMSE and low-complexity silence-aware estimators and analyzes their estimation performance. The Appendix collects the necessary background on Wiener processes and stochastic integrals used in our analysis.

\section{Related Work}\label{sec:related-work}
\subsection{Goal-Oriented Semantic Communications}
A central purpose of goal-oriented semantic communication is to quantify the value of information, so as to selectively sample and transmit only what contributes to the system's goal. The notion of AoI is the first concrete step in this direction~\cite{kaul2012real}. It holds the premise that information is valuable when it is fresh. This perspective has reshaped communication system design, including sampling, queueing, transmission scheduling, and random access protocols~\cite{yates2021age, sun2022age, pappas2023age}. However, freshness alone does not determine usefulness: a newly generated sample may repeat known information or have little value to the end user. 

This limitation has motivated semantics-aware metrics that capture information attributes such as content significance, contextual relevance, and error persistence. For example, information versions represent significant changes in content, while version age measures the number of versions by which the receiver lags behind the source~\cite{yates2021Vage, buyukates2022version, delfani2025version, kaswan2025age, salimnejad2025age, tekez2026information}. Error-persistence metrics assign greater importance to updates that correct prolonged errors, which may otherwise incur severe operational risks or costs~\cite{maatouk2020age, salimnejad2023state, chen2024minimizing, cosandal2025multi, luo2024exploiting, bastopcu2022using, zakeri2024semantic, chiariotti2025distributed}. Context-aware measures further distinguish updates according to their consequences, utility, and urgency~\cite{zheng2020urgency, pappas2021goal, ornee2023context, luo2025semantic, wang2026information, luo2025cost, luo2026role}. This line of research is surveyed in~\cite{luo2025information}. Complementing these metrics, the Pareto-optimal design quantifies the marginal value of communication and identifies the minimum communication needed to achieve the goal~\cite{luo2026value}.

However, much of the literature presumes an update generation model, such as an independent and identically distributed (i.i.d.) process or a time-homogeneous Markov chain~\cite{luo2025information}. In contrast, this work introduces innovation\footnote{In filtering theory, an innovation conventionally denotes the measurement prediction error~\cite{kailath1968innovations}. We extend this notion beyond unpredictability to novelty.} as a criterion for generating informative updates directly from raw observations. For a Wiener source, we show that the resulting update process is nonstationary and becomes progressively sparser.

\subsection{Sampling Rules}
Periodic sampling is the classical approach in digital communication and control, in which measurements are acquired at uniformly spaced time instants. Its theoretical foundation is rooted in the Nyquist--Shannon theorem~\cite{shannon1949communication}, which guarantees perfect reconstruction of bandlimited signals sampled above the Nyquist rate. Many stochastic processes of interest, including the Wiener process considered here, do not satisfy the bandlimited assumption. More importantly, periodic sampling does not adapt to the realized system trajectory and may therefore generate unnecessary samples when the system requires little attention~\cite{heemels2012introduction}.

This limitation has motivated various forms of signal-aware aperiodic sampling. Event-triggered sampling generates a sample when a prescribed state- or error-dependent condition is met~\cite{heemels2012introduction}. That is, sampling is triggered in response to system events rather than by elapsed time, thereby reducing sampling effort while meeting the system's goal. Related approaches include state-dependent sampling~\cite{fiter2012state}, which adapts the inter-sampling time to the sampled state while preserving control stability, and quantization-based sampling~\cite{bini2014optimal}, which nonuniformly allocates sampling instants to approximate the optimal continuous-time control input. 

For stochastic processes, a prominent signal-aware mechanism is deviation-based, or Lebesgue, sampling, which generates a sample when the process deviates from its most recently sampled value by a prescribed threshold. Åström and Bernhardsson showed that Lebesgue sampling can outperform periodic sampling at the same average sampling rate~\cite{astrom2002comparison}. Subsequent studies formulated causal sampling of the Wiener process as an optimal stopping problem and established the optimality of deviation-based sampling under ZOH estimation and a sampling rate constraint~\cite{rabi2012adaptive, nar2014sampling, sun2020sampling, guo2022optimal, tang2022sampling, pan2023sampling}.

\begin{figure}[t!]
    \centering
    \subfloat[periodic sampling]{
    \includegraphics[width=0.95\linewidth]{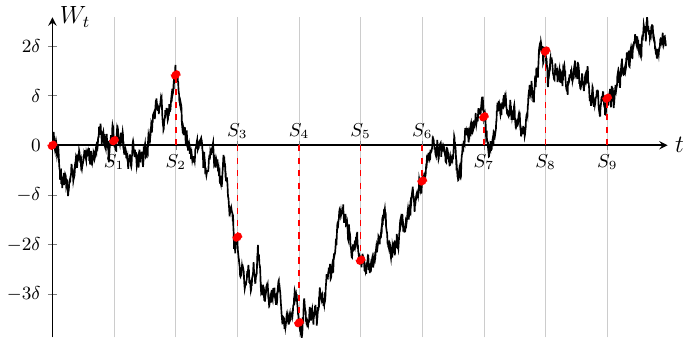}
    \label{fig:periodic_sampling}
    }
    \\
    \subfloat[deviation-based sampling]{
    \includegraphics[width=0.95\linewidth]{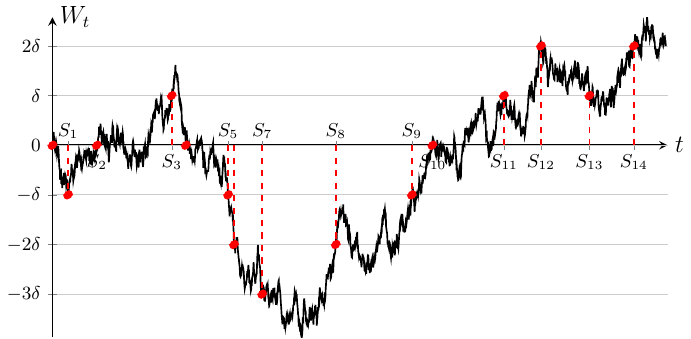}
    \label{fig:threshold_sampling}
    }
    \\
    \subfloat[innovation-based sampling]{
    \includegraphics[width=0.95\linewidth]{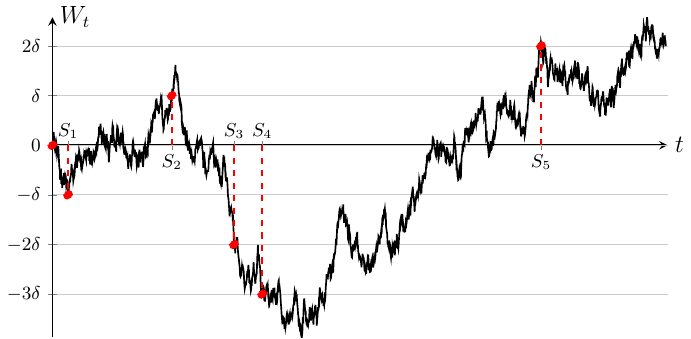}
    \label{fig:innovation_sampling}
    }
    \\
    \caption{Illustration of different sampling rules.}
    \label{fig:sampling_rules}
\end{figure}

Innovation-based sampling differs in the reference used to define a sampling event, as illustrated in Fig.~\ref{fig:sampling_rules}. Deviation-based sampling is local: after each sample, the triggering mechanism resets and compares the current process value with the most recently sampled value. Innovation-based sampling is history-dependent: a sample is generated when the process enters a previously unexplored region relative to the entire sampling history. Hence, the innovation sample sequence is a pathwise thinning of the deviation-based sample sequence. Crucially, silence is informative under innovation-based sampling, and the MMSE estimate evolves with AoI.

\subsection{Record Theory}
Our notion of innovation is closely related to record theory. Classical record theory, initiated by Chandler~\cite{chandler1952distribution}, studies observations that exceed or fall below all preceding observations, together with their values, occurrence times, and frequencies. For an i.i.d. sequence with a continuous distribution, let $\xi_{n}$ indicate that the $n$th observation is an upper record. R\'enyi~\cite{renyi1962theorie} proved that the record indicators $(\xi_{n})_{n\geq 1}$ are mutually independent with $\mathbf{P}(\xi_{n}=1)=\frac{1}{n}$. Consequently, the number of records among the first $n$ observations has mean and variance of order $\log(n)$, satisfies a strong law of large numbers, and is asymptotically Gaussian. Extensions to discrete distributions, independent but non-identically distributed sequences, and correlated sequences generated by random walks are surveyed in~\cite{arnold2011records, ahsanullah2015records}.

Our framework differs from classical record theory in two fundamental respects. First, innovations are generated endogenously from a continuous-time stochastic process, rather than identified from an exogenously given sequence of observations. Second, innovation-based sampling is two-sided: a sample is generated when the process expands either the previously recorded maximum or minimum by a prescribed threshold. Characterizing the innovation process therefore requires a first-exit time analysis. Interestingly, for a Wiener process, we show that the reversal indicators share the same probability structure as the record indicators. This connection allows record-theoretic results to be transferred directly to the number and occurrence indices of direction reversals.

\begin{table}[t!]
\caption{Notation Summary}
\label{table:notation}
\centering
\renewcommand{\arraystretch}{1.25}
\setlength{\tabcolsep}{3pt}
\begin{tabular}{|c|l|}
\hline
\textbf{Symbol}
& \multicolumn{1}{c|}{\textbf{Description}} \\
\hline
\rule{0pt}{2.6ex}
$W_t$, $\hat{W}_t$, $\widetilde{W}_t$
& Source process, its estimate, and the shifted process\\
\hline
$S_n$, $T_n$
& Sampling time and inter-sampling time \\
\hline
$R_t$, $R_\infty$
& Expected and asymptotic sampling rates \\
\hline
$U_t$
& Receiver's AoI, i.e., elapsed time since last update \\
\hline
\rule{0pt}{2.5ex}
$\overline{M}_t$, $\underline{m}_t$
& Running maximum and minimum of the process \\
\hline
$M_n$, $m_n$
& Recorded maximum and minimum of the samples \\
\hline
\multirow{2}{*}{$D_n$, $\Xi_n$, $N_t$}
& Innovation direction, no. of upper innovations, \\[-0.2em] & and no. of innovations generated by time $t$\\
\hline
\multirow{2}{*}{$I_n$, $V_n$, $\Upsilon_n$} & Indicator of a direction reversal, no. of reversals, \\[-0.2em] & and reversal surprisal \\
\hline
$\alpha_n$, $\beta_n$
& Lower and upper exit distances after the $n$th sample \\
\hline
\rule{0pt}{2.6ex}
$\mathcal{J}_n$, $\bar{\mathcal{J}}_n$
& MSE and average MSE over the first $n$ intervals \\
\hline
\end{tabular}
\end{table}

\section{Innovation-Based Sampling Framework}\label{sec:system-model}
\subsection{System Model}
Consider the system in Fig.~\ref{fig:system_model}. The variables to be defined are summarized in Table~\ref{table:notation}. Let $(W_t)_{t\geq 0}$ be a continuous stochastic process starting at $W_{0} = 0$. A sampler observes $W$ and generates samples at an increasing sequence of times $0 = S_{0} < S_{1} < S_{2} < \cdots$. Let
\begin{equation}
    T_{n} := S_{n+1} - S_{n}
\end{equation}
denote the $n$th inter-sampling time (or sampling interval). 

We consider \emph{causal} samplers so that the decision to sample at time $t$ depends only on information available up to time $t$. Formally, define
\begin{equation}
    \mathcal{F}_{t} = \bigcap_{s>t} 
    \mathcal{F}^{0}_{s}, \quad \mathcal{F}^{0}_{t} := \sigma(W_{s}: 0 \leq s \leq t),
\end{equation}
where $\mathcal{F}^{0}_{t}$ is the $\sigma$-field of events that can be defined in terms of the process up to time $t$. Then, $(\mathcal{F}_{t})_{t\geq 0}$ is a right-continuous filtration, and causality requires each $S_{n}$ to be a \emph{stopping time} with respect to this filtration, i.e.,
\begin{equation}
\{S_{n} \leq t\} \in \mathcal{F}_{t}, \quad t \geq 0.
\end{equation}
Further details on filtrations and stopping times are provided in the Appendix.

Samples and their timestamps are sent to a remote receiver through a communication link, which is assumed to be instantaneous and error-free. Suppose that $S_{n} \leq t < S_{n+1}$. Let
\begin{equation}
    \mathcal{H}_{n} := \sigma ( S_{k}, W_{S_{k}}: 0 \leq k \leq n )
\end{equation}
denote the received history after the $n$th sample. The receiver reconstructs the source from the sequence of communicated samples using the MMSE estimation rule. Define the AoI at the receiver at time $t$ by
\begin{equation}
    U_{t} := t - S_{N_{t}},
\end{equation}
where
\begin{equation}
    N_{t} := \max\{n : S_{n} \leq t\} \label{eq:counting-process}
\end{equation}
is the number of innovations generated by time $t$. The absence of a new sample since $S_{n}$ can be expressed by
\begin{equation}
    \{t < S_{n+1}\} = \{T_{n} > U_{t}\}.
\end{equation}

At each time $t$, the receiver knows not only the values and timestamps of all samples received up to time $S_n$, but also that no new sample has been generated since then. Under event-triggered sampling, \emph{silence itself can convey information about the process state.} Specifically, an absence of a new sample implies that the sampling event has not occurred, thus restricting the possible evolution of the process since the last sample. Accordingly, the MMSE estimator is 
\begin{equation}
    \hat{W}_{t} := \mathbb{E}[W_{t} {\,}|{\,} \mathcal{H}_{n}, T_{n} > U_{t}], \quad t \in [S_{n}, S_{n+1}). \label{eq:mmse-estimator}
\end{equation}

\begin{definition}
An estimator is said to be \emph{silence-aware} if it exploits the information conveyed by the absence of a new sample. A \emph{silence-ignorant} estimator is defined as
\begin{equation}
    \hat{W}_{t} := \mathbb{E}[W_{t} {\,}|{\,} \mathcal{H}_{n}], \quad t \in [S_{n}, S_{n+1}). \label{eq:mmse-estimator-si}
\end{equation}
\end{definition}

The accumulated MSE over the first $n$ sampling intervals is defined as
\begin{equation}
    \mathcal{J}_n
    :=
    \sum_{k=0}^{n-1}\mathsf{MSE}_k,
    \,\,
    \mathsf{MSE}_k
    :=
    \mathbb{E}\left[
        \int_{S_k}^{S_{k+1}}
        (W_t-\hat{W}_t)^2\,dt
    \right].
\end{equation}
The (ratio-)average MSE over the first $n$ sampling intervals is
\begin{equation}
    \bar{\mathcal{J}}_n
    :=
    \frac{\mathcal{J}_n}{\mathbb{E}[S_n]}
    =
    \frac{
        \sum_{k=0}^{n-1}\mathsf{MSE}_k
    }{
        \sum_{k=0}^{n-1}\mathbb{E}[T_k]
    }, \quad n\geq 1.
\end{equation}

We next review two classical sampling rules and then introduce the innovation-based sampling rule. These rules are illustrated on a common sample path in Fig.~\ref{fig:sampling_rules}.

\subsection{Two Classical Sampling Rules}
We first review two classical sampling rules.
\begin{itemize}
    \item \emph{Periodic rules} generate samples at equally spaced times, such that for some $\Delta_{\mathrm{S}} > 0$,
    \begin{equation}
        S_{n+1} = S_{n} + \Delta_{\mathrm{S}}.
    \end{equation}
    \item \emph{Deviation-based rules}~\cite{sun2020sampling, guo2022optimal} generate samples whenever the process deviates from the last sampled value by some fixed threshold $\delta > 0$, i.e., 
    \begin{equation}
        S_{n+1} = \inf \left\{t > S_{n}: |W_{t} - W_{S_{n}}| \geq \delta \right\}.
    \end{equation}
\end{itemize}

Both rules ignore full sampling history: periodic sampling ignores the process trajectory, while deviation-based sampling retains only the most recent sampled value. In contrast, the innovation-based sampling rule introduced next incorporates information about the range explored by all previous samples.

\subsection{Innovation-Based Sampling Rules}
Conceptually, an innovation represents ``sufficiently new'' information that would surprise the receiver. An innovation-based sampler suppresses measurements that remain within the range of previously communicated samples, thereby reducing the amount of less informative data generated and communicated over the network. We formalize this idea below.

\begin{definition}\label{def:extrema}
Define the \emph{running extrema} of the process by
\begin{equation}
    \overline{M}_{t} := \max_{0 \leq s \leq t} W_{s}, \qquad 
    \underline{m}_{t} := \min_{0 \leq s \leq t} W_{s}.
\end{equation}
After the $n$th sample is taken, define the \emph{recorded extrema} of the sampled values by
\begin{equation}
    M_{n} := \max_{0 \leq k \leq n} W_{S_{k}}, \qquad
    m_{n} := \min_{0 \leq k \leq n} W_{S_{k}},
\end{equation}
where $M_{0} = m_{0} = W_{0}$. We call $M_{n} - m_{n}$ the \emph{spread} of the sampled values.
\end{definition}

Using the recorded extrema as references, we define an innovation as an excursion beyond the previously sampled range by at least a prescribed threshold.

\begin{definition}
Fix a threshold $\delta>0$. We say that $W$ attains an \emph{innovation} after the $n$th sample if it exceeds the recorded maximum or falls below the recorded minimum by at least $\delta$, i.e., $W_{t} \geq M_{n} + \delta$ or $W_{t} \leq m_{n} - \delta$. An \emph{innovation-based sampling rule} generates samples only at innovation times; that is, $S_{0} = 0$ and for $n = 0, 1, \ldots$
\begin{equation}
    S_{n+1} = \inf\left\{t > S_{n}: W_{t} \notin \left(m_{n} - \delta, M_{n} + \delta\right)\right\}. \label{eq:sampling rule}
\end{equation}
The sequence $(W_{S_{n}})_{n \geq 1}$ is called the \emph{innovation process}. 
\end{definition}

To characterize how innovations evolve over time, we introduce quantities describing their frequency, direction, and direction reversals. We begin with the sampling rate.

\begin{definition}
Recall the counting process $(N_{t})_{t\geq 0}$ defined in~\eqref{eq:counting-process}. For $t>0$, let $R_t := \mathbb{E}[N_t] / t$ denote the \emph{expected sampling rate}. The \emph{asymptotic} sampling rate is $R_{\infty} = \lim_{t \to \infty} R_t$.
\end{definition}

We then define the direction of innovation. The $n$th sample is called an \emph{upper innovation} if it is a new recorded maximum, i.e., $W_{S_{n}} = M_{n-1} + \delta$, and a \emph{lower innovation} if it attains a new recorded minimum, i.e., $W_{S_{n}} = m_{n-1} - \delta$.

\begin{definition}
Let $D_{n} \in \{-1, +1\}$, $n \geq 1$, denote the direction of the $n$th innovation, where
\begin{equation}
    D_{n} := \begin{cases}
        +1, &W_{S_{n}}~\textrm{is an upper innovation},\\
        -1, &W_{S_{n}}~\textrm{is a lower innovation}.
    \end{cases}
\end{equation}
The number of upper innovations among the first $n$ innovations is given by
\begin{equation}
    \Xi_{n} := \sum_{k=1}^{n} \mathbb{I}\{D_{k} = +1\}.
\end{equation}
\end{definition}

Finally, we identify changes in innovation direction.

\begin{definition}
For $n\geq 2$, let $I_{n} := \mathbb{I}\{D_{n} \neq D_{n-1}\}$ indicate whether a \emph{direction reversal} occurs at the $n$th innovation. Let
\begin{equation}
    V_{n} := \sum_{k=2}^{n} I_{k}
\end{equation}
denote the number of reversals among the first $n$ innovations.
\end{definition}

\begin{remark}
Reversals carry additional semantic information beyond the occurrence of an innovation. A reversal indicates a change in the direction of exploration and may therefore warrant additional system inspection or intervention. 
\end{remark}

The information conveyed by a reversal is quantified below.

\begin{definition}
The \emph{reversal surprisal} is defined as
\begin{equation}
    \Upsilon_n := - \log_{2} \mathbf{P}(I_{n} = 1).
\end{equation}
\end{definition}

\subsection{Lattice Structure of the Innovation Process}
The next lemma shows a key structural property of the innovation-based sampler: the innovations lie on a lattice, and their spread grows linearly with the number of innovations. This structure serves as the basis for our analysis.

\begin{lemma}\label{lemma:spread}
The innovation process lies on the lattice $\delta \mathbb{Z}$, and its spread satisfies 
\begin{equation}
    M_{n} - m_{n} = n \delta, \quad n = 0, 1, 2 \ldots \label{eq:spread}
\end{equation}
\end{lemma}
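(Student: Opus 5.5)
Since every sample exits the open interval $(m_n-\delta, M_n+\delta)$ exactly at one of its endpoints, both claims follow from an induction on $n$.

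The base case $n=0$ holds because $W_0=0\in\delta\mathbb{Z}$ and $M_0=m_0=0$, so the spread is $0$.

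For the inductive step, assume $W_{S_k}\in\delta\mathbb{Z}$ for all $k\le n$ and $M_n-m_n=n\delta$. In particular $M_n,m_n\in\delta\mathbb{Z}$, since they are values of samples.

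\begin{enumerate}
\item Identify $W_{S_{n+1}}$. By~\eqref{eq:sampling rule}, $S_{n+1}$ is the first exit time after $S_n$ of the continuous process $W$ from the open interval $(m_n-\delta, M_n+\delta)$.
\begin{itemize}
\item At time $S_n$ the process lies inside this interval, because $W_{S_n}\in[m_n,M_n]$.
\item By continuity of paths, $W_{S_{n+1}}$ lies in the closure of the interval.
\item Since $\{t: W_t\notin(\cdot)\}$ is closed and the infimum is attained, $W_{S_{n+1}}\notin(m_n-\delta,M_n+\delta)$.
\end{itemize}
Hence $W_{S_{n+1}}\in\{m_n-\delta,\,M_n+\delta\}$. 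This requires $S_{n+1}<\infty$; otherwise the statement is understood for those $n$ with $S_n<\infty$, which for a Wiener process is almost surely all $n$, since $W$ exits bounded intervals a.s.
\item Conclude the induction. Both endpoints lie in $\delta\mathbb{Z}$. If $D_{n+1}=+1$, then $M_{n+1}=M_n+\delta$ and $m_{n+1}=m_n$. If $D_{n+1}=-1$, then $m_{n+1}=m_n-\delta$ and $M_{n+1}=M_n$. In either case the spread increases by exactly $\delta$, giving $M_{n+1}-m_{n+1}=(n+1)\delta$.
\end{enumerate}

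The main point requiring care is step 1: showing the sample lands exactly on an endpoint rather than strictly beyond it. This rests on the continuity of $W$, together with attainment of the infimum, which holds because the exit set of a continuous path from an open set is closed.

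The process is started at $S_n$, so one also needs that $W$ does not exit immediately at time $S_n$. This holds because $W_{S_n}$ lies at distance at least $\delta$ from the two endpoints, so $S_{n+1}>S_n$ strictly.
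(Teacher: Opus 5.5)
Your proof is correct and takes essentially the same approach as the paper. Continuity forces each sample to land exactly on $M_n+\delta$ or $m_n-\delta$, and induction from $M_0=m_0=0$ then gives both the lattice property and the recursion $M_{n+1}-m_{n+1}=M_n-m_n+\delta$. Your additional remarks (attainment of the infimum, $S_{n+1}>S_n$ strictly, and the finiteness caveat for $S_{n+1}$, which matters since the lemma is stated for a general continuous process) make explicit details that the paper leaves implicit.
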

\begin{IEEEproof}
Since $W_t$ is continuous, at each $S_{n}$ the process must hit one of the two boundaries exactly, i.e.,
\begin{equation*}
    W_{S_{n}} = \begin{cases}
        M_{n-1} + \delta, &\textrm{if an upper innovation occurs},\\
        m_{n-1} - \delta, &\textrm{if a lower innovation occurs}.
    \end{cases}
\end{equation*}
Therefore, every new innovation is either a new maximum or a new minimum of the sampled process. Since $M_{0}=m_{0}=0$, it follows by induction that $W_{S_{n}} \in \delta\mathbb{Z}$ for all $n$. 

We next derive the spread recursion. For an upper innovation, the recorded extrema satisfy $M_{n}=M_{n-1}+\delta$ and $m_{n}=m_{n-1}$, whereas for a lower innovation, $M_{n}=M_{n-1}$ and $m_{n}=m_{n-1}-\delta$. Therefore, in either case, 
\begin{equation*}
    M_{n}-m_{n} = M_{n-1} - m_{n-1} + \delta.
\end{equation*}
Since $M_{0} = m_0 =0$, induction gives~\eqref{eq:spread}.
\end{IEEEproof}

The lattice structure in Lemma~\ref{lemma:spread} implies that the innovation process admits a finite-dimensional representation. We formalize this result below.

\begin{lemma}
The innovation process evolves as
\begin{equation}
    W_{S_{n}} = \begin{cases}
    \Xi_{n} \delta, & D_{n}=+1,\\
    -(n - \Xi_{n}) \delta, & D_{n}=-1.
\end{cases}
\end{equation}
\end{lemma}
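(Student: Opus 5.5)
The plan is to track the recorded extrema $M_n$ and $m_n$ separately and use the update rules already established in the proof of Lemma~\ref{lemma:spread}. There we saw that an upper innovation sets $M_n = M_{n-1}+\delta$ and leaves $m_n = m_{n-1}$ unchanged. Symmetrically, a lower innovation sets $m_n = m_{n-1}-\delta$ and leaves $M_n = M_{n-1}$ unchanged. Starting from $M_0=m_0=0$, the maximum increases by $\delta$ exactly at the upper innovations, and the minimum decreases by $\delta$ exactly at the lower innovations.

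The first step is to obtain closed forms for the extrema by induction on $n$:
\begin{equation*}
    M_n = \delta\sum_{k=1}^{n}\mathbb{I}\{D_k=+1\} = \Xi_n\delta,
\end{equation*}
\begin{equation*}
    m_n = -\delta\sum_{k=1}^{n}\mathbb{I}\{D_k=-1\} = -(n-\Xi_n)\delta.
\end{equation*}
The second identity uses $\mathbb{I}\{D_k=-1\}=1-\mathbb{I}\{D_k=+1\}$, which holds because every innovation is either upper or lower. The two identities are consistent with Lemma~\ref{lemma:spread}, since $M_n-m_n=n\delta$.

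The second step is to identify $W_{S_n}$ with the appropriate extremum. If $D_n=+1$, then by definition $W_{S_n}=M_{n-1}+\delta=M_n$, so $W_{S_n}=\Xi_n\delta$. If $D_n=-1$, then $W_{S_n}=m_{n-1}-\delta=m_n=-(n-\Xi_n)\delta$.

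The only point needing care is that the direction $D_n$ is well defined, that is, a sample cannot hit both boundaries at once. This holds because the two boundaries $M_{n-1}+\delta$ and $m_{n-1}-\delta$ differ by $(n+1)\delta>0$. By continuity of $W$, exactly one of them is hit at $S_n$. Beyond this check, the argument is routine bookkeeping.
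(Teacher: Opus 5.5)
Your proposal is correct and follows the paper's reasoning. The paper gives no separate proof: it presents the lemma as an immediate consequence of the extremum update rules $M_n=M_{n-1}+\delta$ (upper innovation) and $m_n=m_{n-1}-\delta$ (lower innovation) from the proof of Lemma~\ref{lemma:spread}. That is exactly your induction giving $M_n=\Xi_n\delta$ and $m_n=-(n-\Xi_n)\delta$, followed by identifying $W_{S_n}$ with whichever extremum $D_n$ selects. Like the paper, you tacitly assume every $S_n$ is finite, which holds for the Wiener source but is an implicit hypothesis for a general continuous process.
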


\begin{remark}
Crucially, history dependence does not entail growing memory complexity: instead of retaining the entire history $(W_{S_{0}}, W_{S_{1}}, \ldots, W_{S_{n}})$, it suffices to track the three-dimensional state $(n,\Xi_{n},D_{n})$. This property is useful in systems with stringent memory and computation constraints.
\end{remark}

% \begin{remark}
% Each innovation contains three types of information: value, timing, and reversal.
% \end{remark}

\begin{table}[t!]
\centering
\caption{Comparison of Deviation-Based and Innovation-Based Sampling of a Wiener Process}
\label{tab:sampling_comparison}
\renewcommand{\arraystretch}{1.2}
\setlength{\tabcolsep}{3pt}
\begin{tabular}{lcc}
\hline
\textbf{Property}
& \textbf{Deviation-based}
& \textbf{Innovation-based} \\
\hline

Reversal probability $\mathbf{P}(I_{n} = 1)$
& $1/2$
& $1/(n+1)$
\\

Reversal surprisal $\Upsilon_n$
& $1$
& $\log_{2}(n+1)$
\\

Expected no. of reversals $\mathbb{E}[V_{n}]$
& $(n-1)/2$
& $\log(n) +\mathcal{O}(1)$
\\

Expected sampling interval $\mathbb{E}[T_n]$
& $\delta^2$
& $(n+1)\delta^2$
\\

Expected sampling time $\mathbb{E}[S_n]$
& $n\delta^2$
& $n(n+1)\delta^2/2$
\\

Expected no. of samples $\mathbb{E}[N_t]$
& $t/\delta^2 + \mathcal{O}(1)$
& $\frac{1}{\delta}\sqrt{\frac{8t}{\pi}}
  + \mathcal{O}(1)$
\\

Expected sampling rate $R_t$
& $1/ \delta^2 + \mathcal{O}(t^{-1})$
& $\frac{1}{\delta}\sqrt{\frac{8}{\pi t}} + \mathcal{O}(t^{-1})$
\\

Asymptotic sampling rate $R_{\infty}$
& $1/ \delta^2$
& $0$
\\

\hline
\end{tabular}
\end{table}

\section{Main Results on Innovation-Based Sampling of a Wiener Process}\label{sec:sampling-wiener-process}
This section considers innovation-based sampling of a Wiener process. The main results are summarized in Table~\ref{tab:sampling_comparison}. Relevant preliminaries on Wiener processes and first exit times are provided in the Appendix.

\subsection{First-Exit Time Analysis}
Let $(W_{t})_{t \geq 0}$ be a standard Wiener process with respect to the filtration $(\mathcal{F}_t)_{t\geq 0}$, so that $W_{t} \sim \mathcal{N}(0, t)$ and
\begin{equation}
    W_{t} - W_{s} \sim \mathcal{N}(0, t - s), \quad 0 \leq s < t.
\end{equation}

By the strong Markov property (see Appendix~\ref{app:wiener-preliminaries}), for each stopping time $S_{n}$, the shifted process 
\begin{equation}
    \widetilde{W}_t := W_{S_{n} + t} - W_{S_{n}} \label{eq:shifted-process}
\end{equation}
is a standard Wiener process independent of $\mathcal{F}_{S_{n}}$. Therefore, the inter-sampling time $T_{n}$ is the \emph{first exit time} of $\widetilde{W}_t$ starting from zero and exiting from the interval $(-\alpha_{n},\beta_{n})$, i.e.,
\begin{equation}
    T_{n} = \inf\{t \geq 0: \widetilde{W}_{t} \notin (-\alpha_{n}, \beta_{n})\}, \label{eq:exit-time}
\end{equation}
where $\alpha_{0} = \beta_{0} = \delta$, and for $n \geq 1$,
\begin{equation}
    (\alpha_{n}, \beta_{n}) = \begin{cases}
        \left(\delta, (n+1)\delta \right), & D_{n}=-1,\\
        \left((n+1) \delta, \delta \right), & D_{n}=+1.
    \end{cases} \label{eq:interval}
\end{equation}

\begin{remark}
The strong Markov property implies that the two-dimensional state $(n, D_{n})$ is sufficient to characterize the sampler. The history-dependent variable $\Xi_{n}$ is required only to decode the innovation value. 
\end{remark}

In contrast, under deviation-based sampling, each inter-sampling time is a first exit time of $\widetilde{W}_{t}$ from a \emph{fixed} interval $(-\delta, \delta)$. That is, $\alpha_{n} = \beta_{n} = \delta$ for all $n$, and
\begin{equation}
    T^{\mathrm{dev}}_{n} = \inf\{t\geq 0: \widetilde{W}_{t} \notin (-\delta, \delta)\}.
\end{equation}

\begin{figure}[t!]
    \centering
    \subfloat[deviation-based sampling]{
    \includegraphics[width=0.65\linewidth]{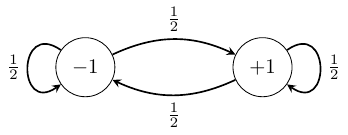}
    \label{fig:binary_chain_deviation}
    }
    \\
    \subfloat[innovation-based sampling]{
    \includegraphics[width=0.7\linewidth]{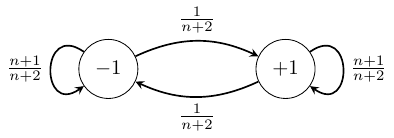}
    \label{fig:binary_chain_innovation}
    }
    \\
    \caption{Binary chain representation of the innovation directions $(D_{n})_{n\geq 1}$.}
    \label{fig:binary_chain}
\end{figure}

\subsection{Binary Chain Representation of Innovation Directions}
This section presents binary chain representations of the direction processes under innovation-based and deviation-based sampling, as illustrated in Fig.~\ref{fig:binary_chain}. Innovation-based sampling induces an increasingly persistent direction process, with a reversal probability that vanishes as $n\to\infty$, whereas deviation-based sampling yields independent, equiprobable directions.

\begin{theorem}\label{lemma:evolution}
The direction process $(D_n)_{n\geq 1}$ forms a time-inhomogeneous binary Markov chain with a transition probability matrix
\begin{equation}
    P_{n} = \begin{bmatrix}
        \frac{n+1}{n+2} & \frac{1}{n+2} \\
        \frac{1}{n+2} & \frac{n+1}{n+2}
    \end{bmatrix},
\end{equation}
where, for $d\in\{-1,+1\}$, $\mathbf{P}(D_{n+1}=d| D_n=d) = \frac{n+1}{n+2}$ is the direction holding probability, and $\mathbf{P}(D_{n+1}=-d| D_n=d) = \frac{1}{n+2}$ is the direction reversal probability.
\end{theorem}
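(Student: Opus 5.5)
The plan is to condition on $\mathcal{F}_{S_n}$ and use the strong Markov property together with the gambler's-ruin exit probability for a Wiener process. By the strong Markov property at the stopping time $S_n$, the shifted process $\widetilde{W}_t$ in \eqref{eq:shifted-process} is a standard Wiener process independent of $\mathcal{F}_{S_n}$. The inter-sampling time $T_n$ is the first exit time of $\widetilde{W}$ from $(-\alpha_n,\beta_n)$ as in \eqref{eq:exit-time}. The endpoints $(\alpha_n,\beta_n)$ are $\mathcal{F}_{S_n}$-measurable and, by \eqref{eq:interval} (a consequence of Lemma~\ref{lemma:spread}), depend on the past only through $(n,D_n)$. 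The direction $D_{n+1}$ is $+1$ exactly when $\widetilde{W}$ exits through $\beta_n$ and $-1$ when it exits through $-\alpha_n$. Hence
\[
\mathbf{P}(D_{n+1}=d \mid \mathcal{F}_{S_n}) = \mathbf{P}(D_{n+1}=d \mid D_n),
\]
and since $D_1,\dots,D_n$ are $\mathcal{F}_{S_n}$-measurable, this gives the Markov property. The transition probabilities depend on $n$, so the chain is time-inhomogeneous.

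To get the transition probabilities, I will use the classical exit result: $T_n<\infty$ a.s., and
\[
\mathbf{P}(\widetilde{W}_{T_n}=\beta_n)=\frac{\alpha_n}{\alpha_n+\beta_n}.
\]
I will justify this with optional stopping applied to the martingale $\widetilde{W}_{t\wedge T_n}$. This martingale is bounded by $\max(\alpha_n,\beta_n)$. Finiteness of $T_n$ follows from, for example, $\mathbb{E}[T_n]=\alpha_n\beta_n$ via the martingale $\widetilde{W}_t^2-t$, or simply from unboundedness of Wiener paths; the Appendix material on first exit times can be cited. Bounded convergence then gives
\[
0=\beta_n p-\alpha_n(1-p),
\]
where $p$ is the upper-exit probability.

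Substituting \eqref{eq:interval} settles both directions. If $D_n=+1$, then $\alpha_n=(n+1)\delta$ and $\beta_n=\delta$. This gives $\mathbf{P}(D_{n+1}=+1\mid D_n=+1)=\frac{n+1}{n+2}$ and a reversal probability of $\frac{1}{n+2}$. The case $D_n=-1$ is symmetric. The base case also needs attention: the chain starts at $n=1$, but I will note that $D_1$ is $\pm1$ with probability $1/2$ each, since $\alpha_0=\beta_0=\delta$.

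The main obstacle is rigor in the Markov step. One must make sure that $D_n$ (equivalently the interval) is $\mathcal{F}_{S_n}$-measurable, so that independence of $\widetilde{W}$ from $\mathcal{F}_{S_n}$ applies. One must also make sure that $S_n<\infty$ a.s. for all $n$, so that the strong Markov property is applicable; this follows inductively from $T_k<\infty$ a.s. The exit-probability computation itself is routine.
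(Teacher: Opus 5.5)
Your proposal is correct and follows essentially the same route as the paper. Both arguments use the strong Markov property at $S_n$ to identify $D_{n+1}$ with the exit side of a fresh Wiener process from $(-\alpha_n,\beta_n)$, apply the exit probability $\alpha_n/(\alpha_n+\beta_n)$, substitute \eqref{eq:interval}, and invoke symmetry for $D_n=-1$; your additional argument that $\mathbf{P}(D_{n+1}=d\mid\mathcal{F}_{S_n})$ depends only on $D_n$ makes explicit the Markov property that the paper leaves implicit in its preceding remark.
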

\begin{IEEEproof}
For the Wiener process $\widetilde{W}_t$ starting from zero, the first exit probabilities from $(-\alpha_{n}, \beta_{n})$ are (see~\eqref{eq:app-exit-probabilities})
\begin{equation*}
    \mathbf{P}(\widetilde{W}_{T_{n}}=\beta_{n}) = 
    \frac{\alpha_{n}}{\alpha_{n}+\beta_{n}}, \,\,\,
    \mathbf{P}(\widetilde{W}_{T_{n}}=-\alpha_{n})
    = \frac{\beta_{n}}{\alpha_{n}+\beta_{n}}.
\end{equation*}
For $n\geq1$, suppose that $D_n=+1$. Then $\alpha_n=(n+1)\delta$ and $\beta_n=\delta$. Hence, the direction holding probability is
\begin{align*}
    \mathbf{P}(D_{n+1}= +1 {\,}| {\,} D_{n} = + 1) 
    &= \mathbf{P}(\widetilde{W}_{T_{n}} = \beta_{n} {\,}| {\,} D_{n} = + 1) \\
    &= \frac{(n+1)\delta}{\delta + (n+1)\delta} = \frac{n+1}{n+2},
\end{align*}
while the direction reversal probability is
\begin{align*}
    \mathbf{P}(D_{n+1}=-1 {\,}| {\,} D_{n} = +1) 
    &= \mathbf{P}(\widetilde{W}_{T_{n}} = -\alpha_{n} {\,}| {\,} D_{n} = + 1) \\
    &= \frac{\delta}{\delta + (n+1)\delta} = \frac{1}{n+2}.
\end{align*}
By symmetry, the same holding and reversal probabilities hold when $D_n=-1$. This completes the proof.
\end{IEEEproof}

% Theorem~\ref{lemma:evolution} shows that innovations become increasingly persistent in direction. 

\subsection{Independence and Limit Theorems for Direction Reversals}
Since the reversal probability 
\begin{equation}
    \mathbf{P}(I_{n} = 1|D_{1}, D_{2}, \ldots, D_{n-1}) = \frac{1}{n+1}
\end{equation}
does not depend on the previous directions, the reversal indicators $(I_{n})_{n\geq 2}$ are \emph{mutually independent}, with
\begin{equation}
    I_{n} \sim \mathsf{Bernoulli}(\frac{1}{n+1}), \quad n \geq 2. \label{eq:reversal-indicator-distribution}
\end{equation}

Under innovation-based sampling, the reversal surprisal is $\log_2(n+1)$, which increases logarithmically with the number of innovations. \emph{Since reversals become increasingly rare, their occurrence becomes increasingly informative.}

\begin{remark}
Let $(\xi_{n})_{n\geq 1}$ denote the upper-record indicators of an i.i.d. sequence with a continuous distribution. A classical result of R\'enyi~\cite{renyi1962theorie} states that these indicators are mutually independent and satisfy $\xi_{n}\sim\mathsf{Bernoulli}(1/n)$. Thus, the reversal indicators form a shifted sequence of classical record indicators, and moments and limit theorems for the reversal count follow directly from record theory~\cite{arnold2011records}. The main consequences are presented below.
\end{remark}

\begin{lemma}
The expected number of reversals is given by
\begin{equation}
    \mathbb{E}[V_{n}]
    = 
    H_{n+1}-\frac{3}{2}
    =
    \log(n)+\mathcal{O}(1),
    \label{eq:expected_reversals}
\end{equation}
where $H_{n} = \sum_{k=1}^{n}1/k$ denotes the $n$th harmonic number. Moreover, as $n\to \infty$,
\begin{equation}
    \frac{V_{n}}{\log(n)} \xrightarrow{a.s.}
    1, \qquad \frac{
        V_{n}-\mathbb{E}[V_{n}]
    }{
        \sqrt{\operatorname{Var}(V_{n})}
    }
    \xrightarrow{d}
    \mathcal{N}(0,1).
    \label{eq:reversal-slln}
\end{equation}
where $\xrightarrow{a.s.}$ denotes almost sure convergence, and $\xrightarrow{d}$ denotes convergence in distribution.
\end{lemma}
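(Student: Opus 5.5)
The plan rests on the independence structure established just above: the reversal indicators $(I_k)_{k\geq 2}$ are mutually independent with $I_k\sim\mathsf{Bernoulli}(\frac{1}{k+1})$ by \eqref{eq:reversal-indicator-distribution}. Everything then follows from sums of independent non-identically distributed Bernoulli variables.

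\textbf{Expectation.} By linearity,
\begin{equation*}
\mathbb{E}[V_n]=\sum_{k=2}^{n}\frac{1}{k+1}=\sum_{j=3}^{n+1}\frac1j=H_{n+1}-1-\frac12=H_{n+1}-\frac32 .
\end{equation*}
Combining this with $H_{n+1}=\log(n+1)+\gamma+o(1)=\log(n)+\mathcal{O}(1)$ gives \eqref{eq:expected_reversals}.

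\textbf{Variance.} By independence,
\begin{equation*}
\operatorname{Var}(V_n)=\sum_{k=2}^{n}\frac{1}{k+1}\Bigl(1-\frac{1}{k+1}\Bigr)=H_{n+1}-\frac32-\sum_{j=3}^{n+1}\frac{1}{j^2}.
\end{equation*}
This equals $\log(n)+\mathcal{O}(1)$, so in particular $\operatorname{Var}(V_n)\to\infty$.

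\textbf{Central limit theorem.} I would apply the Lindeberg--Feller theorem, or more simply Lyapunov's condition with third moments. The centered summands $I_k-\mathbb{E}[I_k]$ are bounded by $1$ in absolute value, so
\begin{equation*}
\sum_{k}\mathbb{E}\bigl|I_k-\mathbb{E}[I_k]\bigr|^3\le\sum_k\operatorname{Var}(I_k)=\operatorname{Var}(V_n).
\end{equation*}
The Lyapunov ratio is therefore at most $\operatorname{Var}(V_n)^{-1/2}\to0$, which yields the asymptotic normality in \eqref{eq:reversal-slln}.

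\textbf{Strong law.} This is the only step needing some care, since Chebyshev alone gives only convergence in probability. I would use Kolmogorov's convergence criterion together with Kronecker's lemma, with normalizing sequence $b_k=\log(k+1)$, which increases to $\infty$. The centered, normalized series $\sum_k (I_k-\frac{1}{k+1})/\log(k+1)$ has independent terms whose variances sum to
\begin{equation*}
\sum_k\frac{1}{(k+1)\log^2(k+1)}<\infty,
\end{equation*}
so it converges almost surely. Kronecker's lemma then gives $(V_n-\mathbb{E}[V_n])/\log(n+1)\to0$ almost surely. Since $\mathbb{E}[V_n]/\log(n)\to1$, we conclude $V_n/\log(n)\to1$ almost surely.

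An alternative route for the strong law is to invoke directly the classical record-count strong law of R\'enyi~\cite{renyi1962theorie,arnold2011records}. One notes that $V_n$ differs from the record count $\sum_{k=1}^{n+1}\xi_k$ only by the first two terms, which are bounded and so do not affect the limit. The Kronecker argument above is self-contained, however, and I would present that one.
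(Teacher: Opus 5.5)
Your proof is correct. It is more self-contained than the paper, which gives no separate argument: in the remark preceding the lemma it notes that the independent indicators $I_k\sim\mathsf{Bernoulli}(\frac{1}{k+1})$ have the same law as R\'enyi's record indicators $\xi_{k+1}$. It then imports the mean, strong law, and CLT for the reversal count directly from record theory. You instead derive everything from the independence structure itself: linearity for the mean, Lyapunov's condition for the CLT, and Kolmogorov's convergence criterion plus Kronecker's lemma with $b_k=\log(k+1)$ for the strong law. For the CLT, the bound $\sum_k\mathbb{E}|I_k-\mathbb{E}[I_k]|^3\le\operatorname{Var}(V_n)$ is the right one. Each step checks out, including $\sum_k 1/((k+1)\log^2(k+1))<\infty$ and the final replacement of $\log(n+1)$ by $\log(n)$. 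Your route makes the lemma verifiable without the reference; it is essentially the textbook argument behind the record-count limit theorems. The paper's route is shorter and gives free access to finer record-theoretic facts, such as the law of the indices at which reversals occur. One small correction to your alternative remark: $V_n$ does not differ pathwise from $\sum_{k=1}^{n+1}\xi_k$ by two terms, because the $\xi_k$ live on a different probability space. The correct statement is equality in law of the whole sequence, $(V_n)_{n\geq2}\overset{d}{=}(\sum_{j=3}^{n+1}\xi_j)_{n\geq2}$, which is exactly what is needed for the almost-sure statement to transfer.
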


Thus, the number of reversals grows logarithmically in the number of innovations almost surely, while its standardized fluctuations are asymptotically Gaussian.

% \begin{remark}
% Under innovation-based sampling, the reversal surprisal is $\log_2(n+1)$, which increases logarithmically with the number of innovations. \emph{Since reversals become increasingly rare, their occurrence becomes increasingly informative.}
% \end{remark}

In contrast, for deviation-based sampling, the successive increments are $W_{S_{n+1}} - W_{S_{n}} \in \{-\delta, \delta\}$, with independent equiprobable signs by the strong Markov property. Hence, the reversal probability is $\mathbf{P}(I_{n} = 1) = 1/2$, and each reversal has a surprisal of one bit. Moreover, the expected number of reversals grows linearly as
\begin{equation}
    \mathbb{E}[V^{\mathrm{dev}}_n] = \frac{n-1}{2}.
\end{equation}
Since reversals occur frequently, they are less informative.

\subsection{Distributions of the Inter-Sampling and Sampling Times}
We next characterize the distributions of the inter-sampling times $(T_{n})_{n\geq 0}$ and the sampling times $(S_{n})_{n\geq 1}$ and derive their moments. We show that the innovation-based sampler becomes \emph{less active} with the number of innovations.

\begin{theorem}\label{thm:sampling_times}
For innovation-based sampling of a standard Wiener process,
the distribution of the $n$th inter-sampling time $T_n$ is characterized by
the Laplace transform
\begin{equation}
    \mathcal{L}_{T_n}(s) 
    =
    \frac{
    \cosh\left(\frac{n\delta}{2}\sqrt{2s}\right)
    }{
    \cosh\left(\frac{(n+2)\delta}{2}\sqrt{2s}\right)
    }, \quad s\geq 0. \label{eq:LT_Tn}
\end{equation}
Moreover, $(T_n)_{n\geq 0}$ are mutually independent, with
\begin{equation}
\label{eq:moments_Tn}
    \mathbb{E}[T_n]
    =
    (n+1)\delta^2,\,\,
    \operatorname{Var}(T_n)
    =
    \frac{(n+1)\left((n+1)^2+1\right)}{3}\delta^4.
\end{equation}
The $n$th sampling time $S_n$ is characterized by
\begin{equation}
\label{eq:LT_Sn}
     \mathcal{L}_{S_n}(s) 
    =
    \frac{
    \cosh\left(\frac{\delta}{2}\sqrt{2s}\right)
    }{
    \cosh\left(\frac{n\delta}{2}\sqrt{2s}\right)
    \cosh\left(\frac{(n+1)\delta}{2}\sqrt{2s}\right)}, \quad s\geq 0.
\end{equation}
and the mean and variance are 
\begin{equation}
\label{eq:moments_Sn}
    \mathbb{E}[S_n]
    =
    \frac{n(n+1)}{2}\delta^2,
    \operatorname{Var}(S_n)
    =
    \frac{n(n+1)(n(n+1)+2)}{12}\delta^4.
\end{equation}
\end{theorem}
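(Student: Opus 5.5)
The plan is to reduce everything to the classical two-sided exit problem for a Brownian motion and then use the strong Markov property. First, I will compute the Laplace transform of $T_n$ conditionally on $\mathcal{F}_{S_n}$. By \eqref{eq:exit-time}, $T_n$ is the first exit time of the shifted process $\widetilde{W}_t$ from $(-\alpha_n,\beta_n)$. I will use the standard formula for a standard Wiener process started at $0$, which can be recovered from the exponential martingale $e^{\lambda\widetilde W_t-\lambda^2t/2}$ with $\lambda=\pm\sqrt{2s}$ and optional stopping:
\begin{equation*}
\mathbb{E}\big[e^{-sT}\big]=\frac{\cosh\!\big(\frac{\beta-\alpha}{2}\sqrt{2s}\big)}{\cosh\!\big(\frac{\alpha+\beta}{2}\sqrt{2s}\big)},
\qquad T=\inf\{t\ge 0:\widetilde W_t\notin(-\alpha,\beta)\}.
\end{equation*}
Given $\mathcal{F}_{S_n}$, the pair $(\alpha_n,\beta_n)$ is fixed by \eqref{eq:interval}. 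In both cases $D_n=\pm1$ we have $\alpha_n+\beta_n=(n+2)\delta$ and $|\beta_n-\alpha_n|=n\delta$. For $n=0$ we have $\alpha_0=\beta_0=\delta$, which is also consistent with these values. Since $\cosh$ is even, the conditional Laplace transform equals \eqref{eq:LT_Tn} and does not depend on $D_n$.

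Next comes independence. The shifted process $\widetilde W$ after $S_n$ is independent of $\mathcal{F}_{S_n}$, and the conditional law of $T_n$ given $\mathcal{F}_{S_n}$ is the deterministic law found above. Hence $T_n$ is independent of $\mathcal{F}_{S_n}$. This $\sigma$-field contains $T_0,\dots,T_{n-1}$, since these are $\mathcal{F}_{S_n}$-measurable. Induction on $n$ then gives mutual independence of $(T_n)_{n\ge0}$. I expect this measure-theoretic step to be the main point needing care: the symmetry in $D_n$ is exactly what makes the conditional law deterministic, so the argument goes through.

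For the moments, I will write $x=\sqrt{2s}$, $c_1=n\delta/2$, $c_2=(n+2)\delta/2$. Expanding $\log\cosh y=y^2/2-y^4/12+\cdots$ gives
\begin{equation*}
\log\mathcal{L}_{T_n}(s)=-(c_2^2-c_1^2)s+\tfrac{1}{3}(c_2^4-c_1^4)s^2+\cdots .
\end{equation*}
Reading off cumulants, the mean is $c_2^2-c_1^2=(n+1)\delta^2$. The variance is
\begin{equation*}
\tfrac{2}{3}(c_2^2-c_1^2)(c_2^2+c_1^2)=\tfrac{(n+1)((n+1)^2+1)}{3}\delta^4,
\end{equation*}
which matches \eqref{eq:moments_Tn}.

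Finally, $S_n=\sum_{k=0}^{n-1}T_k$ with independent summands, so $\mathcal{L}_{S_n}$ is the product of the transforms \eqref{eq:LT_Tn} for $k=0,\dots,n-1$. This product telescopes: the numerators $\cosh(k\delta x/2)$ for $k=0,\dots,n-1$ cancel against the denominators $\cosh((k+2)\delta x/2)$ for $k=0,\dots,n-3$. What remains is $\cosh(0)\cosh(\delta x/2)$ over $\cosh(n\delta x/2)\cosh((n+1)\delta x/2)$, which is \eqref{eq:LT_Sn}. The mean and variance of $S_n$ follow by summing: $\sum_{k=0}^{n-1}(k+1)\delta^2=\frac{n(n+1)}{2}\delta^2$, and $\sum_{k=1}^{n}\frac{k(k^2+1)}{3}\delta^4$. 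The latter equals $\frac{1}{3}\big(\frac{n^2(n+1)^2}{4}+\frac{n(n+1)}{2}\big)\delta^4=\frac{n(n+1)(n(n+1)+2)}{12}\delta^4$, as in \eqref{eq:moments_Sn}.
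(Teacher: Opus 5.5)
Your proposal is correct and follows the paper's proof essentially step for step. You take the conditional exit-time Laplace transform and use the evenness of $\cosh$ to remove the dependence on $D_n$, deduce independence from the resulting deterministic conditional law, and obtain $\mathcal{L}_{S_n}$ by the telescoping product. The differences from the paper are cosmetic: you read off the mean and variance as cumulants from the expansion of $\log\cosh$ rather than Taylor-expanding the ratio directly, and you sketch the exponential-martingale derivation of the exit transform that the paper simply cites. Your description of which numerators cancel in the telescoping product is slightly loose, since only $k=2,\dots,n-1$ cancel, but the remainder you state is correct.
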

\begin{IEEEproof}
Recall that, conditioned on $\mathcal{F}_{S_n}$, $T_n$ is the first exit time of the shifted Wiener process $\widetilde{W}_t$ from $(-\alpha_n,\beta_n)$. By~\eqref{eq:app-exit-laplace}, the conditional Laplace transform of $T_n$ is
\begin{equation*}
    \mathcal{L}_{T_n}(s) 
    :=  \mathbb{E}[e^{-s T_{n}}] 
    = \frac{\cosh\left(\frac{\alpha_{n} - \beta_{n}}{2}\sqrt{2s}\right)}{\cosh\left(\frac{\alpha_{n} + \beta_{n}}{2}\sqrt{2s} \right)}, \quad s \geq 0.
\end{equation*}
Since $\{\alpha_n,\beta_n\} = \{\delta,(n+1)\delta\}$, we have $\alpha_n+\beta_n=(n+2)\delta$ and $|\alpha_n-\beta_n|=n\delta$. Because $\cosh(\cdot)$ is even, the above conditional transform reduces to
\begin{equation*}
    \mathcal{L}_{T_n}(s)
    =
    \frac{
    \cosh\left(\frac{n\delta}{2}\sqrt{2s}\right)
    }{
    \cosh\left(\frac{(n+2)\delta}{2}\sqrt{2s}\right)
    } = \frac{\cosh(n\theta)}{\cosh((n+2)\theta)},
\end{equation*}
where $\theta := \frac{\delta}{2}\sqrt{2s}$. This yields~\eqref{eq:LT_Tn}. Moreover, the right-hand side (RHS) does not depend on
$\mathcal{F}_{S_n}$, and hence
\begin{equation*}
    \mathcal{L}_{T_n}(s) 
    = \mathbb{E}[e^{-s T_{n}}]
    = \mathbb{E}[e^{-s T_{n}} {\,}|{\,} \mathcal{F}_{S_n}].
\end{equation*}
Since 
\begin{equation*}
    \sigma(T_{0}, T_{1}, \ldots, T_{n-1}) \subseteq \mathcal{F}_{S_{n}},
\end{equation*}
$T_n$ is independent of $(T_0,T_1,\ldots,T_{n-1})$. This implies that $(T_n)_{n\geq0}$ are mutually independent.

We next derive the moments of $T_n$. The $k$th derivative of $\mathcal{L}_{T_{n}}(s)$ is given by
\begin{equation*}
    \mathcal{L}_{T_n}^{(k)}(s)
    =
    (-1)^k
    \mathbb{E}[T_n^k e^{-sT_n}],
\end{equation*}
and, in particular,
\begin{equation*}
    \mathcal{L}_{T_n}'(0)
    =
    -\mathbb{E}[T_n],
    \quad
    \mathcal{L}_{T_n}''(0)
    =
    \mathbb{E}[T_n^2].
\end{equation*}
Using the Taylor expansion $\cosh(\sqrt{2s}\,x) = 1+s x^2+\frac{s^2x^4}{6}+\mathcal{O}(s^3)$,
we obtain
\begin{equation*}
    \mathcal{L}_{T_n}(s)
    =
    1-\alpha_n\beta_n s
    +
    \frac{
    \alpha_n\beta_n
    \left(
    \alpha_n^2+3\alpha_n\beta_n+\beta_n^2
    \right)
    }{6}s^2
    +\mathcal{O}(s^3).
\end{equation*}
Therefore,
\begin{equation*}
    \mathbb{E}[T_n] =
    \alpha_n \beta_n, \quad \mathbb{E}[T_n^2] =
    \frac{\alpha_n\beta_n}{3}
    \left(
    \alpha_n^2+3\alpha_n \beta_n+\beta_n^2
    \right).
\end{equation*}
Substituting
$\{\alpha_n,\beta_n\}=\{\delta,(n+1)\delta\}$ gives
\begin{equation*}
    \mathbb{E}[T_n] = (n+1)\delta^2, \quad
    \mathbb{E}[T_n^2] =
    \frac{(n+1)(n^2+5n+5)}{3}\delta^4.
\end{equation*}
Hence,
\begin{equation*}
    \operatorname{Var}(T_n)
    =
    \mathbb{E}[T_n^2]-\mathbb{E}[T_n]^2
    =
    \frac{(n+1)\left((n+1)^2+1\right)}{3}\delta^4,
\end{equation*}
which proves~\eqref{eq:moments_Tn}.

We now turn to the sampling time. Since $S_n=\sum_{k=0}^{n-1}T_k$ and the inter-sampling times are mutually independent, its Laplace transform factorizes as
\begin{align*}
    \mathcal{L}_{S_n}(s)
    &=
    \prod_{k=0}^{n-1}\mathcal{L}_{T_k}(s)\\
    &=
    \prod_{k=0}^{n-1}
    \frac{
    \cosh\left(k\theta\right)
    }{\cosh((k+2)\theta)
    }\\
    &=
    \frac{\cosh(0)\cosh(\theta)
    \cosh(2\theta)\cdots \cosh((n-1)\theta)}{\cosh(2\theta)\cosh(3\theta)\cdots \cosh(n\theta)\cosh((n+1)\theta)} \\
    &= 
    \frac{\cosh(\theta)}{\cosh(n\theta)\cosh((n+1)\theta)
    },
\end{align*}
which proves~\eqref{eq:LT_Sn}. Finally,
\begin{align*}
    \mathbb{E}[S_n]
    &=
    \sum_{k=0}^{n-1}\mathbb{E}[T_k]
    =
    \frac{n(n+1)}{2}\delta^2,
\end{align*}
and, by independence,
\begin{align*}
    \operatorname{Var}(S_n)
    &=
    \sum_{k=0}^{n-1}\mathrm{Var}(T_k)
    =
    \frac{\delta^4}{3}
    \sum_{j=1}^{n}(j^3+j)\\
    &=
    \frac{n(n+1)\left(n(n+1)+2\right)}{12}\delta^4.
\end{align*}
This completes the proof.
\end{IEEEproof}

The above result reveals a key temporal property of the innovation-based sampler. The expected inter-sampling time grows linearly with the number of innovations, while the expected sampling time grows quadratically. Thus, innovations become progressively sparser as the range of previously communicated values expands. 

By contrast, under deviation-based sampling, every inter-sampling time is a first exit time of $\widetilde{W}_t$ from the fixed interval $(-\delta,\delta)$. As a result, the inter-sampling times $(T_n^{\mathrm{dev}})_{n\geq0}$ are i.i.d., with Laplace transform
\begin{equation}
    \mathbb{E}[e^{-sT_n^{\mathrm{dev}}}]
    =
    \frac{1}{\cosh(\delta\sqrt{2s})},
    \quad s\geq 0,
\end{equation}
and mean
\begin{equation}
    \mathbb{E}[T_n^{\mathrm{dev}}] =
    \delta^2, \quad n=0,1,2,\ldots.
\end{equation}
Thus, the sampling activity does not decrease as more samples are generated.
We note that these results coincide with the innovation-based results at $n=0$.

\subsection{Expected and Asymptotic Sampling Rates}
We next characterize how frequently innovations are generated over time and show that the innovation-based sampler becomes \emph{asymptotically silent}.

\begin{theorem}
Under innovation-based sampling of a Wiener process, the expected number of innovations is given by
\begin{equation}
    \mathbb{E}[N_t]
    =
    4\sum_{k=1}^{\infty}
    Q\left(\frac{k\delta}{\sqrt{t}}\right),
    \label{eq:EN_exact}
\end{equation}
where $Q(x)$ is the Gaussian tail function. Moreover,
\begin{equation}
    \mathbb{E}[N_{t}] = \frac{1}{\delta}\sqrt{\frac{8t}{\pi}} + \mathcal{O}(1), \quad t \to \infty.
\end{equation}
Consequently, the expected sampling rate satisfies
\begin{equation}
    R_t = \frac{1}{\delta}\sqrt{\frac{8}{\pi t}} + \mathcal{O}(t^{-1}),
    \quad t \to\infty,
\end{equation}
and the asymptotic sampling rate is $R_{\infty} = 0$.
\end{theorem}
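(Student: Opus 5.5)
The plan is to avoid summing $\mathbf{P}(S_n\le t)$ through the Laplace transforms of Theorem~\ref{thm:sampling_times}. Instead I will identify $N_t$ pathwise with a function of the running extrema of Definition~\ref{def:extrema}, and then use the reflection principle.

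\emph{Step 1 (pathwise identification).} I will show that the sampling times are exactly the first hitting times of the nonzero lattice levels $k\delta$, $k\in\mathbb{Z}\setminus\{0\}$. By Lemma~\ref{lemma:spread}, after the $n$th sample the recorded extrema $M_n, m_n$ are lattice points. By the sampling rule~\eqref{eq:sampling rule}, the path between samples stays strictly inside $(m_n-\delta, M_n+\delta)$, so it cannot reach any lattice level not yet visited. The next sample occurs exactly when the path first reaches $M_n+\delta$ or $m_n-\delta$, which are the nearest unvisited levels. Because $W$ is continuous, it hits $M_n+\delta$ before $M_n+2\delta$, and similarly below, so every level is first hit at a sampling time and vice versa. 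By induction on $n$ this gives
\begin{equation*}
M_{N_t}=\delta\lfloor \overline{M}_t/\delta\rfloor,\qquad m_{N_t}=-\delta\lfloor -\underline{m}_t/\delta\rfloor .
\end{equation*}
Combining with the spread identity $M_n-m_n=n\delta$ then yields
\begin{equation*}
N_t=\left\lfloor \frac{\overline{M}_t}{\delta}\right\rfloor+\left\lfloor \frac{-\underline{m}_t}{\delta}\right\rfloor .
\end{equation*}
This bookkeeping is the step I expect to be the main obstacle. One must be careful that the range of the path itself is not $n\delta$: the path may wander almost $\delta$ beyond the recorded extrema without sampling. Only the floored extrema match the recorded ones.

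\emph{Step 2 (exact formula).} Taking expectations and writing $\mathbb{E}\lfloor X/\delta\rfloor=\sum_{k\ge1}\mathbf{P}(X\ge k\delta)$ for $X\ge 0$ gives two sums, one for the maximum and one for the minimum. The reflection principle gives $\mathbf{P}(\overline{M}_t\ge a)=2\mathbf{P}(W_t\ge a)=2Q(a/\sqrt{t})$. By symmetry, $-\underline{m}_t$ has the same law as $\overline{M}_t$. Hence
\begin{equation*}
\mathbb{E}[N_t]=2\sum_{k\ge1}2Q(k\delta/\sqrt t)=4\sum_{k\ge1}Q(k\delta/\sqrt t),
\end{equation*}
which is~\eqref{eq:EN_exact}.

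\emph{Step 3 (asymptotics).} Set $h=\delta/\sqrt t$. Since $Q$ is decreasing, the standard integral comparison for a monotone function gives
\begin{equation*}
\frac1h\int_0^\infty Q(x)\,dx-Q(0)\le\sum_{k\ge1}Q(kh)\le \frac1h\int_0^\infty Q(x)\,dx .
\end{equation*}
Here $\int_0^\infty Q(x)\,dx=\mathbb{E}[Z^+]=1/\sqrt{2\pi}$ for $Z\sim\mathcal{N}(0,1)$, and $Q(0)=1/2$. Therefore
\begin{equation*}
\mathbb{E}[N_t]=\frac{4\sqrt t}{\delta\sqrt{2\pi}}+\mathcal{O}(1)=\frac1\delta\sqrt{\frac{8t}{\pi}}+\mathcal{O}(1),
\end{equation*}
with the $\mathcal{O}(1)$ term bounded in absolute value by $2$. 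Dividing by $t$ gives $R_t=\frac1\delta\sqrt{8/(\pi t)}+\mathcal{O}(t^{-1})$, and letting $t\to\infty$ gives $R_\infty=0$.
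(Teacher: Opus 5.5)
Your proof is correct. It follows the paper's route for the identification $N_t=\lfloor \overline{M}_t/\delta\rfloor+\lfloor -\underline{m}_t/\delta\rfloor$ and for the exact formula via $\mathbb{E}\lfloor X\rfloor=\sum_{k\ge1}\mathbf{P}(X\ge k)$ and the reflection principle. The paper asserts that identification directly from Lemma~\ref{lemma:spread}, so your induction on first hits of the nonzero lattice levels is a more careful justification of the same step.

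The only difference is the asymptotic step. The paper sandwiches $N_t$ pathwise using $x-1<\lfloor x\rfloor\le x$ and then takes expectations with $\mathbb{E}[\overline{M}_t]=\mathbb{E}[-\underline{m}_t]=\sqrt{2t/\pi}$. You instead compare the exact series with $\frac{1}{h}\int_0^\infty Q(x)\,dx$. Since $\mathbb{E}[\overline{M}_t]=2\sqrt{t}\int_0^\infty Q(x)\,dx$, this is the same estimate read analytically rather than probabilistically. Both versions give $-2\le \mathbb{E}[N_t]-\frac{1}{\delta}\sqrt{8t/\pi}\le 0$.
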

\begin{IEEEproof}
Recall the running extrema $\overline{M}_t$ and $\underline{m}_t$ in
Definition~\ref{def:extrema}. By Lemma~\ref{lemma:spread}, each increase of
$\overline{M}_t$ by $\delta$ generates one upper innovation, while each
decrease of $\underline{m}_t$ by $\delta$ generates one lower innovation.
Hence, the numbers of upper and lower innovations generated by time $t$ are
\begin{equation*}
    N_t^{+}
    =
    \left\lfloor\frac{\overline{M}_t}{\delta}\right\rfloor,
    \qquad
    N_t^{-}
    =
    \left\lfloor\frac{-\underline{m}_t}{\delta}\right\rfloor,
\end{equation*}
respectively. Since every innovation is either upper or lower, 
\begin{equation*}
    N_t = N_t^{+} + N_t^{-}.
\end{equation*}

We first derive the exact expectation. For any nonnegative random variable
$X$, $\mathbb{E}[\lfloor X\rfloor] = \sum_{k=1}^{\infty}\mathbf{P}(X\geq k)$. Therefore,
\begin{equation*}
    \mathbb{E}[N_t] = \sum_{k=1}^{\infty}
    \mathbf{P}(\overline{M}_t\geq k\delta) +
    \sum_{k=1}^{\infty}
    \mathbf{P}(-\underline{m}_t\geq k\delta).
\end{equation*}
By symmetry of the Wiener process,
\begin{equation*}
    \mathbf{P}(-\underline{m}_t\geq x)
    =
    \mathbf{P}(\overline{M}_t\geq x).
\end{equation*}
Moreover, applying the reflection principle
(see Lemma~\ref{lem:reflection-principle} in the Appendix), we obtain
\begin{equation*}
    \mathbf{P}(\overline{M}_t\geq x)
    =
    2\mathbf{P}(W_t\geq x)
    =
    2Q\left(\frac{x}{\sqrt{t}}\right),
    \quad x\geq0.
\end{equation*}
Substituting $x=k\delta$ yields
\begin{equation*}
    \mathbb{E}[N_t]
    =
    4\sum_{k=1}^{\infty}
    Q\left(\frac{k\delta}{\sqrt{t}}\right).
\end{equation*}
which proves~\eqref{eq:EN_exact}.

We next derive the asymptotic growth of $\mathbb{E}[N_t]$. Using $x-1<\lfloor x\rfloor\leq x$ gives
\begin{equation*}
    \frac{\overline{M}_t-\underline{m}_t}{\delta}-2
    <
    N_t
    \leq
    \frac{\overline{M}_t-\underline{m}_t}{\delta}.
\end{equation*}
Taking expectations yields
\begin{equation*}
    \frac{
    \mathbb{E}[\overline{M}_t]
    +
    \mathbb{E}[-\underline{m}_t]
    }{\delta}
    -2
    <
    \mathbb{E}[N_t]
    \leq
    \frac{
    \mathbb{E}[\overline{M}_t]
    +
    \mathbb{E}[-\underline{m}_t]
    }{\delta}.
\end{equation*}
By Lemma~\ref{lem:reflection-principle}, we have
\begin{equation*}
    \mathbb{E}[\overline{M}_t]
    =
    \mathbb{E}[-\underline{m}_t]
    =
    \sqrt{\frac{2t}{\pi}}.
\end{equation*}
Consequently,
\begin{equation*}
    \frac{1}{\delta}\sqrt{\frac{8t}{\pi}}-2
    <
    \mathbb{E}[N_t]
    \leq
    \frac{1}{\delta}\sqrt{\frac{8t}{\pi}},
\end{equation*}
which implies
\begin{equation*}
    \mathbb{E}[N_t]
    =
    \frac{1}{\delta}\sqrt{\frac{8t}{\pi}}
    +
    \mathcal{O}(1).
\end{equation*}
Finally, dividing by $t$ gives
\begin{equation*}
    R_t = \frac{\mathbb{E}[N_t]}{t}
    =
    \frac{1}{\delta}\sqrt{\frac{8}{\pi t}}
    +
    \mathcal{O}(t^{-1}),
\end{equation*}
and therefore
\begin{equation*}
    R_{\infty} = \lim_{t\to\infty}R_t = 0.
\end{equation*}
This completes the proof.
\end{IEEEproof}

\begin{figure}[t!]
\centering
\includegraphics[width=\linewidth]{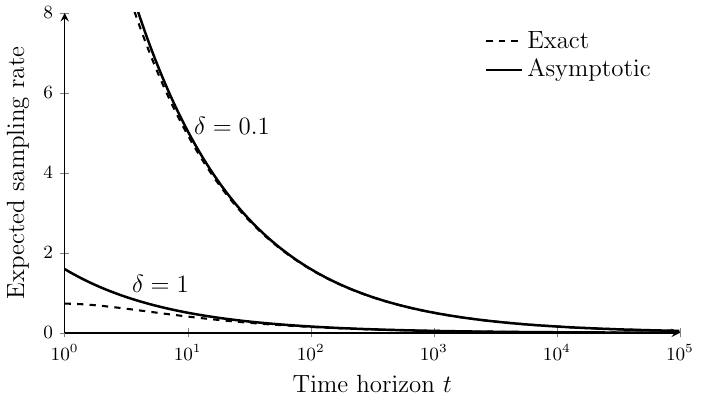}
\caption{Asymptotic and expected sampling rates.}
\label{fig:sampling_rate}
\end{figure}

The theorem shows that the expected number of innovations grows on the order of $\sqrt{t}$, while the expected sampling rate decays on the order of $t^{-1/2}$ and vanishes asymptotically, as illustrated in Fig.~\ref{fig:sampling_rate}. Under deviation-based sampling, however, the asymptotic sampling rate is the positive constant $1/\delta^{2}$, which can become excessively large when $\delta$ is small.

\begin{table*}[t!]
\centering
\caption{Comparison of the Proposed Estimators Under Innovation-Based Sampling}
\label{tab:estimator_comparison}
\renewcommand{\arraystretch}{1.5}
\setlength{\tabcolsep}{3pt}
\begin{tabular}{lccccc}
\hline
\textbf{Property}
& \textbf{MMSE}
& \textbf{EXP}
& \textbf{OAC}
& \textbf{LSI}
& \textbf{ZOH}
\\
\hline

Estimation rule
& $W_{S_n} \!\!+\! f_n(U_t)$
& $W_{S_n} \!\!+\! \frac{\beta_n-\alpha_n}{2}(1 \!-\! e^{-r_nU_t})$
& $W_{S_n} \!\!+\! c_n^* \!+\! b_n^*U_t$
& $W_{S_n} \!\!+\! \frac{\beta_n-\alpha_n}{2}$
& $W_{S_n}$
\\

Silence-aware
& Yes
& Yes
& Yes
& Yes
& No
\\

AoI-aware
& Yes
& Yes
& Yes
& No
& No
\\

Average MSE
& --
& $\frac{h(\rho)}{2}n^2\delta^2 \!+\! \mathcal{O}(n\delta^2)$
& $\frac{3}{140}n^2\delta^2 \!+\! \mathcal{O}(n\delta^2)$
& $\frac{n^2+n+2}{24}\delta^2$
& $\frac{3n^2-n+4}{36}\delta^2$
\\

Asymptotic gain
& $82.14\%$
& $81.85\%$
& $74.29\%$
& $50\%$
& $0$
\\
\hline
\end{tabular}

\smallskip
\begin{minipage}{0.75\textwidth}
\scriptsize
For EXP, $r_n=\rho /(\alpha_n + \beta_n)^2$ with $\rho^* = 16.82$ and $h(\rho)$ given by~\eqref{eq:hrho}. For OAC, $(c_n^*,b_n^*)$ are given
in~\eqref{eq:oac-coeffs}.
\end{minipage}
\end{table*}

\section{Estimators Under Innovation-Based Sampling}\label{sec:estimation-wiener-process}
This section studies estimation under innovation-based sampling. Section~\ref{sec:mmse} derives the MMSE estimator by exploiting the information conveyed by the absence of new innovations. Since its correction term involves an infinite series, Section~\ref{sec:affine-age} introduces a tractable class of affine-age estimators to approximate the MMSE estimator and characterizes their performance. This class includes the ZOH, long-silence (LSI), and optimal affine-correction (OAC) estimators. Finally, Section~\ref{sec:exp-age} develops a near-optimal exponential-age (EXP) estimator whose correction approaches the MMSE long-silence limit. The main results are summarized in Table~\ref{tab:estimator_comparison}.

\subsection{MMSE Estimator and Limiting Behavior}\label{sec:mmse}
The following theorem characterizes the MMSE estimator and reveals a key distinction between innovation-based and deviation-based sampling.
\begin{itemize}
    \item Under innovation-based sampling, \emph{silence is informative, and the MMSE estimate evolves with the AoI}. Thus, information is conveyed through the timing of innovations.\footnote{This perspective is related to the concept of a timing channel~\cite{anantharam1996bits}, in which information is conveyed through timing rather than solely through content.}
    \item Under deviation-based sampling, however, the MMSE estimator reduces to the silence-ignorant ZOH estimator. 
\end{itemize}

\begin{theorem}\label{thm:mmse}
The MMSE estimator under innovation-based sampling is 
\begin{equation}
    \hat{W}_{t} = W_{S_{n}} + f_{n}(U_{t}), \quad t \in [S_{n}, S_{n+1}), \label{eq:exact-mmse-estimator}
\end{equation}
where $f_{n}(u)$ is an AoI-aware correction term given by
\begin{equation}
    f_{n}(u) = \frac{\int_{-\alpha_{n}}^{\beta_{n}} x {\,} p_{n}(u, x)\,dx}{\int_{-\alpha_{n}}^{\beta_{n}} p_{n}(u, x)\,dx}. \label{eq:mmse-correction}
\end{equation}
Here, $p_{n}(u, x)$ is the transition density of a standard Wiener process starting from zero and killed upon exiting $(-\alpha_{n}, \beta_{n})$, given by
\begin{align}
    p_{n}(u, x) &= \frac{2}{\alpha_{n} + \beta_{n}}\sum_{k=1}^{\infty} \sin\left(\frac{k\pi \alpha_{n}}{\alpha_{n} + \beta_{n}}\right) \sin\left(\frac{k\pi (x+\alpha_{n})}{\alpha_{n} + \beta_{n}}\right)\notag\\
    & \qquad \qquad \qquad \times \exp\left(-\frac{k^{2}\pi^{2}u}{2(\alpha_{n} + \beta_{n})^{2}}\right). \label{eq:density}
\end{align}
Moreover, under deviation-based sampling, the MMSE estimator reduces to the ZOH estimator
\begin{equation}
    \hat{W}_{t} = W_{S_{n}}, \quad t \in [S_{n}, S_{n+1}).
\end{equation}
\end{theorem}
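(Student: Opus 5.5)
The plan is to reduce the conditional expectation in \eqref{eq:mmse-estimator} to a one-dimensional computation for the shifted process $\widetilde{W}$ in \eqref{eq:shifted-process}. Fix $n$ and $t\in[S_n,S_{n+1})$, and write $u=U_t=t-S_n$. On this event $W_t=W_{S_n}+\widetilde{W}_u$, and $W_{S_n}$ is $\mathcal{H}_n$-measurable, so
\begin{equation*}
\hat{W}_t = W_{S_n} + \mathbb{E}[\widetilde{W}_u \,|\, \mathcal{H}_n, T_n>u].
\end{equation*}

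\textbf{Step 1: reduce the conditioning.} By the strong Markov property, $\widetilde{W}$ is a standard Wiener process independent of $\mathcal{F}_{S_n}\supseteq\mathcal{H}_n$. The exit interval $(-\alpha_n,\beta_n)$ in \eqref{eq:interval} is determined by $D_n$, which is $\mathcal{H}_n$-measurable. Hence, conditioning on $\mathcal{H}_n$ amounts to fixing the deterministic pair $(\alpha_n,\beta_n)$. The event $\{T_n>u\}$ equals $\{\widetilde{W}_s\in(-\alpha_n,\beta_n)\ \forall s\le u\}$, which depends only on $\widetilde{W}$. Because $u$ is itself $\mathcal{H}_n$-measurable together with $t$, one conditions on $S_n$ and treats $u$ as fixed. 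This gives
\begin{equation*}
\mathbb{E}[\widetilde{W}_u\,|\,\mathcal{H}_n,T_n>u]=\frac{\mathbb{E}[\widetilde{W}_u\mathbb{I}\{T_n>u\}]}{\mathbf{P}(T_n>u)},
\end{equation*}
where the expectations are taken with $(\alpha_n,\beta_n)$ frozen. Writing the numerator as $\int x\,p_n(u,x)\,dx$ and the denominator as $\int p_n(u,x)\,dx$, with $p_n$ the sub-probability density of $\widetilde{W}_u$ on $\{T_n>u\}$, yields \eqref{eq:mmse-correction}. The denominator is positive for every finite $u$, so the ratio is well defined.

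\textbf{Step 2: identify $p_n$.} This is the main computational step. Let $L=\alpha_n+\beta_n$ and shift by $y=x+\alpha_n$, so that the problem becomes Brownian motion on $(0,L)$ started at $\alpha_n$ and killed at the endpoints. The density solves $\partial_u p=\tfrac12\partial_{yy}p$ with Dirichlet boundary conditions and initial condition $\delta_{\alpha_n}$. Expanding in the orthonormal eigenfunctions $\sqrt{2/L}\sin(k\pi y/L)$, which have eigenvalues $-k^2\pi^2/(2L^2)$, gives exactly \eqref{eq:density}. The series converges uniformly for $u>0$ because of the exponential factors, which justifies integrating it term by term. Alternatively, one could cite the method of images in the Appendix; I expect the eigenfunction expansion to be the cleanest to state rigorously.

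\textbf{Step 3: deviation-based sampling.} Here $\alpha_n=\beta_n=\delta$, so the killed process is symmetric. Concretely, $x\mapsto p_n(u,x)$ is even on $(-\delta,\delta)$: under $x\mapsto -x$, each term $\sin(k\pi/2)\sin(k\pi(x+\delta)/(2\delta))$ is invariant, because only odd $k$ contribute and $\sin(k\pi-\theta)=\sin\theta$. The same conclusion follows more directly from the reflection symmetry $\widetilde{W}\mapsto-\widetilde{W}$, which preserves $\{T_n>u\}$. Either way, the numerator vanishes, so $f_n\equiv0$ and the estimator is the ZOH estimator $\hat{W}_t=W_{S_n}$.

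The only subtle point is the measure-theoretic justification in Step 1. The conditioning event involves the random time $u=t-S_n$, and the variables in $\mathcal{H}_n$ include earlier sampling times and values. I would handle this by conditioning on $\mathcal{F}_{S_n}$ and applying the strong Markov property in its "independent of the past" form.
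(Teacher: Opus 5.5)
Your proposal is correct and follows essentially the same route as the paper: the decomposition $W_t = W_{S_n}+\widetilde{W}_{U_t}$, the strong Markov reduction to a ratio of integrals of the killed density, and a symmetry argument when $\alpha_n=\beta_n=\delta$. The differences are cosmetic. The paper cites the spectral representation of the killed density from the Appendix (a cited eigenfunction expansion, not a method of images) instead of deriving it via the Dirichlet eigenfunction expansion, and it shows evenness by rewriting the series as a sum of cosines over odd $k$, which is equivalent to your observation $\sin(k\pi-\theta)=\sin\theta$.
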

\begin{IEEEproof}
Recall the shifted Wiener process~\eqref{eq:shifted-process} and the first exit time~\eqref{eq:exit-time} from the $n$th interval $(-\alpha_{n}, \beta_{n})$. We may write
\begin{equation*}
    W_{t} = W_{S_{n}} + \widetilde{W}_{U_{t}}, \quad t \in [S_{n}, S_{n+1}).
\end{equation*}
The MMSE estimator therefore decomposes as
\begin{equation}
    \hat{W}_{t} 
    = \underbrace{\mathbb{E}[W_{S_{n}} {\,}|{\,} \mathcal{H}_{n}, T_{n}> U_{t}]}_{= W_{S_{n}}} 
    + \underbrace{\mathbb{E}[\widetilde{W}_{U_{t}} {\,}|{\,} \mathcal{H}_{n}, T_{n}> U_{t}]}_{=: f_{n}(U_{t})}.
\end{equation}
The first term equals the last received sample because $W_{S_n}$ is $\mathcal{H}_n$-measurable. The second term, $f_n(U_t)$, is the correction inferred from silence. For $u>0$,
\begin{equation*}
    f_n(u)
    =
    \mathbb{E}[\widetilde{W}_u
    {\,}|{\,} \mathcal{H}_n, T_n>u]
    =
    \frac{
        \mathbb{E}[\widetilde{W}_u
        \mathbb{I}\{T_n>u\} {\,}|{\,} \mathcal{H}_n]
    }{
        \mathbf{P}(T_n>u {\,}|{\,} \mathcal{H}_n)
    },
\end{equation*}
where
\begin{equation}
    \mathbf{P}(T_n>u {\,}|{\,} \mathcal{H}_n) = \int_{-\alpha_{n}}^{\beta_{n}} p_{n}(u, x)\,dx, \label{eq:survival probability}
\end{equation}
is the survival probability, i.e., the probability that the process remains in $(-\alpha_n, \beta_n)$ throughout $[0, u]$, and $p_{n}(u, x)$ is the killed transition density of $\widetilde{W}_{u}$ starting from zero and killed upon exiting this interval. The expression in~\eqref{eq:density} follows directly from~\eqref{eq:app-killed-density-general} in Appendix~\ref{app:first-exit-time}. Then, we may write
\begin{equation*}
    f_{n}(u) = \frac{\int_{-\alpha_{n}}^{\beta_{n}} x {\,} p_{n}(u, x)\,dx}{\int_{-\alpha_{n}}^{\beta_{n}} p_{n}(u, x)\,dx}.
\end{equation*}

For deviation-based sampling, the continuation region after every sample is symmetric, i.e., $(-\alpha_{n}, \beta_{n}) = (-\delta, \delta)$. The killed transition density~\eqref{eq:density} becomes
\begin{align}
    &p^{\mathrm{dev}}_{n}(u, x) \notag\\
    &= \frac{1}{\delta} \sum_{k=1}^{\infty} \sin\left(\frac{k\pi}{2}\right) \sin\left(\frac{k\pi (x+\delta)}{2\delta}\right) \exp\left(-\frac{k^{2}\pi^{2}u}{8 \delta^{2}}\right) \notag\\
    &= \frac{1}{\delta}
    \sum_{j=0}^{\infty}
    \cos\left(\frac{(2j+1)\pi x}{2\delta}\right)
    \exp\left(-\frac{(2j+1)^2\pi^2u}{8\delta^2}\right).
\end{align}
Since each cosine term is even in $x$, this density satisfies
\begin{equation*}
    p^{\mathrm{dev}}_{n}(u, x) 
    = p^{\mathrm{dev}}_{n}(u, -x).
\end{equation*}
Hence,
\begin{equation}
    f^{\mathrm{dev}}_n(u)
    =
    \frac{
        \int_{-\delta}^{\delta}
        x p^{\mathrm{dev}}_n(u,x)\,dx
    }{
        \int_{-\delta}^{\delta}
        p^{\mathrm{dev}}_n(u,x)\,dx
    }
    =0,
\end{equation}
and the MMSE estimator reduces to $\hat{W}_{t} = W_{S_{n}}$.
\end{IEEEproof}

The next lemma characterizes the long-silence limit of the MMSE correction. It shows that, as AoI tends to infinity, the correction approaches the midpoint of the continuation interval. This result motivates the LSI estimator introduced in Section~\ref{sec:affine-age}.

\begin{lemma}\label{lemma:LS}
For each $n\geq 1$, the MMSE correction satisfies
\begin{equation}
    \lim_{u \to \infty} f_{n}(u) 
    = 
    \frac{\beta_{n} - \alpha_{n}}{2} 
    = 
    - D_{n} \frac{n\delta}{2}.
\end{equation}
\end{lemma}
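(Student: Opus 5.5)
The plan is to work directly with the eigenfunction expansion~\eqref{eq:density} of the killed density and show that, as $u\to\infty$, the ratio~\eqref{eq:mmse-correction} is governed by the $k=1$ term alone. Write $L:=\alpha_n+\beta_n$ and $\lambda_k := k^2\pi^2/(2L^2)$. Then
\begin{equation*}
p_n(u,x) = \frac{2}{L}\sum_{k=1}^{\infty} a_k\,\phi_k(x)\,e^{-\lambda_k u},
\end{equation*}
where $a_k=\sin(k\pi\alpha_n/L)$ and $\phi_k(x)=\sin(k\pi(x+\alpha_n)/L)$.

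The first step is to multiply the numerator and denominator of~\eqref{eq:mmse-correction} by $e^{\lambda_1 u}$. We then integrate term by term. This is justified for $u>0$ because $|a_k\phi_k|\le 1$ and $\sum_k e^{-(\lambda_k-\lambda_1)u}<\infty$, so the series converges uniformly on the interval, and by dominated convergence we may let $u\to\infty$ inside the sums. For every $k\ge 2$ we have $e^{-(\lambda_k-\lambda_1)u}\le e^{-(\lambda_2-\lambda_1)u}$, so the tail is dominated by a summable constant multiple of $e^{-3\pi^2u/(2L^2)}\to 0$.

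For this to give the claimed limit, we need the leading coefficients to be nonzero. Here $a_1=\sin(\pi\alpha_n/L)>0$, since $0<\alpha_n<L$, and $\int_{-\alpha_n}^{\beta_n}\phi_1(x)\,dx = 2L/\pi>0$. Hence
\begin{equation*}
\lim_{u\to\infty} f_n(u)=\frac{\int_{-\alpha_n}^{\beta_n} x\,\phi_1(x)\,dx}{\int_{-\alpha_n}^{\beta_n}\phi_1(x)\,dx}.
\end{equation*}
The remaining computation is elementary. Substitute $y=x+\alpha_n\in[0,L]$. The function $\sin(\pi y/L)$ is symmetric about $y=L/2$, so its normalized centroid is $L/2$. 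Thus the limit equals $L/2-\alpha_n=(\beta_n-\alpha_n)/2$.

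Finally, using~\eqref{eq:interval}:
\begin{itemize}
\item if $D_n=+1$, then $\beta_n-\alpha_n=\delta-(n+1)\delta=-n\delta$;
\item if $D_n=-1$, the difference is $+n\delta$.
\end{itemize}
In both cases $(\beta_n-\alpha_n)/2=-D_n\,n\delta/2$.

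The only genuinely delicate step is justifying the interchange of the limit and the integral/sum, together with the nonvanishing of the leading coefficient $a_1\int\phi_1$. Both are handled by the uniform bound above and the strict positivity of $\sin(\pi\alpha_n/L)$. Without that positivity, the dominant mode in the numerator and the denominator could change and spoil the ratio.
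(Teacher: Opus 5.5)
Your proposal is correct and follows essentially the same route as the paper: multiply numerator and denominator by $e^{\lambda_1 u}$, show the higher modes vanish uniformly on $[-\alpha_n,\beta_n]$ via $\sum_{k\geq 2}e^{-(\lambda_k-\lambda_1)u}\to 0$, pass the limit through the integrals, and use the symmetry of $\sin(\pi(x+\alpha_n)/L_n)$ about the midpoint. Your explicit check that $\sin(\pi\alpha_n/L_n)>0$ and $\int_{-\alpha_n}^{\beta_n}\phi_1(x)\,dx=2L_n/\pi>0$ is a small but welcome addition, since the paper cancels this leading coefficient without comment.
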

\begin{IEEEproof}
Fix $n$ and write $L_n=\alpha_n+\beta_n$. The eigenvalues in the killed transition density~\eqref{eq:density} are
\begin{equation*}
    \lambda_k=\frac{k^2\pi^2}{2L_n^2},
    \quad k\geq1,
\end{equation*}
Let
\begin{equation*}
    \xi_n(x)
    =
    \frac{2}{L_n}
    \sin\left(\frac{\pi\alpha_n}{L_n}\right)
    \sin\left(\frac{\pi(x+\alpha_n)}{L_n}\right).
\end{equation*}
By~\eqref{eq:density},
\begin{equation*}
    \sup_{x\in[-\alpha_n,\beta_n]}
    \left|e^{\lambda_1u}p_n(u,x)-\xi_n(x)\right|
    \leq
    \frac{2}{L_n}
    \sum_{k=2}^{\infty}e^{-(\lambda_k-\lambda_1)u}.
\end{equation*}
Since $\lambda_k>\lambda_1$ for $k\geq2$, the series on the RHS tends to zero as $u\to\infty$ by dominated convergence. Hence,
$e^{\lambda_{1}u}p_n(u,x)$ converges uniformly to $\xi_{n}(x)$ on $[-\alpha_{n},\beta_{n}]$.

Multiplying the numerator and denominator
of~\eqref{eq:mmse-correction} by $e^{\lambda_1u}$ and
passing the limit through the integrals therefore gives
\begin{align*}
    \lim_{u\to\infty}f_n(u)
    &=
    \frac{
        \int_{-\alpha_n}^{\beta_n}x\xi_n(x)\,dx
    }{
        \int_{-\alpha_n}^{\beta_n}\xi_n(x)\,dx
    } \\
    &=
    \frac{
        \int_{-\alpha_n}^{\beta_n}
        x\sin\left(\frac{\pi(x+\alpha_n)}{L_n}\right) dx
    }{
        \int_{-\alpha_n}^{\beta_n}
        \sin\left(\frac{\pi(x+\alpha_n)}{L_n}\right) dx
    }.
\end{align*}
Since the sine function is symmetric about the midpoint $\frac{\beta_{n} - \alpha_{n}}{2}$, its normalized mean equals this midpoint. Thus,
\begin{equation*}
    \lim_{u\to \infty} f_{n}(u) = \frac{\beta_{n} - \alpha_{n}}{2} = -D_n \frac{n\delta}{2},
\end{equation*}
which completes the proof.
\end{IEEEproof}

The next lemma quantifies the MSE reduction relative to ZOH achieved by exploiting silence. 

\begin{lemma}\label{lem:mmse-mse}
The MSE of the MMSE estimator~\eqref{eq:exact-mmse-estimator} over the $n$th sampling interval is
\begin{equation}
    \mathsf{MSE}_n^{*}
    =
    \mathsf{MSE}_n^{\mathrm{ZOH}}
    -
    \int_0^{\infty}
    f_n^2(u)
    \int_{-\alpha_n}^{\beta_n}
    p_n(u,x)\,dx\,du.
\end{equation}
where $f_n(u)$ is given by~\eqref{eq:mmse-correction} and $p_n(u,x)$ is given by~\eqref{eq:density}.
\end{lemma}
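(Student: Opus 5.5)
We prove the statement by an orthogonality (Pythagorean) decomposition of the MSE over a single sampling interval, conditioned on the state at $S_n$. Write $\tau := T_n$ and work with the shifted process $\widetilde{W}_u = W_{S_n+u}-W_{S_n}$, so that for $t\in[S_n,S_{n+1})$ the ZOH error is $W_t - W_{S_n} = \widetilde{W}_{U_t}$. By Theorem~\ref{thm:mmse}, the MMSE error is $\widetilde{W}_{U_t} - f_n(U_t)$. Changing variables $u = t - S_n$ gives
\begin{equation*}
    \mathsf{MSE}_n^{*} = \mathbb{E}\left[\int_0^{\infty} \mathbb{I}\{T_n>u\}\bigl(\widetilde{W}_u - f_n(u)\bigr)^2\,du\right].
\end{equation*}
Expanding the square yields three terms: $\mathbb{I}\{T_n>u\}\widetilde{W}_u^2$, the cross term $-2f_n(u)\mathbb{I}\{T_n>u\}\widetilde{W}_u$, and $f_n^2(u)\mathbb{I}\{T_n>u\}$. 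The first term integrates to $\mathsf{MSE}_n^{\mathrm{ZOH}}$.

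Next we condition on $\mathcal{H}_n$, or equivalently on $\mathcal{F}_{S_n}$. The quantities $(\alpha_n,\beta_n)$ depend only on $D_n$, and by the strong Markov property $\widetilde{W}$ is a standard Wiener process independent of $\mathcal{F}_{S_n}$. We then apply Tonelli/Fubini to exchange $\mathbb{E}$ and $\int_0^\infty du$. This is legitimate because every integrand is integrable: $\mathbb{E}\int_0^{T_n}\widetilde{W}_u^2\,du \le \max(\alpha_n,\beta_n)^2\,\mathbb{E}[T_n]<\infty$ by Theorem~\ref{thm:sampling_times}, and $|f_n|\le\max(\alpha_n,\beta_n)$. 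For fixed $u$, the definition of $f_n$ in the proof of Theorem~\ref{thm:mmse} gives
\begin{equation*}
    \mathbb{E}[\widetilde{W}_u\mathbb{I}\{T_n>u\}\mid\mathcal{H}_n] = f_n(u)\,\mathbf{P}(T_n>u\mid\mathcal{H}_n) = f_n(u)\int_{-\alpha_n}^{\beta_n}p_n(u,x)\,dx,
\end{equation*}
using the survival probability~\eqref{eq:survival probability}. Substituting this into the cross term, the cross term becomes $-2f_n^2(u)\int p_n\,dx$, while the last term equals $+f_n^2(u)\int p_n\,dx$. Their sum is $-f_n^2(u)\int_{-\alpha_n}^{\beta_n}p_n(u,x)\,dx$. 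Integrating over $u$ gives the claimed identity, with the outer expectation over $D_n$ being trivial because the expression depends on $D_n$ only through the symmetric pair $\{\alpha_n,\beta_n\}$. For $n=0$ it is constant.

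The main obstacle is justifying the interchange of expectation, the infinite time integral, and conditioning. In particular, we must show that $u\mapsto f_n^2(u)\int p_n\,dx$ is integrable on $[0,\infty)$. This follows because $\int_0^\infty \mathbf{P}(T_n>u)\,du=\mathbb{E}[T_n]<\infty$ and $f_n$ is bounded. A secondary point is the behavior near $u=0$, where the survival probability equals $1$ and $f_n(0)=0$; no singularity arises there.
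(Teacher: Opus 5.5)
Your proposal is correct and follows essentially the same route as the paper: expand the square, identify the first term as $\mathsf{MSE}_n^{\mathrm{ZOH}}$, and use $\int x\,p_n(u,x)\,dx = f_n(u)\int p_n(u,x)\,dx$ to combine the cross term with the $f_n^2$ term. Your Fubini/Tonelli justification (bounded $f_n$ and $\mathbb{E}[T_n]<\infty$) and your remark that averaging over $D_n$ is harmless add details the paper leaves implicit.
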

\begin{IEEEproof}
Recall that
\begin{equation*}
    W_t=W_{S_n} + \widetilde{W}_{U_t},
    \quad
    \hat W_t=W_{S_n}+f_n(U_t),
\end{equation*}
for $t\in[S_n,S_{n+1})$. Hence,
\begin{align*}
    \mathsf{MSE}_n^{*}
    &=
    \mathbb{E}\left[
        \int_0^{T_n}
        \bigl(\widetilde W_u-f_n(u)\bigr)^2\,du
    \right] \\
    &=
    \int_0^\infty
    \int_{-\alpha_n}^{\beta_n}
        \bigl(x-f_n(u)\bigr)^2
        p_n(u,x)\,dx\,du. \\
    &=
    \int_0^\infty
    \int_{-\alpha_n}^{\beta_n}
        x^2p_n(u,x)\,dx\,du \\
    &\quad
    -2\int_0^\infty
        f_n(u)
        \int_{-\alpha_n}^{\beta_n}
        xp_n(u,x)\,dx\,du \\
    &\quad
    +\int_0^\infty
        f_n^2(u)
        \int_{-\alpha_n}^{\beta_n}
        p_n(u,x)\,dx\,du.
\end{align*}
The first term on the RHS satisfies
\begin{equation*}
    \int_0^\infty 
    \int_{-\alpha_n}^{\beta_n}
        x^2p_n(u,x)\,dx\,du
    = \mathbb{E} \left[
        \int_0^{T_n} \widetilde{W}_u^2\,du
    \right] = \mathsf{MSE}_n^{\mathrm{ZOH}}.
\end{equation*}
By~\eqref{eq:mmse-correction}, 
\begin{equation*}
    \int_{-\alpha_n}^{\beta_n}
        xp_n(u,x)\,dx
    =
    f_n(u)
    \int_{-\alpha_n}^{\beta_n}
        p_n(u,x)\,dx.
\end{equation*}
Therefore,
\begin{equation*}
    \mathsf{MSE}_n^{*}
    =
    \mathsf{MSE}_n^{\mathrm{ZOH}} -
    \int_0^\infty
        f_n^2(u)
        \int_{-\alpha_n}^{\beta_n}
        p_n(u,x)\,dx\,du,
\end{equation*}
which yields the desired result.
\end{IEEEproof}

\subsection{Affine-Age Estimators}\label{sec:affine-age}
The exact MMSE correction $f_n(u)$ depends on the AoI and generally requires evaluating an infinite series, which may limit its practicality in resource-constrained sensor networks. We therefore consider a tractable class of affine-age estimators, which admit closed-form MSE expressions.

\begin{definition}\label{def:affine-age-estimator}
An \emph{affine-age estimator} has the form
\begin{equation}
    \hat{W}_{t} = W_{S_n}+c_n + b_n U_t,
    \quad t \in [S_n,S_{n+1}),
    \label{eq:affine-age-estimator}
\end{equation}
where $c_{n}$ and $b_{n}$ are $\mathcal{H}_{n}$-measurable and remain fixed during the $n$th sampling interval.
\end{definition}

We first characterize the estimation error of this class.

\begin{theorem}\label{thm:affine-age-estimator}
The MSE of the affine-age estimator~\eqref{eq:affine-age-estimator} over the $n$th sampling interval is
\begin{align}
    \mathsf{MSE}^{\mathrm{AFF}}_{n}(c_n, b_n) 
    &= \mathcal{E}_{n, 0} + \mathcal{E}_{n, 1} c_n + \mathcal{E}_{n, 2} c_n^2 \notag\\
    &\quad + \mathcal{E}_{n, 3} c_n b_n + \mathcal{E}_{n, 4} b_n + \mathcal{E}_{n, 5} b_n^2, \label{eq:affine-age-mse}
\end{align}
where
\begin{subequations}
\begin{align}
    \mathcal{E}_{n,0}
    &=
    \frac{\alpha_n\beta_n}{6}
    \left(
        \alpha_n^2-\alpha_n\beta_n+\beta_n^2
    \right),
    \label{eq:affine-E0}\\
    \mathcal{E}_{n,1}
    &=
    \frac{-2\alpha_n\beta_n}{3}
    (\beta_n-\alpha_n),
    \label{eq:affine-E1}\\
    \mathcal{E}_{n,2}
    &=
    \alpha_n\beta_n,
    \label{eq:affine-E2}\\
    \mathcal{E}_{n,3}
    &=
    \frac{\alpha_n\beta_n}{3}
    \left(
        \alpha_n^2+3\alpha_n\beta_n+\beta_n^2
    \right),
    \label{eq:affine-E3}\\
    \mathcal E_{n,4}
    &=
    \frac{-\alpha_n\beta_n(\beta_n-\alpha_n)}{45}
    \left(
        7\alpha_n^2+20\alpha_n\beta_n+7\beta_n^2
    \right),
    \label{eq:affine-E4}\\
    \mathcal{E}_{n,5}
    &=
    \frac{(\alpha_{n}+\beta_{n})^{2}}{5} \mathcal{E}_{n,3} + \frac{\alpha_{n}^{3}\beta_{n}^{3}}{45}.
    \label{eq:affine-E5}
\end{align}
\end{subequations}
Moreover, the optimal coefficients are given by
\begin{equation}
    c_n^{*}
    =
    \frac{
        \mathcal{E}_{n,3}\mathcal{E}_{n,4}
        -2\mathcal{E}_{n,5}\mathcal{E}_{n,1}
    }{
        4\mathcal{E}_{n,2}\mathcal{E}_{n,5}
        -\mathcal{E}_{n,3}^2
    },\,
    b_n^{*}
    =
    \frac{
        \mathcal{E}_{n,3}\mathcal{E}_{n,1}
        -2\mathcal{E}_{n,2}\mathcal{E}_{n,4}
    }{
        4\mathcal{E}_{n,2}\mathcal{E}_{n,5}
        -\mathcal{E}_{n,3}^2
    }.
    \label{eq:oac-coeffs}
\end{equation}
\end{theorem}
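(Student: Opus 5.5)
Writing $W_t=W_{S_n}+\widetilde W_{U_t}$ on $[S_n,S_{n+1})$, the error of the affine-age estimator is $\widetilde W_u-c_n-b_nu$ with $u=U_t$. Conditioning on $\mathcal H_n$, which fixes $(\alpha_n,\beta_n,c_n,b_n)$, the strong Markov property reduces the problem to a standard Wiener process $\widetilde W$ started at $0$ and stopped at its exit time $T=T_n$ from $(-\alpha,\beta)$. Expanding the square gives
\begin{align*}
\mathsf{MSE}^{\mathrm{AFF}}_n
&=\mathbb E\Big[\int_0^{T}\widetilde W_u^2du\Big]
-2c_n\,\mathbb E\Big[\int_0^{T}\widetilde W_udu\Big]
+c_n^2\,\mathbb E[T]\\
&\quad+c_nb_n\,\mathbb E[T^2]
-2b_n\,\mathbb E\Big[\int_0^{T}u\widetilde W_udu\Big]
+\frac{b_n^2}{3}\,\mathbb E[T^3].
\end{align*}
This already shows that $\mathsf{MSE}^{\mathrm{AFF}}_n$ is quadratic in $(c_n,b_n)$. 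It identifies $\mathcal E_{n,2}=\mathbb E[T]=\alpha\beta$ and $\mathcal E_{n,3}=\mathbb E[T^2]$, both available from the proof of Theorem~\ref{thm:sampling_times}.

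The remaining coefficients reduce to four path functionals. Each will be computed by Itô's formula applied to a polynomial martingale, followed by optional stopping, which is justified because $T$ has exponential moments by~\eqref{eq:LT_Tn}.
\begin{itemize}
\item \textbf{Third moment for $\mathcal E_{n,5}$.} $\mathbb E[T^3]$ comes from the $s^3$ coefficient of $\cosh(\frac{\alpha-\beta}{2}\sqrt{2s})/\cosh(\frac{\alpha+\beta}{2}\sqrt{2s})$, continuing the Taylor expansion of Theorem~\ref{thm:sampling_times}. Then $\mathcal E_{n,5}=\mathbb E[T^3]/3$, and I will check that it matches $\frac{(\alpha+\beta)^2}{5}\mathcal E_{n,3}+\frac{\alpha^3\beta^3}{45}$.
\item \textbf{Integral of $\widetilde W^2$ for $\mathcal E_{n,0}$.} Itô's formula on $\widetilde W_t^4-6t\widetilde W_t^2+3t^2$ gives $\mathbb E[\widetilde W_T^4]-6\,\mathbb E[T\widetilde W_T^2]+3\,\mathbb E[T^2]=0$. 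The simpler route is Dynkin's formula with $g(x)$ solving $\tfrac12g''=x^2$, $g(-\alpha)=g(\beta)=0$; then $\mathbb E\int_0^T\widetilde W_u^2du=-g(0)$. The analogous $g$ with $\tfrac12 g''=x$ gives $\mathbb E\int_0^T\widetilde W_udu$ and hence $\mathcal E_{n,1}$. Both are one-line ODE computations yielding the stated cubic and quartic polynomials.
\item \textbf{Time-weighted integral for $\mathcal E_{n,4}$.} For $\mathbb E\int_0^T u\widetilde W_udu$, I will use time-dependent Dynkin: find $h(t,x)=t\,g_1(x)+g_2(x)$ with $\partial_th+\tfrac12\partial_{xx}h=-tx$ and $h=0$ at $x\in\{-\alpha,\beta\}$. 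This requires $\tfrac12g_1''=-x$ and $\tfrac12 g_2''=-g_1$, each with zero boundary values, and then the functional equals $h(0,0)=g_2(0)$. This is a quintic polynomial ODE. Equivalently, integrate $u\,x$ against the killed density~\eqref{eq:density}, but the ODE route avoids series.
\end{itemize}

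I expect the main obstacle to be the algebra in $\mathcal E_{n,4}$ and $\mathbb E[T^3]$. Solving the nested boundary-value problems and factoring the result into $-\frac{\alpha\beta(\beta-\alpha)}{45}(7\alpha^2+20\alpha\beta+7\beta^2)$ is error-prone. As sanity checks I will use antisymmetry under $\alpha\leftrightarrow\beta$, the vanishing at $\alpha=\beta$, and consistency with the ZOH case $c_n=b_n=0$.

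Finally, $\mathsf{MSE}^{\mathrm{AFF}}_n$ is a convex quadratic in $(c_n,b_n)$. Its Hessian $\begin{bmatrix}2\mathcal E_{n,2}&\mathcal E_{n,3}\\ \mathcal E_{n,3}&2\mathcal E_{n,5}\end{bmatrix}$ is positive definite, because $\mathbb E[T]\mathbb E[T^3]/3\cdot 4>\mathbb E[T^2]^2$ by Cauchy--Schwarz, or equivalently because the form equals $\mathbb E\int_0^T(c+bu)^2du>0$. Setting the gradient to zero gives the system $2\mathcal E_{n,2}c+\mathcal E_{n,3}b=-\mathcal E_{n,1}$ and $\mathcal E_{n,3}c+2\mathcal E_{n,5}b=-\mathcal E_{n,4}$. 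Cramer's rule then yields~\eqref{eq:oac-coeffs}, and these coefficients are $\mathcal H_n$-measurable since they depend only on $(\alpha_n,\beta_n)$.
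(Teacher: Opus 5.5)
Your proposal is correct. The expansion, all six coefficients, and the Cramer's-rule optimizer check out. It differs from the paper mainly in how you obtain $\mathcal{E}_{n,4}$ and how you justify the minimization.

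\textbf{The first five coefficients.} The paper gets $\mathcal{E}_{n,0},\mathcal{E}_{n,1},\mathcal{E}_{n,2}$ by optional stopping of the polynomial martingales $W_t^m-\frac{m(m-1)}{2}\int_0^t W_s^{m-2}\,ds$, combined with the two-point exit law. Your Dirichlet problems $\frac12 g''=x^m$, $g(-\alpha)=g(\beta)=0$, are the same computation in Green's-function form. For $\mathcal{E}_{n,3}$ and $\mathcal{E}_{n,5}$ you expand the Laplace transform exactly as the paper does.

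\textbf{The mixed term.} The paper applies It\^o's formula to $uW_u^3$. This reduces $\mathbb{E}\int_0^\tau uW_u\,du$ to two quantities. The first is $\mathbb{E}[\tau W_\tau^3]$, handled via the boundary identity $W_\tau^3=(\alpha^2-\alpha\beta+\beta^2)W_\tau+\alpha\beta(\beta-\alpha)$ and $\mathbb{E}[\tau W_\tau]=\mathbb{E}\int_0^\tau W_u\,du$. The second is $\mathbb{E}\int_0^\tau W_u^3\,du=\mathbb{E}[W_\tau^5]/10$. This is short, but it depends on guessing the right test function. Your time-dependent Dynkin ansatz $h(t,x)=t\,g_1(x)+g_2(x)$ is more systematic. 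It extends mechanically to any $\mathbb{E}\int_0^\tau u^kW_u^m\,du$, for example for higher-degree polynomial-in-age corrections.

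\textbf{Avoiding the quintic.} You can sidestep the algebra you flagged as the main obstacle. Since $\frac12 g_2''=-g_1$ with zero boundary values, Dynkin's formula gives $g_2(0)=\mathbb{E}\int_0^\tau g_1(W_u)\,du$, where $g_1(x)=-\frac13(x+\alpha)(x-\beta)(x+\beta-\alpha)$. So you only need $\mathbb{E}\int_0^\tau W_u^3\,du$, $\mathbb{E}\int_0^\tau W_u\,du$, and $\mathbb{E}[\tau]$. The combination $\alpha\beta(\beta-\alpha)\bigl[-\frac{\alpha^2+\beta^2}{30}+\frac{\alpha^2-\alpha\beta+\beta^2}{9}+\frac{\alpha\beta}{3}\bigr]$ collapses immediately to $\frac{\alpha\beta(\beta-\alpha)}{90}(7\alpha^2+20\alpha\beta+7\beta^2)$. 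You never need to solve for the quintic $g_2$ explicitly.

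\textbf{The minimization.} Your positive-definiteness argument proves what the paper only asserts (``it can be shown that the coefficient of $b^2$ is positive''). Either $\mathbb{E}\int_0^\tau(c+bu)^2\,du>0$ or $\mathbb{E}[\tau^2]^2\le\mathbb{E}[\tau]\,\mathbb{E}[\tau^3]<\frac43\mathbb{E}[\tau]\,\mathbb{E}[\tau^3]$ does the job. Solving the joint first-order conditions is equivalent to the paper's minimization over $c$ followed by minimization over $b$.
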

\begin{IEEEproof}
By the strong Markov property, the shifted process
$\widetilde{W}_u=W_{S_n+u}-W_{S_n}$ is a standard Wiener process, and $T_n$ is its first exit time from $(-\alpha_n,\beta_n)$. Then, we may write
\begin{align*}
    \mathsf{MSE}^{\mathrm{AFF}}_n(c_n, b_n)
    &=
    \mathbb{E}\left[
    \int_{S_n}^{S_{n+1}}
    (W_t-\hat{W}_t)^2\,dt
    \right]\\
    &=
    \mathbb{E}\left[
    \int_0^{T_n}
    \left(\widetilde{W}_u - c_n - b_n u \right)^2\,du
    \right].
\end{align*}
For notational simplicity, we drop the subscript $n$ and write 
\begin{equation*}
    W=\widetilde{W},\,
    \tau=T_n,\,
    \alpha=\alpha_n,\,
    \beta=\beta_n,\,
    c=c_n,\, b=b_n.
\end{equation*}
Expanding the squared error gives
\begin{align*}
    &\mathbb{E}\left[
        \int_{0}^{\tau}(W_{u}-c-bu)^{2}\,du
    \right]
    \notag\\
    =&\,
    \mathbb{E}\left[\int_{0}^{\tau}W_{u}^{2}\,du\right]
    -2c\mathbb{E}\left[\int_{0}^{\tau}W_{u}\,du\right]
    +c^{2}\mathbb{E}\left[\int_{0}^{\tau}1\,du\right]
    \notag\\
    +& 2cb\mathbb{E}\left[\int_{0}^{\tau}u\,du\right]
    -2b\mathbb{E}\left[\int_{0}^{\tau}uW_{u}\,du\right]
    +b^{2}\mathbb{E}\left[\int_{0}^{\tau}u^{2}\,du\right].
\end{align*}
We therefore evaluate the following quantities
\begin{equation*}
\begin{alignedat}{2}
    \mathcal{E}_{0}
    &= \mathbb{E}\left[\int_{0}^{\tau}W_{u}^{2}\,du\right],
    &\quad \mathcal{E}_{1}
    &= -2\mathbb{E}\left[\int_{0}^{\tau}W_{u}\,du\right],\\
    \mathcal{E}_{2}
    &= \mathbb{E}\left[\int_{0}^{\tau}1\,du\right],
    &\quad \mathcal{E}_{3}
    &= 2\mathbb{E}\left[\int_{0}^{\tau}u\,du\right],\\
    \mathcal{E}_{4}
    &= -2\mathbb{E}\left[\int_{0}^{\tau}uW_{u}\,du\right],
    &\quad \mathcal{E}_{5}
    &= \mathbb{E}\left[\int_{0}^{\tau}u^{2}\,du\right].
\end{alignedat}
\end{equation*}

By continuity, $W_{\tau}\in\{-\alpha,\beta\}$ almost surely, and the first-exit probabilities are
\begin{equation*}
    \mathbf{P}(W_{\tau}=\beta)
    =
    \frac{\alpha}{\alpha+\beta},
    \quad
    \mathbf{P}(W_{\tau}=-\alpha)
    =
    \frac{\beta}{\alpha+\beta}.
\end{equation*}
Consequently, for each integer $m\geq1$,
\begin{equation}
    \mathbb{E}[W_{\tau}^{m}]
    =
    \frac{\alpha\beta^{m}+\beta(-\alpha)^{m}}
    {\alpha+\beta}.
    \label{eq:affine-exit-state-moments}
\end{equation}
Applying the stopped polynomial-martingale identity~\eqref{eq:app-poly-martingale} from Appendix~\ref{app:wiener-martingales} yields
\begin{align}
    \mathbb{E}\left[\int_{0}^{\tau}1\,du\right]
    &=
    \mathbb{E}[W_{\tau}^{2}]
    =
    \alpha\beta,
    \label{eq:affine-first-exit-moment}\\
    \mathbb{E}\left[\int_{0}^{\tau}W_{u}\,du\right]
    &=
    \frac{1}{3}\mathbb{E}[W_{\tau}^{3}]
    =
    \frac{\alpha\beta(\beta-\alpha)}{3},
    \label{eq:affine-state-integral}\\
    \mathbb{E}\left[\int_{0}^{\tau}W_{u}^{2}\,du\right]
    &=
    \frac{1}{6}\mathbb{E}[W_{\tau}^{4}]
    =
    \frac{\alpha\beta}{6}
    (\alpha^{2}-\alpha\beta+\beta^{2}).
    \label{eq:affine-square-integral}
\end{align}
These identities yield $\mathcal{E}_{0}$, $\mathcal{E}_{1}$, and $\mathcal{E}_{2}$.

To evaluate the integrals defining $\mathcal{E}_{3}$ and $\mathcal{E}_{5}$, expand the first-exit Laplace transform~\eqref{eq:LT_Tn} using
\begin{equation*}
    \cosh(x\sqrt{2s})
    =
    1+x^{2}s+\frac{x^{4}}{6}s^{2}
    +\frac{x^{6}}{90}s^{3}
    +\mathcal{O}(s^{4}).
\end{equation*}
This gives
\begin{align*}
    \mathcal{L}_{\tau}(s)
    &=
    1-\alpha\beta s
    +\frac{\alpha\beta}{6}
    (\alpha^{2}+3\alpha\beta+\beta^{2})s^{2}
    -\frac{\alpha\beta}{90}
    \Big(
        3\alpha^{4}\\
    &\quad
        +15\alpha^{3}\beta
        +25\alpha^{2}\beta^{2}
        +15\alpha\beta^{3}+3\beta^{4}
    \Big)s^{3}
    +\mathcal{O}(s^{4}).
\end{align*}
For every nonnegative integer $j$,
\begin{equation*}
    \mathbb{E}\left[\int_{0}^{\tau}u^{j}\,du\right]
    =
    \frac{(-1)^{j+1}}{j+1}
    \mathcal{L}_{\tau}^{(j+1)}(0).
\end{equation*}
Therefore, we obtain
\begin{equation}
    \mathbb{E}\left[\int_{0}^{\tau}u\,du\right]
    =
    \frac{\alpha\beta}{6}
    (\alpha^{2}+3\alpha\beta+\beta^{2}),
    \label{eq:affine-second-exit-moment}
\end{equation}
and
\begin{align}
    &\mathbb{E}\left[\int_{0}^{\tau}u^{2}\,du\right] \notag\\
    &=
    \frac{\alpha\beta}{45}
    \Big(
        3\alpha^{4}+15\alpha^{3}\beta
        +25\alpha^{2}\beta^{2}
        +15\alpha\beta^{3}+3\beta^{4}
    \Big) \notag\\
    &= \frac{\alpha_{n}\beta_{n}}{45}\Big(3(\alpha_{n}+\beta_{n})^{2} (\alpha^{2}+3\alpha\beta+\beta^{2})
    + \alpha_{n}^{2}\beta_{n}^{2}\Big),
    \label{eq:affine-third-exit-moment}
\end{align}
which give $\mathcal{E}_{3}$ and $\mathcal{E}_{5}$.

% For the remaining exit-time moments, expand the Laplace transform~\eqref{eq:LT_Tn} using 
% \begin{equation*}
%     \cosh(x\sqrt{2s})
%     =
%     1+x^2s+\frac{x^4}{6}s^2
%     +\frac{x^6}{90}s^3+ \mathcal{O}(s^4).
% \end{equation*}
% This gives
% \begin{align*}
%     \mathcal{L}_\tau(s)
%     &=
%     1-\alpha\beta s
%     +\frac{\alpha\beta}{6}
%     (\alpha^2+3\alpha\beta+\beta^2)s^2 
%     -\frac{\alpha\beta}{90}
%     \Big(
%     3\alpha^4 \\
%     &\quad +15\alpha^3\beta+25\alpha^2\beta^2
%         +15\alpha\beta^3+3\beta^4
%     \Big)s^3
%     +\mathcal{O}(s^4).
% \end{align*}
% Since $\mathcal{L}_\tau^{(k)}(0)=(-1)^k\mathbb{E}[\tau^k]$, we obtain
% \begin{align}
%     \mathbb{E}[\tau^2]
%     &=
%     \frac{\alpha\beta}{3}
%     (\alpha^2+3\alpha\beta+\beta^2),
%     \label{eq:affine-second-exit-moment}\\
%     \mathbb{E}[\tau^3]
%     &=
%     \frac{\alpha\beta}{15}
%     \left(
%         3\alpha^4+15\alpha^3\beta+25\alpha^2\beta^2
%         +15\alpha\beta^3+3\beta^4
%     \right),
%     \label{eq:affine-third-exit-moment}
% \end{align}
% which yields $\mathcal{E}_{3}$ and $\mathcal{E}_{5}$.

It remains to evaluate the mixed integral $\mathbb{E}[\int_0^\tau uW_u\,du]$. By It\^o's formula (see Appendix~\ref{app:wiener-martingales}), 
\begin{equation*}
    d(uW_{u}^{3})
    =
    (W_{u}^{3}+3uW_{u})\,du
    +3uW_{u}^{2}\,dW_{u}.
\end{equation*}
Integrating up to $\tau$ gives
\begin{equation*}
    \tau W_{\tau}^{3}
    =
    \int_{0}^{\tau}W_{u}^{3}\,du
    +3\int_{0}^{\tau}uW_{u}\,du
    +3\int_{0}^{\tau}uW_{u}^{2}\,dW_{u}.
\end{equation*}
The last term on the RHS is a stochastic integral. 
By Lemma~\ref{lem:app-stopped-stochastic-integrals}, it is
a martingale with zero expectation. Hence, taking expectations on both sides yields
\begin{equation}
    3\mathbb{E}\left[\int_{0}^{\tau}uW_{u}\,du\right]
    =
    \mathbb{E}[\tau W_{\tau}^{3}]
    -
    \mathbb{E}\left[\int_{0}^{\tau}W_{u}^{3}\,du\right].
    \label{eq:affine-mixed-integral-identity}
\end{equation}
We next evaluate the two terms on the RHS.

For the first term, It\^o's rule gives
\begin{equation*}
    d(uW_{u})=W_{u}\,du+u\,dW_{u}.
\end{equation*}
Integrating up to $\tau$ and applying
Lemma~\ref{lem:app-stopped-stochastic-integrals}
with $g(u)=u$ and $m=0$ yields
\begin{equation}
    \mathbb{E}[\tau W_{\tau}]
    =
    \mathbb{E}\left[\int_{0}^{\tau}W_{u}\,du\right]
    =
    \frac{\alpha\beta(\beta-\alpha)}{3}.
    \label{eq:affine-time-state}
\end{equation}
Since $W_{\tau}\in\{-\alpha,\beta\}$, we have
\begin{equation*}
    (W_{\tau}+\alpha)(W_{\tau}-\beta)=0,
\end{equation*}
and therefore
\begin{equation*}
    W_{\tau}^{2}
    =
    (\beta-\alpha)W_{\tau}+\alpha\beta.
\end{equation*}
Multiplying by $W_{\tau}$ and substituting this identity again gives
\begin{align*}
    W_{\tau}^{3}
    &=
    (\beta-\alpha)W_{\tau}^{2}+\alpha\beta W_{\tau}\\
    &=
    (\alpha^{2}-\alpha\beta+\beta^{2})W_{\tau}
    +\alpha\beta(\beta-\alpha).
\end{align*}
Multiplying by $\tau$ and using~\eqref{eq:affine-first-exit-moment} and~\eqref{eq:affine-time-state}, we obtain
\begin{align}
    \mathbb{E}[\tau W_{\tau}^{3}]
    &=
    (\alpha^{2}-\alpha\beta+\beta^{2})
    \mathbb{E}[\tau W_{\tau}]
    \notag\\
    &\quad+
    \alpha\beta(\beta-\alpha)
    \mathbb{E}\left[\int_{0}^{\tau}1\,du\right]
    \notag\\
    &=
    \frac{\alpha\beta(\beta-\alpha)(\alpha+\beta)^{2}}{3}.
    \label{eq:affine-time-cube}
\end{align}

For the second term, the stopped polynomial-martingale identity~\eqref{eq:app-poly-martingale} with $m=5$, together with~\eqref{eq:affine-exit-state-moments}, gives
\begin{align}
    \mathbb{E}\left[\int_{0}^{\tau}W_{u}^{3}\,du\right]
    &=
    \frac{1}{10}\mathbb{E}[W_{\tau}^{5}]
    =
    \frac{\alpha\beta(\beta-\alpha)
    (\alpha^{2}+\beta^{2})}{10}.
    \label{eq:affine-cube-integral}
\end{align}
Substituting~\eqref{eq:affine-time-cube} and~\eqref{eq:affine-cube-integral} into~\eqref{eq:affine-mixed-integral-identity} gives
\begin{align}
    \mathbb{E}\left[\int_{0}^{\tau}uW_{u}\,du\right]
    &=
    \frac{\alpha\beta(\beta-\alpha)}{90}
    \left(
        10(\alpha+\beta)^{2}
        -3(\alpha^{2}+\beta^{2})
    \right)
    \notag\\
    &=
    \frac{\alpha\beta(\beta-\alpha)}{90}
    \left(
        7\alpha^{2}+20\alpha\beta+7\beta^{2}
    \right).
    \label{eq:affine-age-state-integral}
\end{align}
This yields $\mathcal{E}_{4}$ and completes the proof
of~\eqref{eq:affine-age-mse}.

To determine the optimal coefficients, 
we first minimize the quadratic
in~\eqref{eq:affine-age-mse} over $c$ for a fixed $b$. Since $\mathcal{E}_{2}>0$, we have
\begin{equation*}
    c(b)
    =
    -\frac{
        \mathcal{E}_{1}+\mathcal{E}_{3}b
    }{2\mathcal{E}_{2}}.
\end{equation*}
Substituting this expression into~\eqref{eq:affine-age-mse} gives
\begin{align*}
    \mathsf{MSE}^{\mathrm{AFF}}(c, b) &= \mathcal{E}_{0}
    -\frac{\mathcal{E}_{1}^2}{4\mathcal{E}_{2}}
    +
    \big(
        \mathcal{E}_{4}
        -
        \frac{\mathcal{E}_{1}\mathcal{E}_{3}}
        {2\mathcal{E}_{2}}
    \big)b
    +
    \big(
        \mathcal{E}_{5}
        -
        \frac{\mathcal{E}_{3}^2}
        {4\mathcal{E}_{2}}
    \big)b^2.
\end{align*}
It can be shown that the coefficient of $b^2$ is positive. Minimizing this quadratic yields
\begin{equation*}
    b^*
    =
    \frac{
        \mathcal{E}_{3}\mathcal{E}_{1}
        -2\mathcal{E}_{2}\mathcal{E}_{4}
    }{
        4\mathcal{E}_{2}\mathcal{E}_{5}
        -\mathcal{E}_{3}^2
    }, \qquad c^* = c(b^*).
\end{equation*}
This completes the proof.
\end{IEEEproof}

We next discuss three specific estimators within this class.

\begin{enumerate}
    \item[(1)] \emph{ZOH estimator:} No correction is applied to the most recently received
    sample, i.e.,
    \begin{equation}
        c_n^{\mathrm{ZOH}}=0,
        \qquad
        b_n^{\mathrm{ZOH}}=0.
        \label{eq:zoh-coefficients}
    \end{equation}
    The ZOH estimator is MMSE-optimal under deviation-based sampling and is widely used for Wiener processes.
    \item[(2)] \emph{LSI estimator:}
    This estimator applies the long-silence limit correction in Lemma~\ref{lemma:LS}, i.e.,
    \begin{equation}
        c_n^{\mathrm{LSI}}
        =
        \lim_{u\to\infty}f_n(u)
        =
        \frac{\beta_n-\alpha_n}{2}, \quad b_n^{\mathrm{LSI}}=0.
        \label{eq:ls-correction}
    \end{equation}
    % \item[(3)] \emph{OCC estimator:} The OCC estimator is a constant-correction estimator where $b_n = 0$ and $c_n$ is chosen to minimize the MSE~\eqref{eq:affine-age-mse} over the $n$th interval, i.e.,
    % \begin{equation}
    %     c_n^{\mathrm{OCC}}
    %     =
    %     \frac{
    %     \mathbb{E}[
    %     \int_0^{T_n}\widetilde W_t \,dt]
    %     }{
    %     \mathbb{E}[T_n]
    %     }
    %     =
    %     \frac{\beta_n-\alpha_n}{3}, \,\, b_n^{\mathrm{OCC}} = 0.
    %     \label{eq:occ-correction}
    % \end{equation}
    \item[(3)] \emph{OAC estimator:}
    This estimator uses the optimal coefficients in~\eqref{eq:oac-coeffs}, i.e.,
    \begin{align}
        c_n^{\mathrm{OAC}} = c_n^*, \qquad
        b_n^{\mathrm{OAC}} = b_n^*.
    \end{align}
\end{enumerate}

% \begin{figure}[t!]
%     \centering
%     \includegraphics[width=\linewidth]{images/correction.pdf}
%     \caption{Comparison of correction terms for $\delta=1$, $D_{n}=-1$, and $n = 10$.}
%  \label{fig:correction}
% \end{figure}

% \begin{figure}[t!]
%     \centering
%     \includegraphics[width=\linewidth]{images/exp_correction.pdf}
%     \caption{Convergence of the EXP estimator for $\delta=1$, $D_{n}=-1$, and $n=10$.}
%     \label{fig:exp-correction}
% \end{figure}

\begin{figure}[t!]
    \centering
    \subfloat[affine-age estimators]{
    \includegraphics[width=0.97\linewidth]{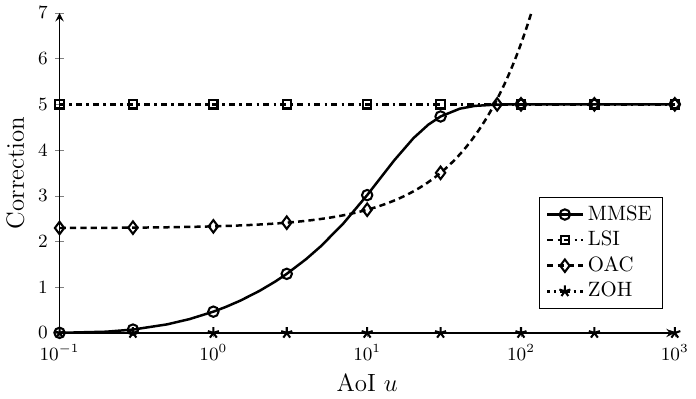}
    \label{fig:correction}
    }
    \\
    \subfloat[exponential-age estimator]{
    \includegraphics[width=0.97\linewidth]{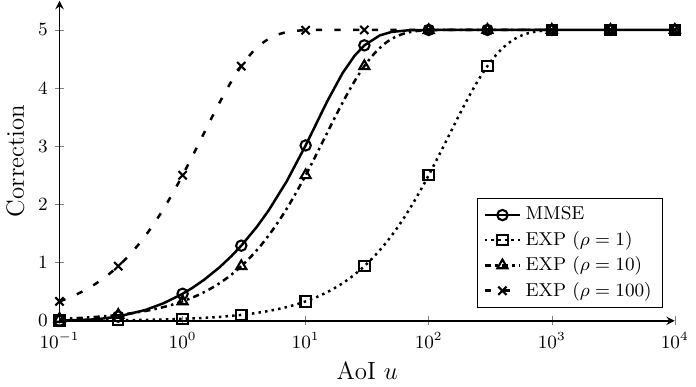}
    \label{fig:exp-correction}
    }
    \\
    \caption{Comparison of correction terms for $\delta=1$, $D_{n}=-1$, and $n = 10$.}
    \label{fig:comparison-correction}
\end{figure}

Fig.~\ref{fig:correction} illustrates how the estimators respond to the absence of a new innovation. Immediately after an innovation, the MMSE estimate coincides with the ZOH estimate. As the AoI increases, the MMSE correction converges exponentially to the LSI correction. In contrast, the OAC correction varies linearly with AoI and is unbounded whenever its slope is nonzero.

The next lemma characterizes the estimation performance of these estimators.

\begin{lemma}
\label{lemma:affine-age-average-mse}
Under innovation-based sampling, the average MSEs of the ZOH, LSI, and OAC estimators over the first $n$ sampling intervals, for $n\geq 1$, are
\begin{subequations}
\label{eq:affine-average-mses}
\begin{align}
    \bar{\mathcal{J}}_{n}^{\mathrm{ZOH}}
    &=
    \frac{3n^{2}-n+4}{36}\delta^{2},
    \label{eq:zoh-average-mse}\\
    \bar{\mathcal{J}}_{n}^{\mathrm{LSI}}
    &=
    \frac{n^{2}+n+2}{24}\delta^{2},
    \label{eq:ls-average-mse}\\
    \bar{\mathcal{J}}_{n}^{\mathrm{OAC}}
    &=
    \frac{3n^{2}+7n+8}{108}\delta^{2}
    -
    \frac{2\delta^{2}}{45n(n+1)} \notag\\
    &\quad
    \times \sum_{k=0}^{n-1}
    \frac{k^{2}(k+1)(2k^{2}+9k+9)^{2}}
    {7k^{4}+58k^{3}+177k^{2}+238k+119}.
    \label{eq:oac-average-mse}
\end{align}
\end{subequations}
Moreover, their relative gains over ZOH satisfy
\begin{subequations}
\label{eq:affine-asymptotic-gains}
\begin{align}
    \frac{
        \bar{\mathcal{J}}_{n}^{\mathrm{ZOH}}
        -\bar{\mathcal{J}}_{n}^{\mathrm{LSI}}
    }{
        \bar{\mathcal{J}}_{n}^{\mathrm{ZOH}}
    }
    &\longrightarrow \frac{1}{2},
    \label{eq:lsi-asymptotic-gain}\\
    \frac{
        \bar{\mathcal{J}}_{n}^{\mathrm{ZOH}}
        -\bar{\mathcal{J}}_{n}^{\mathrm{OAC}}
    }{
        \bar{\mathcal{J}}_{n}^{\mathrm{ZOH}}
    }
    &\longrightarrow \frac{26}{35},
    \label{eq:oac-asymptotic-gain}
\end{align}
\end{subequations}
as $n\to\infty$.
\end{lemma}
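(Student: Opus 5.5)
Everything reduces to Theorem~\ref{thm:affine-age-estimator} evaluated at $\{\alpha_k,\beta_k\}=\{\delta,(k+1)\delta\}$, summed over $k=0,\dots,n-1$ and divided by $\mathbb{E}[S_n]=\frac{n(n+1)}{2}\delta^2$ from Theorem~\ref{thm:sampling_times}. Since $\mathcal{E}_{k,0},\mathcal{E}_{k,2},\mathcal{E}_{k,3},\mathcal{E}_{k,5}$ are symmetric in $(\alpha_k,\beta_k)$, while $\mathcal{E}_{k,1},\mathcal{E}_{k,4}$ and the corrections $c_k^{\mathrm{LSI}},c_k^*$ are antisymmetric, every per-interval MSE depends only on $k$ and not on $D_k$. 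I will therefore take $\alpha_k=\delta$, $\beta_k=(k+1)\delta$ throughout, and set $\beta_k-\alpha_k=k\delta$, $\alpha_k\beta_k=(k+1)\delta^2$, $\alpha_k+\beta_k=(k+2)\delta$.

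\emph{ZOH.} Here $\mathsf{MSE}_k^{\mathrm{ZOH}}=\mathcal{E}_{k,0}=\frac{(k+1)(k^2+k+1)}{6}\delta^4$. Summing with standard power sums gives $\sum_{j=1}^{n} j(j^2-j+1)\delta^4/6$. Dividing by $n(n+1)\delta^2/2$ should give $\frac{3n^2-n+4}{36}\delta^2$, and I will verify this directly.

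\emph{LSI.} With $c_k=k\delta/2$ and $b_k=0$, the per-interval MSE is $\mathcal{E}_{k,0}+\mathcal{E}_{k,1}c_k+\mathcal{E}_{k,2}c_k^2$. Using $\mathcal{E}_{k,1}=-\frac{2}{3}(k+1)k\delta^3$ and $\mathcal{E}_{k,2}=(k+1)\delta^2$, this is $(k+1)\delta^4\bigl[\frac{k^2+k+1}{6}-\frac{k^2}{3}+\frac{k^2}{4}\bigr]=\frac{(k+1)(k^2+2k+2)}{12}\delta^4$. Summing and normalizing should give $\frac{n^2+n+2}{24}\delta^2$. The ratio limits then follow from the leading coefficients $\frac{1/24}{1/12}=\frac12$.

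\emph{OAC.} This is the main obstacle. At the joint optimum, the minimized quadratic equals $\mathcal{E}_0-\frac{\mathcal{E}_1^2}{4\mathcal{E}_2}-\frac{(\mathcal{E}_4-\mathcal{E}_1\mathcal{E}_3/(2\mathcal{E}_2))^2}{4(\mathcal{E}_5-\mathcal{E}_3^2/(4\mathcal{E}_2))}$, which is the form obtained in the proof of Theorem~\ref{thm:affine-age-estimator}. The first two terms give the constant-correction optimum: $\mathcal{E}_0-\frac{(k+1)k^2}{9}\delta^4=\frac{(k+1)(k^2+3k+3)}{18}\delta^4$. Summing this and dividing by $\mathbb{E}[S_n]$ should produce $\frac{3n^2+7n+8}{108}\delta^2$. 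For the last term, I will substitute $x=k+1$ and simplify the numerator and denominator polynomials in $k$. The denominator $\mathcal{E}_5-\mathcal{E}_3^2/(4\mathcal{E}_2)$ should factor as $\frac{(k+1)}{180}(7k^4+58k^3+177k^2+238k+119)\delta^6$ (to be checked). In the numerator, $\mathcal{E}_4-\mathcal{E}_1\mathcal{E}_3/(2\mathcal{E}_2)$ should equal a multiple of $k(k+1)(2k^2+9k+9)\delta^5$. The identity $2k^2+9k+9=(2k+3)(k+3)$ is a useful check. Squaring and dividing then gives the summand $\frac{2}{45}\cdot\frac{k^2(k+1)(2k^2+9k+9)^2}{7k^4+\cdots}\delta^4$ up to the normalization by $n(n+1)/2$, which matches the stated prefactor $\frac{2\delta^2}{45n(n+1)}$. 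I expect the polynomial bookkeeping here to be the error-prone step, and I will cross-check it numerically at $k=0$ (where the term must vanish) and $k=1$.

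Finally, the summand is asymptotic to $\frac{4k^5}{7}$, so the subtracted sum is $\sim\frac{4}{7}\cdot\frac{n^6}{6}$. Multiplied by $\frac{2}{45n^2}$, this gives $\frac{4}{945}n^2$. Hence $\bar{\mathcal{J}}_n^{\mathrm{OAC}}\sim(\frac{1}{36}-\frac{4}{945})n^2\delta^2=\frac{3}{140}n^2\delta^2$. Dividing by $\frac{1}{12}$ yields the ratio $\frac{9}{35}$, i.e., a gain of $\frac{26}{35}$.
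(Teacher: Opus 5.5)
Your route is the paper's: plug each coefficient choice into the quadratic of Theorem~\ref{thm:affine-age-estimator} with $\{\alpha_k,\beta_k\}=\{\delta,(k+1)\delta\}$, sum over $k$, and divide by $\mathbb{E}[S_n]=n(n+1)\delta^2/2$. Everything you assert up to the closed forms checks out:
\begin{itemize}
\item the symmetry argument showing the per-interval MSE does not depend on $D_k$;
\item the ZOH and LSI per-interval MSEs and their sums;
\item the constant-correction part $\frac{(k+1)(k^2+3k+3)}{18}\delta^4$;
\item your predicted factorizations $\mathcal{E}_{k,5}-\mathcal{E}_{k,3}^2/(4\mathcal{E}_{k,2})=\frac{k+1}{180}(7k^4+58k^3+177k^2+238k+119)\delta^6$ and $\mathcal{E}_{k,4}-\mathcal{E}_{k,1}\mathcal{E}_{k,3}/(2\mathcal{E}_{k,2})=-\frac{k(k+1)(2k^2+9k+9)}{45}\delta^5$.
\end{itemize}
These factorizations give a per-interval subtracted term with coefficient $\frac{1}{45}$. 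The factor $2$ in $\frac{2\delta^2}{45n(n+1)}$ comes only from dividing by $\mathbb{E}[S_n]$, so it should not also sit in the per-interval summand. Your denominator factorization also shows the $b^2$ coefficient is positive, which the paper only asserts.

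The final asymptotic paragraph, however, is wrong and reaches $\frac{26}{35}$ only through compensating slips. The summand $\frac{k^2(k+1)(2k^2+9k+9)^2}{7k^4+58k^3+177k^2+238k+119}$ is a degree-$7$ polynomial over a degree-$4$ one, so it equals $\frac{4}{7}k^3+\mathcal{O}(k^2)$, not $\frac{4}{7}k^5$. Taken at face value, your $k^5$ claim gives a sum of order $n^6$. After the factor $\frac{2}{45n(n+1)}$ the subtracted term would then be $\frac{4}{945}n^4$, not $\frac{4}{945}n^2$, which would make the average MSE negative. Even granting the $n^2$, $\frac{1}{36}-\frac{4}{945}=\frac{89}{3780}\neq\frac{81}{3780}=\frac{3}{140}$.

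The correct bookkeeping is $\sum_{k=0}^{n-1}\bigl(\frac{4}{7}k^3+\mathcal{O}(k^2)\bigr)=\frac{n^4}{7}+\mathcal{O}(n^3)$. The subtracted term is therefore $\frac{2}{315}n^2\delta^2+\mathcal{O}(n\delta^2)$, and $\frac{1}{36}-\frac{2}{315}=\frac{3}{140}$. Equivalently, as the paper does it, $\mathsf{MSE}_k^{\mathrm{OAC}}=\bigl(\frac{1}{18}-\frac{4}{315}\bigr)k^3\delta^4+\mathcal{O}(k^2\delta^4)=\frac{3}{70}k^3\delta^4+\mathcal{O}(k^2\delta^4)$, which sums to $\frac{3}{280}n^4\delta^4+\mathcal{O}(n^3\delta^4)$. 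Only then does $\frac{3/140}{1/12}=\frac{9}{35}$ give the gain $\frac{26}{35}$.
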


\begin{IEEEproof}
We apply Theorem~\ref{thm:affine-age-estimator} to each coefficient choice.

For the ZOH estimator,
$c_{k}^{\mathrm{ZOH}}=b_{k}^{\mathrm{ZOH}}=0$
during the $k$th sampling interval. Substituting these coefficients into~\eqref{eq:affine-age-mse} gives
\begin{equation*}
    \mathsf{MSE}_{k}^{\mathrm{ZOH}}
    =
    \frac{\alpha_{k}\beta_{k}}{6}
    (\alpha_{k}^{2}-\alpha_{k}\beta_{k}+\beta_{k}^{2}).
\end{equation*}
Using $\{\alpha_{k},\beta_{k}\}=\{\delta,(k+1)\delta\}$ yields
\begin{equation*}
    \mathsf{MSE}_{k}^{\mathrm{ZOH}}
    =
    \frac{(k+1)(k^{2}+k+1)}{6}\delta^{4}.
\end{equation*}
Summing over the first $n$ sampling intervals yields
\begin{align}
    \mathcal{J}_{n}^{\mathrm{ZOH}}
    &=
    \frac{\delta^{4}}{6}
    \sum_{k=0}^{n-1}(k+1)(k^{2}+k+1)
    \notag\\
    &=
    \frac{n(n+1)(3n^{2}-n+4)}{72}\delta^{4}.
    \label{eq:ZOH-total-MSE}
\end{align}

For the LSI estimator,
$c_{k}^{\mathrm{LSI}}=(\beta_{k}-\alpha_{k})/2$
and $b_{k}^{\mathrm{LSI}}=0$. Substitution
into~\eqref{eq:affine-age-mse} gives
\begin{align*}
    \mathsf{MSE}_{k}^{\mathrm{LSI}}
    &=
    \frac{\alpha_{k}\beta_{k}}{6}
    (\alpha_{k}^{2}-\alpha_{k}\beta_{k}+\beta_{k}^{2})\\
    &\quad
    -\frac{\alpha_{k}\beta_{k}}{3}
    (\beta_{k}-\alpha_{k})^{2}
    +\frac{\alpha_{k}\beta_{k}}{4}
    (\beta_{k}-\alpha_{k})^{2}\\
    &=
    \frac{\alpha_{k}\beta_{k}}{12}
    (\alpha_{k}^{2}+\beta_{k}^{2})\\
    &=
    \frac{(k+1)(k^{2}+2k+2)}{12}\delta^{4}.
\end{align*}
Consequently,
\begin{align}
    \mathcal{J}_{n}^{\mathrm{LSI}}
    &=
    \frac{\delta^{4}}{12}
    \sum_{k=0}^{n-1}(k+1)(k^{2}+2k+2)
    \notag\\
    &=
    \frac{n(n+1)(n^{2}+n+2)}{48}\delta^{4}.
    \label{eq:LS-total-MSE}
\end{align}

For the OAC estimator, 
Substituting~\eqref{eq:oac-coeffs} into~\eqref{eq:affine-age-mse} gives
\begin{align*}
    \mathsf{MSE}_{k}^{\mathrm{OAC}}
    &=
    \mathcal{E}_{k,0}
    -
    \frac{\mathcal{E}_{k,1}^{2}}{4\mathcal{E}_{k,2}}
    -
    \frac{
        \left(
            \mathcal{E}_{k,3}\mathcal{E}_{k,1}
            -2\mathcal{E}_{k,2}\mathcal{E}_{k,4}
        \right)^{2}
    }{
        4\mathcal{E}_{k,2}
        \left(
            4\mathcal{E}_{k,2}\mathcal{E}_{k,5}
            -\mathcal{E}_{k,3}^{2}
        \right)
    }.
\end{align*}
Using $\beta_{k}-\alpha_{k}=-D_{k}k\delta$
and $\alpha_{k}\beta_{k}=(k+1)\delta^{2}$, we obtain
\begin{align*}
    \mathsf{MSE}_{k}^{\mathrm{OAC}}
    &=
    \frac{(k+1)(k^{2}+3k+3)}{18}\delta^{4}\\
    &\quad
    -
    \frac{
        k^{2}(k+1)(2k^{2}+9k+9)^{2}
    }{
        45(7k^{4}+58k^{3}+177k^{2}+238k+119)
    }\delta^{4}.
\end{align*}
Summing over the first $n$ sampling intervals gives
\begin{align}
    \mathcal{J}_{n}^{\mathrm{OAC}}
    &=
    \frac{n(n+1)(3n^{2}+7n+8)}{216}\delta^{4}
    \notag\\
    &
    -
    \frac{\delta^{4}}{45}
    \sum_{k=0}^{n-1}
    \frac{
        k^{2}(k+1)(2k^{2}+9k+9)^{2}
    }{
        7k^{4}+58k^{3}+177k^{2}+238k+119
    }.
    \label{eq:OAC-total-MSE}
\end{align}

By Theorem~\ref{thm:sampling_times},
$\mathbb{E}[S_{n}]=n(n+1)\delta^{2}/2$.
Dividing~\eqref{eq:ZOH-total-MSE},
\eqref{eq:LS-total-MSE}, and~\eqref{eq:OAC-total-MSE}
by $\mathbb{E}[S_{n}]$ proves~\eqref{eq:affine-average-mses}.

We next establish the asymptotic gains.
For LSI,
\begin{equation*}
    \frac{
        \bar{\mathcal{J}}_{n}^{\mathrm{ZOH}}
        -\bar{\mathcal{J}}_{n}^{\mathrm{LSI}}
    }{
        \bar{\mathcal{J}}_{n}^{\mathrm{ZOH}}
    }
    =
    1-
    \frac{3(n^{2}+n+2)}
    {2(3n^{2}-n+4)}
    \longrightarrow
    \frac{1}{2}.
\end{equation*}
For OAC, the per-interval MSE satisfies
\begin{align*}
    \mathsf{MSE}_{k}^{\mathrm{OAC}}
    &=
    \left(\frac{1}{18}-\frac{4}{315}\right)
    k^{3}\delta^{4}
    +\mathcal{O}(k^{2}\delta^{4})\\
    &=
    \frac{3}{70}k^{3}\delta^{4}
    +\mathcal{O}(k^{2}\delta^{4}).
\end{align*}
Using
$\sum_{k=0}^{n-1}k^{3}=n^{2}(n-1)^{2}/4$
and $\sum_{k=0}^{n-1}k^{2}=\mathcal{O}(n^{3})$ gives
\begin{equation*}
    \mathcal{J}_{n}^{\mathrm{OAC}}
    =
    \frac{3}{280}n^{4}\delta^{4}
    +\mathcal{O}(n^{3}\delta^{4}).
\end{equation*}
Consequently,
\begin{equation*}
    \bar{\mathcal{J}}_{n}^{\mathrm{OAC}}
    =
    \frac{3}{140}n^{2}\delta^{2}
    +\mathcal{O}(n\delta^{2}).
\end{equation*}
Thus,
\begin{equation*}
    \frac{
        \bar{\mathcal{J}}_{n}^{\mathrm{ZOH}}
        -\bar{\mathcal{J}}_{n}^{\mathrm{OAC}}
    }{
        \bar{\mathcal{J}}_{n}^{\mathrm{ZOH}}
    }
    \longrightarrow
    1-\frac{3/140}{1/12}
    =
    \frac{26}{35}.
\end{equation*}
This completes the proof.
\end{IEEEproof}

% \begin{figure}[t!]
%     \centering
%     \includegraphics[width=\linewidth]{images/asymptotic_gain.pdf}
%     \caption{Performance gains of the proposed estimators relative to ZOH.}
%     \label{fig:asymptotic-gain}
% \end{figure}

% \begin{figure}
%     \centering
%     \includegraphics[width=\linewidth]{images/exp_gain.pdf}
%     \caption{Performance gain of the exponential-age estimator.}
%     \label{fig:exp_gain}
% \end{figure}

\begin{figure}[t!]
    \centering
    \subfloat[affine-age estimators]{
    \includegraphics[width=0.97\linewidth]{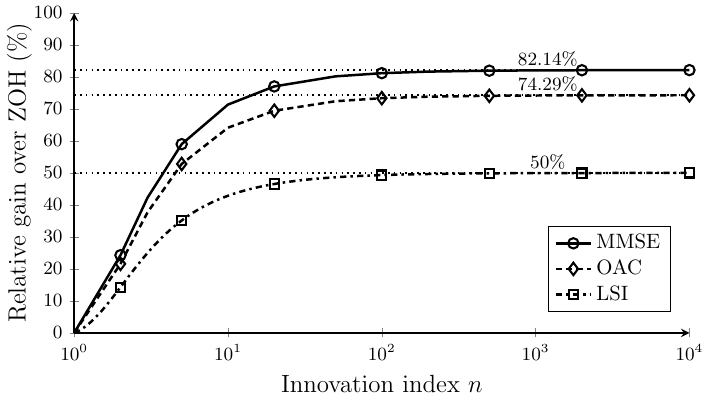}
    \label{fig:asymptotic-gain}
    }
    \\
    \subfloat[exponential-age estimator]{
    \includegraphics[width=0.97\linewidth]{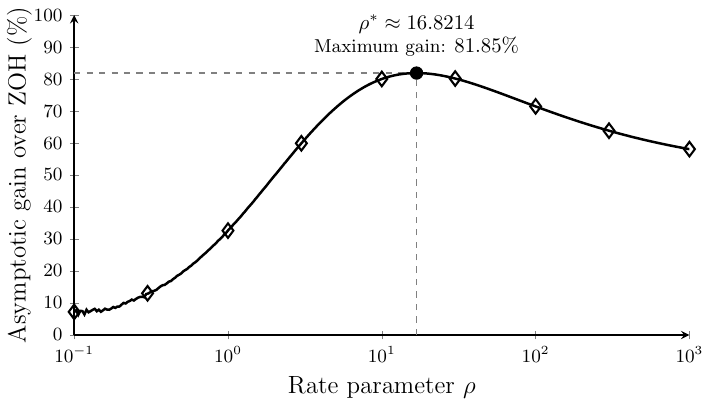}
    \label{fig:exp_gain}
    }
    \\
    \caption{Performance gains of the proposed estimators relative to ZOH.}
    \label{fig:comparison-gains}
\end{figure}

The relative performance gains of these estimators over ZOH are reported in Fig.~\ref{fig:asymptotic-gain}. The LSI and OAC gains converge to $50\%$ and $74.29\%$, respectively. The MMSE gain is evaluated numerically using Lemma~\ref{lem:mmse-mse} and approaches $82.14\%$. \emph{This highlights the importance of exploiting the information conveyed by silence.}

\subsection{Exponential-Age Estimators}
\label{sec:exp-age}

The affine-age estimators, while analytically simple, may lead to unbounded corrections. We therefore introduce an exponential-age estimator whose correction has a finite limit.

\begin{definition}
An \emph{exponential-age estimator} has the form
\begin{equation}
    \hat{W}_{t}
    =
    W_{S_{n}}+c_{n}(1-e^{-r_{n}U_{t}}),
    \quad t\in[S_{n},S_{n+1}),
    \label{eq:exponential-age-estimator}
\end{equation}
where
\begin{equation}
    c_{n}
    =
    c_{n}^{\mathrm{LSI}}
    =
    \frac{\beta_{n}-\alpha_{n}}{2},
    \qquad
    r_{n}
    =
    \frac{\rho}{(\alpha_{n}+\beta_{n})^{2}},\label{eq:exp-coeff}
\end{equation}
and $\rho>0$ is a rate parameter. 
\end{definition}

As illustrated in~Fig.~\ref{fig:exp-correction}, the EXP correction starts at zero and approaches the MMSE long-silence limit as AoI goes to infinity. The parameter $\rho$ controls how quickly the correction approaches this limit.

The next result characterizes the estimation performance and asymptotic gain of this estimator in closed form.

\begin{theorem}
\label{thm:exponential-age-mse}
The MSE of the exponential-age estimator~\eqref{eq:exponential-age-estimator} over the $n$th sampling interval is
\begin{equation}
    \mathsf{MSE}_{n}^{\mathrm{EXP}}(\rho)
    =
    \mathcal{E}_{n,0}
    -2A_{n}(r_{n})c_{n}
    +B_{n}(r_{n})c_{n}^{2},
    \label{eq:exponential-age-mse}
\end{equation}
where $\mathcal{E}_{n,0}$ is given
in~\eqref{eq:affine-E0}, and, for $r>0$,
\begin{subequations}
\begin{align}
    A_{n}(r)
    &=
    \frac{\alpha_{n}\beta_{n}(\beta_{n}-\alpha_{n})}{3}
    \notag\\
    &\quad-
    \frac{
        \alpha_{n}\sinh(\beta_{n}\sqrt{2r})
        -\beta_{n}\sinh(\alpha_{n}\sqrt{2r})
    }{
        r\sinh((\alpha_{n}+\beta_{n})\sqrt{2r})
    },
    \label{eq:exponential-A}\\
    B_{n}(r)
    &=
    \alpha_{n}\beta_{n}
    +
    \frac{
        4\mathcal{L}_{T_{n}}(r)
        -\mathcal{L}_{T_{n}}(2r)-3
    }{2r},
    \label{eq:exponential-B}
\end{align}
\end{subequations}
with $\mathcal{L}_{T_{n}}$ given in~\eqref{eq:LT_Tn}.
Moreover, as $n\to\infty$,
\begin{equation}
    \bar{\mathcal{J}}_{n}^{\mathrm{EXP}}(\rho)
    =
    \frac{h(\rho)}{2}n^{2}\delta^{2}
    +\mathcal{O}(n\delta^{2}),
    \label{eq:exp-average-mse-asymptotic}
\end{equation}
where
\begin{equation}
    h(\rho)
    =
    \frac{1}{12}
    +\frac{1}{\rho}
    -\frac{\coth(\sqrt{\rho/2})}{\sqrt{2\rho}}
    +\frac{\tanh(\sqrt{\rho})}{4\sqrt{\rho}},
    \label{eq:hrho}
\end{equation}
and the asymptotic gain over ZOH is
\begin{equation}
    \lim_{n\to\infty}
    \frac{
        \bar{\mathcal{J}}_{n}^{\mathrm{ZOH}}
        -\bar{\mathcal{J}}_{n}^{\mathrm{EXP}}(\rho)
    }{
        \bar{\mathcal{J}}_{n}^{\mathrm{ZOH}}
    }
    =
    1-6h(\rho).
\end{equation}
\end{theorem}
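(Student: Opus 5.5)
Following the proof of Theorem~\ref{thm:affine-age-estimator}, I would use the strong Markov property to write, with $W=\widetilde W$, $\tau=T_n$, $c=c_n$, $r=r_n$,
\begin{equation*}
\mathsf{MSE}_n^{\mathrm{EXP}}=\mathbb{E}\Big[\int_0^\tau \big(W_u-c(1-e^{-ru})\big)^2du\Big]
=\mathcal{E}_{n,0}-2c\,A_n(r)+c^2B_n(r),
\end{equation*}
with $A_n(r)=\mathbb{E}[\int_0^\tau W_u(1-e^{-ru})du]$ and $B_n(r)=\mathbb{E}[\int_0^\tau(1-e^{-ru})^2du]$. The term $\mathcal{E}_{n,0}$ is already known. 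For $B_n$, I would expand $(1-e^{-ru})^2=1-2e^{-ru}+e^{-2ru}$ and use $\int_0^\tau e^{-su}du=(1-e^{-s\tau})/s$ together with $\mathbb{E}[\tau]=\alpha\beta$. This gives
\begin{equation*}
\alpha\beta-\tfrac{2}{r}\big(1-\mathcal{L}(r)\big)+\tfrac{1}{2r}\big(1-\mathcal{L}(2r)\big),
\end{equation*}
which simplifies to~\eqref{eq:exponential-B}.

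For $A_n$, by~\eqref{eq:affine-state-integral} it suffices to compute $G(r):=\mathbb{E}[\int_0^\tau e^{-ru}W_u\,du]$. I would use It\^o's formula on $e^{-ru}\phi(W_u)$, where $\phi$ solves $\tfrac12\phi''-r\phi=-x$ on $[-\alpha,\beta]$ with $\phi(-\alpha)=\phi(\beta)=0$. The stochastic integral has zero mean by Lemma~\ref{lem:app-stopped-stochastic-integrals}, since the integrand is bounded and $\mathbb{E}[\tau]<\infty$. Hence $G(r)=\phi(0)$. The solution is $\phi(x)=x/r+a\cosh(x\sqrt{2r})+b\sinh(x\sqrt{2r})$, with $a,b$ fixed by the boundary conditions. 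Evaluating at $0$ and simplifying with $\sinh$ addition formulas should give
\begin{equation*}
\phi(0)=\frac{\alpha\sinh(\beta\sqrt{2r})-\beta\sinh(\alpha\sqrt{2r})}{r\sinh((\alpha+\beta)\sqrt{2r})}.
\end{equation*}
This yields~\eqref{eq:exponential-A}. As a sanity check, the bracketed expression should vanish as $r\to\infty$ and behave like $\alpha\beta(\beta-\alpha)/3$ as $r\to 0$.

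For the asymptotics, I would substitute $\{\alpha_k,\beta_k\}=\{\delta,(k+1)\delta\}$, $L_k=(k+2)\delta$, $c_k=-D_kk\delta/2$, and $r_k=\rho/L_k^2$, so that $\sqrt{2r_k}\,L_k=\sqrt{2\rho}$. The quantity to track is the leading order $k^3\delta^4$ of each piece. The main obstacle is the $\sinh$ ratios in $A_k$: since $\alpha_k/L_k\to0$ or $1$, I need expansions such as $\sinh(\delta\sqrt{2r_k})\approx\delta\sqrt{2r_k}$, and have to check that the errors are $\mathcal{O}(k^2\delta^4)$.

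The easier piece is $B_k$. Since $\mathcal{L}_{T_k}(r_k)=\cosh(k\theta)/\cosh((k+2)\theta)$ with $(k+2)\theta=\sqrt{\rho}/2$, I get $\mathcal{L}_{T_k}(r_k)=\cosh(\sqrt\rho/2-2\theta)/\cosh(\sqrt\rho/2)$ with $\theta\approx\sqrt{\rho}/(2k)$. Its first-order expansion $1-\tanh(\sqrt\rho/2)\sqrt\rho/k+\dots$ contributes at order $k\delta^2$, which matches $\alpha\beta$. The $\tanh$ and $\coth$ terms in $h$ should then emerge from collecting the $k^3$ coefficients, combined with $\mathcal{E}_{k,0}\sim k^3\delta^4/6$, giving $\mathsf{MSE}_k^{\mathrm{EXP}}=h(\rho)k^3\delta^4+\mathcal{O}(k^2\delta^4)$.

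Summing as in Lemma~\ref{lemma:affine-age-average-mse} then gives $\mathcal{J}_n=h n^4\delta^4/4+\mathcal{O}(n^3)$. Dividing by $\mathbb{E}[S_n]=n(n+1)\delta^2/2$ yields~\eqref{eq:exp-average-mse-asymptotic}. Comparing with $\bar{\mathcal{J}}_n^{\mathrm{ZOH}}\sim n^2\delta^2/12$ gives the gain $1-6h(\rho)$.
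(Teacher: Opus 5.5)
Your exact per-interval formula is correct, but for the discounted term $G(r)=\mathbb{E}[\int_0^\tau e^{-ru}W_u\,du]$ you take a genuinely different route from the paper. The paper applies It\^o's formula to $e^{-ru}W_u$ itself, which gives $G(r)=-\mathbb{E}[e^{-r\tau}W_\tau]/r$. It then evaluates $\mathbb{E}[e^{-r\tau}W_\tau]$ from the exit-boundary-separated Laplace transforms \eqref{eq:app-exit-boundary-transforms} quoted from the handbook. You instead solve the resolvent problem $\tfrac12\phi''-r\phi=-x$ with $\phi(-\alpha)=\phi(\beta)=0$. The boundary term $e^{-r\tau}\phi(W_\tau)$ then vanishes and $G(r)=\phi(0)$. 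Eliminating $b$ from the two boundary equations indeed gives $\phi(0)=\frac{\alpha\sinh(\beta\sqrt{2r})-\beta\sinh(\alpha\sqrt{2r})}{r\sinh((\alpha+\beta)\sqrt{2r})}$.

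The paper's computation is shorter once the split transforms are granted; it takes one line because $x$ is harmonic, so no residual time integral appears. Yours avoids the split transforms (it effectively derives $\mathbb{E}[e^{-r\tau}W_\tau]$) and extends verbatim to any discounted running cost $\mathbb{E}[\int_0^\tau e^{-ru}f(W_u)\,du]$. One small citation fix: the integrand $e^{-ru}\phi'(W_u)$ is not of the form $g(s)W_s^m$. Invoke Theorem~\ref{thm:app-ito-isometry} directly instead; the integrand is bounded on $[0,\tau]$ and $\mathbb{E}[\tau]<\infty$.

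The asymptotic part follows the paper's plan, but it contains a slip that, carried through, would not reproduce $h(\rho)$. With $r_k=\rho/((k+2)\delta)^2$ and $\theta=\frac{\delta}{2}\sqrt{2r_k}$, you get $(k+2)\theta=\sqrt{\rho/2}$, not $\sqrt{\rho}/2$. Writing $q=\sqrt{2\rho}$, the correct expansions are $\mathcal{L}_{T_k}(r_k)=1-\frac{q}{k+2}\tanh(q/2)+\mathcal{O}(k^{-2})$ and $\mathcal{L}_{T_k}(2r_k)=1-\frac{\sqrt2\,q}{k+2}\tanh(q/\sqrt2)+\mathcal{O}(k^{-2})$. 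The latter yields the $\tanh(\sqrt\rho)/(4\sqrt\rho)$ term.

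The $\coth$ term does not come from $B_k$ alone. The $\sinh$ ratio in $A_k$ tends to $1-q/\sinh q$. The resulting pieces $-q/(\rho\sinh q)$ and $-q\tanh(q/2)/(2\rho)$ must be merged via $\frac{1}{\sinh q}+\frac12\tanh\frac q2=\frac12\coth\frac q2$ to give $-\coth(\sqrt{\rho/2})/\sqrt{2\rho}$. With your argument $\sqrt\rho/2$ sitting next to $\sinh(\sqrt{2\rho})$, this half-angle identity would not apply. So the step you left as ``should emerge'' is exactly where the work lies. Once the factor $\sqrt2$ is fixed, your order counting, summation, division by $\mathbb{E}[S_n]$, and the gain $1-6h(\rho)$ agree with the paper.
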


\begin{IEEEproof}
Fix the sampling history $\mathcal{H}_{n}$ and suppress the subscript $n$. As in the proof of Theorem~\ref{thm:affine-age-estimator}, let $W_{u}$
denote the shifted Wiener process and let $\tau$ be its first exit time from $(-\alpha,\beta)$. All expectations below are conditional on this history. The conditional MSE is
\begin{align*}
    \mathsf{MSE}^{\mathrm{EXP}}(c,r)
    &=
    \mathbb{E}\left[
        \int_{0}^{\tau}
        \big(W_{u}-c(1-e^{-ru})\big)^{2}\,du
    \right]\\
    &=
    \mathcal{E}_{0}-2cA(r)+c^{2}B(r),
\end{align*}
where
\begin{align*}
    A(r)
    &=
    \mathbb{E}\left[\int_{0}^{\tau}W_{u}\,du\right]
    -
    \mathbb{E}\left[\int_{0}^{\tau}e^{-ru}W_{u}\,du\right],\\
    B(r)
    &=
    \mathbb{E}\left[
        \int_{0}^{\tau}(1-e^{-ru})^{2}\,du
    \right].
\end{align*}

For $A(r)$, the first term is given in~\eqref{eq:affine-state-integral}. To evaluate the second term, It\^o's formula gives
\begin{equation*}
    d(e^{-ru}W_{u})
    =
    e^{-ru}\,dW_{u}
    -re^{-ru}W_{u}\,du.
\end{equation*}
Integrating up to $\tau$ and using $W_{0}=0$ gives
\begin{equation*}
    e^{-r\tau}W_{\tau}
    =
    \int_{0}^{\tau}e^{-ru}\,dW_{u}
    -
    r\int_{0}^{\tau}e^{-ru}W_{u}\,du.
\end{equation*}
The first term on the RHS is a stochastic integral. Since 
\begin{equation*}
    \mathbb{E}\left[
        \int_{0}^{\tau}e^{-2ru}\,du
    \right]
    \leq
    \frac{1}{2r}<\infty,
\end{equation*}
it follows from Lemma~\ref{lem:app-stopped-stochastic-integrals} that this stochastic integral is a martingale and has zero expectation. Hence, taking expectations on both sides yields
\begin{equation}
    \mathbb{E}\left[
        \int_{0}^{\tau}e^{-ru}W_{u}\,du
    \right]
    =
    -\frac{\mathbb{E}[e^{-r\tau}W_{\tau}]}{r}.
    \label{eq:exponential-discounted-state}
\end{equation}

The Laplace transforms of the first exit time $\tau$, separated according to the exit boundary, are given by (see~\eqref{eq:app-exit-boundary-transforms})
\begin{align*}
    \mathbb{E}\left[
        e^{-r\tau}\mathbb{I}\{W_{\tau}=\beta\}
    \right]
    &=
    \frac{\sinh(\alpha\sqrt{2r})}
         {\sinh((\alpha+\beta)\sqrt{2r})},\\
    \mathbb{E}\left[
        e^{-r\tau}\mathbb{I}\{W_{\tau}=-\alpha\}
    \right]
    &=
    \frac{\sinh(\beta\sqrt{2r})}
         {\sinh((\alpha+\beta)\sqrt{2r})}.
\end{align*}
Since $W_{\tau}\in\{-\alpha,\beta\}$ almost surely,
\begin{equation*}
    \mathbb{E}[e^{-r\tau}W_{\tau}]
    =
    \frac{
        \beta\sinh(\alpha\sqrt{2r})
        -\alpha\sinh(\beta\sqrt{2r})
    }{
        \sinh((\alpha+\beta)\sqrt{2r})
    }.
\end{equation*}
Substituting this into~\eqref{eq:exponential-discounted-state} gives
\begin{equation*}
    \mathbb{E}\left[
        \int_{0}^{\tau}e^{-ru}W_{u}\,du
    \right]
    =
    \frac{
        \alpha\sinh(\beta\sqrt{2r})
        -\beta\sinh(\alpha\sqrt{2r})
    }{
        r\sinh((\alpha+\beta)\sqrt{2r})
    }.
\end{equation*}
Subtracting this expression from
\eqref{eq:affine-state-integral} proves~\eqref{eq:exponential-A}.

For $B(r)$, expanding the square gives
\begin{align*}
    B(r)
    &=
    \mathbb{E}[\tau]
    -2\mathbb{E}\left[
        \int_{0}^{\tau}e^{-ru}\,du
    \right]
    +
    \mathbb{E}\left[
        \int_{0}^{\tau}e^{-2ru}\,du
    \right].
\end{align*}
For any $s>0$, direct integration and
$\mathcal{L}_{\tau}(s)=\mathbb{E}[e^{-s\tau}]$ yield
\begin{equation*}
    \mathbb{E}\left[
        \int_{0}^{\tau}e^{-su}\,du
    \right]
    =
    \frac{1-\mathcal{L}_{\tau}(s)}{s}.
\end{equation*}
Applying this identity at $s=r$ and $s=2r$, together with $\mathbb{E}[\tau]=\alpha\beta$ from~\eqref{eq:affine-first-exit-moment}, gives
\begin{align*}
    B(r)
    &=
    \alpha\beta
    -\frac{2(1-\mathcal{L}_{\tau}(r))}{r}
    +\frac{1-\mathcal{L}_{\tau}(2r)}{2r}\\
    &=
    \alpha\beta
    +
    \frac{
        4\mathcal{L}_{\tau}(r)
        -\mathcal{L}_{\tau}(2r)-3
    }{2r}.
\end{align*}
This proves~\eqref{eq:exponential-B}.

We next derive the average-MSE asymptotic.
Use $k$ to index the sampling intervals and write
$q=\sqrt{2\rho}$ for brevity.
Substituting $c_{k}=(\beta_{k}-\alpha_{k})/2$
into~\eqref{eq:exponential-age-mse}, the undiscounted terms combine to give
\begin{align*}
    &\mathcal{E}_{k,0}
    -\frac{\alpha_{k}\beta_{k}
        (\beta_{k}-\alpha_{k})^{2}}{3}
    +\frac{\alpha_{k}\beta_{k}
        (\beta_{k}-\alpha_{k})^{2}}{4}\\
    &\qquad=
    \frac{\alpha_{k}\beta_{k}
        (\alpha_{k}^{2}+\beta_{k}^{2})}{12}
    =
    \mathsf{MSE}_{k}^{\mathrm{LSI}}.
\end{align*}
Using
$\{\alpha_{k},\beta_{k}\}=\{\delta,(k+1)\delta\}$
and $r_{k}=\rho/(\alpha_{k} + \beta_{k})^{2}$
therefore gives
\begin{align}
    \frac{\mathsf{MSE}_{k}^{\mathrm{EXP}}(\rho)}
         {\delta^{4}}
    &=
    \frac{(k+1)(k^{2}+2k+2)}{12}
    \notag\\
    &+
    \frac{k(k+2)^{2}}{\rho}
    \frac{
        \sinh\big(\frac{k+1}{k+2}q\big)
        -(k+1)\sinh\big(\frac{q}{k+2}\big)
    }{\sinh(q)}
    \notag\\
    &+
    \frac{k^{2}(k+2)^{2}}{8\rho}
    \left[
        4\mathcal{L}_{T_{k}}(r_{k})
        -\mathcal{L}_{T_{k}}(2r_{k})-3
    \right].
    \label{eq:exp-scaled-interval-mse}
\end{align}

We evaluate these three terms as $k\to\infty$, with $\rho>0$ fixed. The first term satisfies
\begin{equation*}
    \frac{(k+1)(k^{2}+2k+2)}{12}
    =
    \frac{k^{3}}{12}+\mathcal{O}(k^{2}).
\end{equation*}
For the second term, Taylor expansion gives
\begin{align*}
    \sinh\left(\frac{k+1}{k+2}q\right)
    &=
    \sinh(q)+\mathcal{O}(k^{-1}),\\
    (k+1)\sinh\left(\frac{q}{k+2}\right)
    &=
    q+\mathcal{O}(k^{-1}).
\end{align*}
Consequently,
\begin{equation*}
    \frac{
        \sinh\left(\frac{k+1}{k+2}q\right)
        -(k+1)\sinh\left(\frac{q}{k+2}\right)
    }{\sinh(q)}
    =
    1-\frac{q}{\sinh(q)}+\mathcal{O}(k^{-1}).
\end{equation*}

For the third term, the Laplace transform in~\eqref{eq:LT_Tn} yields
\begin{align*}
    \mathcal{L}_{T_{k}}(r_{k})
    &=
    \frac{
        \cosh\left(\frac{q}{2}-\frac{q}{k+2}\right)
    }{\cosh(q/2)}\\
    &=
    1-\frac{q}{k+2}\tanh(q/2)
    +\mathcal{O}(k^{-2}).
\end{align*}
Replacing $q$ by $\sqrt{2}q$ gives
\begin{equation*}
    \mathcal{L}_{T_{k}}(2r_{k})
    =
    1-\frac{\sqrt{2}q}{k+2}\tanh(q/\sqrt{2})
    +\mathcal{O}(k^{-2}).
\end{equation*}
Hence,
\begin{align*}
    &4\mathcal{L}_{T_{k}}(r_{k})
    -\mathcal{L}_{T_{k}}(2r_{k})-3\\
    &\quad=
    \frac{
        -4q\tanh(q/2)
        +\sqrt{2}q\tanh(q/\sqrt{2})
    }{k+2}
    +\mathcal{O}(k^{-2}).
\end{align*}

Substituting these expansions into~\eqref{eq:exp-scaled-interval-mse} gives
\begin{equation}
    \mathsf{MSE}_{k}^{\mathrm{EXP}}(\rho)
    =
    h(\rho)k^{3}\delta^{4}
    +\mathcal{O}(k^{2}\delta^{4}),
    \label{eq:exp-interval-mse-asymptotic}
\end{equation}
where
\begin{align*}
    h(\rho)
    &=
    \frac{1}{12}
    +\frac{1-q/\sinh(q)}{\rho}\\
    &\quad+
    \frac{
        -4q\tanh(q/2)
        +\sqrt{2}q\tanh(q/\sqrt{2})
    }{8\rho}.
\end{align*}
Using the identity
\begin{equation*}
    \frac{1}{\sinh(q)}
    +\frac{1}{2}\tanh(q/2)
    =
    \frac{1}{2}\coth(q/2)
\end{equation*}
and recalling $q=\sqrt{2\rho}$, we obtain
\begin{equation*}
    h(\rho)
    =
    \frac{1}{12}+\frac{1}{\rho}
    -\frac{\coth(\sqrt{\rho/2})}{\sqrt{2\rho}}
    +\frac{\tanh(\sqrt{\rho})}{4\sqrt{\rho}},
\end{equation*}
which proves~\eqref{eq:hrho}.

Summing~\eqref{eq:exp-interval-mse-asymptotic} over the first $n$ sampling intervals and using $\sum_{k=0}^{n-1}k^{3} =n^{4}/4+\mathcal{O}(n^{3})$ gives
\begin{equation*}
    \mathcal{J}_{n}^{\mathrm{EXP}}(\rho)
    =
    \sum_{k=0}^{n-1}\mathsf{MSE}_{k}^{\mathrm{EXP}}(\rho)
    =
    \frac{h(\rho)}{4}n^{4}\delta^{4}
    +\mathcal{O}(n^{3}\delta^{4}).
\end{equation*}
Dividing by $\mathbb{E}[S_{n}]=n(n+1)\delta^{2}/2$
therefore yields
\begin{equation*}
    \bar{\mathcal{J}}_{n}^{\mathrm{EXP}}(\rho)
    =
    \frac{h(\rho)}{2}n^{2}\delta^{2}
    +\mathcal{O}(n\delta^{2}).
\end{equation*}

Finally, the ZOH average MSE satisfies
\begin{equation*}
    \bar{\mathcal{J}}_{n}^{\mathrm{ZOH}}
    =
    \frac{3n^{2}-n+4}{36}\delta^{2}
    =
    \frac{n^{2}\delta^{2}}{12}
    +\mathcal{O}(n\delta^{2}).
\end{equation*}
Taking the ratio of the leading terms gives
\begin{equation*}
    \lim_{n\to\infty}
    \frac{
        \bar{\mathcal{J}}_{n}^{\mathrm{ZOH}}
        -\bar{\mathcal{J}}_{n}^{\mathrm{EXP}}(\rho)
    }{
        \bar{\mathcal{J}}_{n}^{\mathrm{ZOH}}
    }
    =
    1-6h(\rho),
\end{equation*}
which completes the proof.
\end{IEEEproof}

Fig.~\ref{fig:exp_gain} illustrates the asymptotic gain of the exponential-age estimator. A small $\rho$ corrects the estimator too slowly to fully exploit the information conveyed by silence, while a large $\rho$ moves the estimate toward the long-silence limit prematurely. Numerical optimization gives $\rho^*\approx16.82$, yielding a gain of approximately $81.85\%$ over ZOH. Thus, an appropriately chosen exponential correction captures much of the estimation benefit of silence without evaluating the infinite series required by the exact MMSE estimator.

\section{Conclusion}
This work introduced innovation-based sampling, which generates a sample only when the source provides new information relative to all previous samples. Unlike periodic and deviation-based rules, which reference the elapsed time or the last sample, the innovation criterion references the entire sampling history. This history dependence does not entail growing complexity: the innovation process lies on a lattice and admits a three-dimensional state representation. For a Wiener process, innovations become progressively sparser and the sampling rate vanishes asymptotically. Their directions become increasingly persistent, and we established limit theorems for the number of direction reversals. The consequence for estimation is that silence carries information, and the MMSE estimate evolves with the AoI. Finally, we introduced tractable affine- and exponential-age approximations that recover most of the MMSE gain. Overall, innovations convey information through their content, direction, and timing.

\appendix
This appendix collects the necessary background on Wiener processes and stochastic integrals used in our analysis. Further details can be found in~\cite{karatzas1991brownian,borodin2002handbook}.

\renewcommand{\theequation}{A\arabic{equation}}
\setcounter{equation}{0}

\subsection{Filtrations, Stopping Times, and Martingales}\label{app:filtrations}
Let $(\Omega, \mathcal{F}, \mathbf{P})$ denote a probability space, where $\Omega$ is the sample space, $\mathcal{F}$ is a $\sigma$-field on $\Omega$, and $\mathbf{P}$ is a probability measure on $(\Omega,\mathcal{F})$. A stochastic process is a family $X = (X_t)_{t\geq 0}$ of random variables $X_t: \Omega \to \mathbb{R}$. Given a point $\omega\in\Omega$, the mapping $t \mapsto X_t(\omega)$ is called a realization or trajectory of the process $X$. The process $X$ is said to be \emph{continuous} if its trajectories are continuous almost surely, i.e.,
\begin{equation*}
    \mathbf{P}(\{ \omega\in\Omega: t \mapsto X_t(\omega)~\textrm{is continuous}\}) = 1.
\end{equation*}

A family of $\sigma$-fields $(\mathcal{F}_t)_{t\geq 0}$ is called a \emph{filtration} if
\begin{equation*}
    \mathcal{F}_s \subseteq \mathcal{F}_t \subseteq \mathcal{F},
    \quad 0\leq s\leq t.
\end{equation*}
In particular, $\mathcal{F}_t$ contains all events that can be determined using the information available up to time $t$. It is often convenient to work with a right-continuous filtration. A filtration is said to be \emph{right-continuous} if, for every $t\geq0$,
\begin{equation*}
    \mathcal{F}_t = \mathcal{F}^{+}_{t} := \bigcap_{s > t} \mathcal{F}_s.
\end{equation*}

The process $X$ is said to be \emph{adapted} to the filtration $(\mathcal{F}_t)_{t\geq0}$ if $X_t$ is $\mathcal{F}_t$-measurable for every $t\geq0$. Note that $X$ is always adapted to its \emph{natural filtration} $\mathcal{F}^0_t:= \sigma(X_s: s \leq t)$ and the right-continuous modification $\mathcal{F}_t = \bigcap_{s>t}\mathcal{F}_s^{0}$.

A random time $T:\Omega \to [0,\infty)\cup \{\infty\}$ is a \emph{stopping time} with respect to $(\mathcal{F}_t)_{t\geq0}$ if
\begin{equation}
    \{T \leq t\} \in \mathcal{F}_{t}, \quad t\geq 0.
\end{equation}
In other words, whether stopping has occurred by time $t$ is determined by the information available up to time $t$. For a continuous adapted process $X$, a typical stopping time is the first exit time from an open interval, as defined in~\eqref{eq:app-exit-time}.

An adapted process $X$ is called a \emph{martingale} with respect to a filtration $(\mathcal{F}_t)_{t\geq0}$ if $\mathbb E[|X_t|]<\infty$ and
\begin{equation}
    \mathbb{E}[X_t {\,}|{\,} \mathcal{F}_s]
    = X_s, \quad 0 \leq s \leq t. \label{eq:app-martingale}
\end{equation}
That is, its expected future value, given the past, is equal to the most recent value. 

The optional stopping theorem extends this identity from deterministic times to random stopping times.

\begin{theorem}[{Optional stopping~\cite[Ch.~1.3.C]{karatzas1991brownian}}]
\label{thm:optional-stopping}
Let $X$ be a uniformly integrable martingale with respect to $(\mathcal{F}_t)_{t\geq0}$, and let $S\leq T$ be almost surely finite stopping times. Then
\begin{equation}
    \mathbb E[X_T {\,}|{\,} \mathcal{F}_S]
    =
    X_S,
    \quad \textrm{almost surely}.
    \label{eq:app-optional-stopping}
\end{equation}
\end{theorem}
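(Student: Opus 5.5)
The statement is the optional stopping theorem for a uniformly integrable martingale $X$ and almost surely finite stopping times $S\leq T$. It asks that $\mathbb{E}[X_T\,|\,\mathcal{F}_S]=X_S$ almost surely. My plan is to reduce to the discrete-time theorem through dyadic approximation of the stopping times, and then pass to the limit using uniform integrability. This is the route of~\cite[Ch.~1.3.C]{karatzas1991brownian}. I will assume the martingale has right-continuous paths. This holds for the continuous processes used in the paper, and in general holds for a modification since the filtration is right-continuous.

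\textbf{Step 1: closure.} Uniform integrability gives $X_t\to X_\infty$ almost surely and in $L^1$ as $t\to\infty$. It also gives $X_t=\mathbb{E}[X_\infty\,|\,\mathcal{F}_t]$ for all $t$. This is the martingale convergence theorem, applied along integer times and then extended using right-continuity.

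\textbf{Step 2: discrete approximation.} Define
\begin{equation*}
S_m:=\frac{\lceil 2^m S\rceil}{2^m},\qquad T_m:=\frac{\lceil 2^m T\rceil}{2^m}.
\end{equation*}
These are stopping times taking values in the dyadic grid $2^{-m}\mathbb{Z}_{+}$. They satisfy $S_m\leq T_m$, and they decrease to $S$ and $T$ respectively.

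On the grid, $(X_{k2^{-m}})_k$ is a discrete martingale closed by $X_\infty$. The discrete optional sampling theorem therefore gives $X_{T_m}=\mathbb{E}[X_\infty\,|\,\mathcal{F}_{T_m}]$. Taking the tower property over $\mathcal{F}_{S_m}\subseteq\mathcal{F}_{T_m}$ then yields
\begin{equation*}
\mathbb{E}[X_{T_m}\,|\,\mathcal{F}_{S_m}]=X_{S_m}.
\end{equation*}
To prove the discrete step, I would check the identity on events $A\cap\{T_m=k2^{-m}\}$ and sum over $k$.

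\textbf{Step 3: limit.} For $A\in\mathcal{F}_S\subseteq\mathcal{F}_{S_m}$, Step 2 gives
\begin{equation*}
\mathbb{E}[X_{T_m}\mathbb{I}_A]=\mathbb{E}[X_{S_m}\mathbb{I}_A].
\end{equation*}
Right-continuity gives $X_{T_m}\to X_T$ and $X_{S_m}\to X_S$ almost surely.

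The families $\{X_{T_m}\}$ and $\{X_{S_m}\}$ are uniformly integrable. This is because each is a family of conditional expectations $\mathbb{E}[X_\infty\,|\,\mathcal{G}]$ of a single integrable variable over sub-$\sigma$-fields $\mathcal{G}$. Almost sure convergence plus uniform integrability gives $L^1$ convergence, so $\mathbb{E}[X_T\mathbb{I}_A]=\mathbb{E}[X_S\mathbb{I}_A]$.

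Finally, $X_S$ is $\mathcal{F}_S$-measurable because $X$ is progressively measurable (adapted with right-continuous paths). This proves the conditional identity.

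The main obstacle is Step 3, specifically the passage to the limit. Establishing the uniform integrability of the approximating sequence is the crux, and the closure representation in Step 1 is what makes it work. The measurability of $X_S$ also needs the right-continuity assumption to be stated explicitly.
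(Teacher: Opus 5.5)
The paper gives no proof here beyond citing \cite[Ch.~1.3.C]{karatzas1991brownian}, and your argument follows that route and is correct: dyadic approximations $S_m\downarrow S$, $T_m\downarrow T$, discrete optional sampling for the grid martingale closed by $X_\infty$, then passage to the limit via right-continuity and uniform integrability, which you obtain directly from $X_{T_m}=\mathbb{E}[X_\infty\,|\,\mathcal{F}_{T_m}]$. Your right-continuity assumption is actually necessary, not merely convenient: for $\tau$ exponentially distributed, with the right-continuous filtration generated by $\tau\wedge t$, the bounded martingale $X_t=\mathbb{I}\{t=\tau\}$ (a.s.\ zero at each fixed $t$) has $X_\tau=1\neq X_0$, so the conclusion holds for the right-continuous modification but does not transfer back to an arbitrary version of $X$.
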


\subsection{Wiener Process and Its Properties}
\label{app:wiener-preliminaries}
A standard Wiener process $(W_t)_{t\geq0}$ is a stochastic process satisfying
\begin{enumerate}
    \item $W_0=0$ almost surely;
    \item its trajectories are continuous almost surely;
    \item for every $0\leq t_0<t_1<\cdots<t_m$, the increments
    \begin{equation*}
        W_{t_1}-W_{t_0},\,
        W_{t_2}-W_{t_1},\ldots,
        W_{t_m}-W_{t_{m-1}}
    \end{equation*}
    are mutually independent and satisfy
    \begin{equation*}
        W_{t_j} - W_{t_{j-1}}
        \sim
        \mathcal N(0,t_j - t_{j-1}),
        \quad j=1,\ldots,m.
    \end{equation*}
\end{enumerate}
We work with the right-continuous natural filtration $(\mathcal{F}_t)_{t\geq0}$.

The Wiener process has the following properties.
\begin{itemize}
    \item \emph{Symmetry.} It is symmetric in distribution, i.e.,
    \begin{equation}
    (W_t)_{t\geq0} \overset{d}{=} (-W_t)_{t\geq0}.
    \end{equation}
    \item \emph{Finite moments.} Since
    $W_t\sim\mathcal{N}(0,t)$, for every $p>0$,
    \begin{equation*}
        \mathbb{E}[|W_t|^p]
        =
        t^{p/2}\mathbb{E}[|Z|^p]
        <\infty,
        \quad t\geq0,
    \end{equation*}
    where $Z\sim\mathcal{N}(0,1)$. Consequently, for every
    finite $t$,
    \begin{equation}
        \mathbb{E}\left[\int_0^t |W_s|^p\,ds\right]
        =
        \frac{t^{1+p/2}}{1+p/2}\mathbb{E}[|Z|^p]
        <\infty.
        \label{eq:app-wiener-moments}
    \end{equation}
    \item \emph{Strong Markov property.} Let $T$ be an almost surely finite $\mathcal{F}_t$-stopping time. Then the shifted process
    \begin{equation}
        \widetilde{W}_t = W_{T+t} - W_{T}, \quad t\geq0,
    \end{equation}
    is a Wiener process independent of $\mathcal{F}_{T}$. Thus, after each stopping time, the Wiener process ``starts afresh'' from its current value.
\end{itemize}

An important consequence of the strong Markov property is the reflection principle, which relates the distribution of running extrema to the distribution of the process. Define the running maximum and minimum by
\begin{equation*}
    \overline{M}_t
    :=
    \sup_{0\leq s\leq t}W_s,
    \qquad
    \underline{m}_t
    :=
    \inf_{0\leq s\leq t} W_s.
\end{equation*}

\begin{lemma}[{Reflection principle~\cite[Ch.~2.6]{karatzas1991brownian}}]
\label{lem:reflection-principle}
For $t>0$,
\begin{equation}
    \mathbf{P}(\overline{M}_t\geq x)
    =
    2\mathbf{P}(W_t\geq x)
    =
    2Q\left(\frac{x}{\sqrt{t}}\right), \quad x \geq 0,
\end{equation}
where $Q(x)$ is the Gaussian tail function. Consequently
\begin{equation}
    \overline{M}_t \overset{d}{=} - \underline{m}_t
    \overset{d}{=}|W_t|,
    \quad
    \mathbb{E}[\overline{M}_t]
    =-\mathbb{E}[\underline{m}_t]
    =\sqrt{\frac{2t}{\pi}}.
    \label{eq:app-extrema}
\end{equation}
\end{lemma}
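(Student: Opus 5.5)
The reflection principle is a classical result, and I would prove it in two stages: first the tail identity for $\overline{M}_t$, then the distributional and expectation consequences in \eqref{eq:app-extrema}.

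\emph{Tail identity.} Fix $x>0$ (the case $x=0$ is trivial, since both sides equal $1$: $\overline{M}_t\ge W_0=0$ and $2\mathbf{P}(W_t\ge 0)=1$). Let $\tau_x:=\inf\{s\ge0: W_s=x\}$. By path continuity, $\{\overline{M}_t\ge x\}=\{\tau_x\le t\}$. I would split
\[
\mathbf{P}(\tau_x\le t)=\mathbf{P}(\tau_x\le t,\,W_t\ge x)+\mathbf{P}(\tau_x\le t,\,W_t<x).
\]
Since $W_t\ge x$ forces $\tau_x\le t$, the first term is $\mathbf{P}(W_t\ge x)$.

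For the second term I apply the strong Markov property at $\tau_x$. On $\{\tau_x\le t\}$ we have $W_t-x=\widetilde W_{t-\tau_x}$, where $\widetilde W$ is a Wiener process independent of $\mathcal{F}_{\tau_x}$, and $\tau_x$ is $\mathcal{F}_{\tau_x}$-measurable. Conditioning on $\mathcal{F}_{\tau_x}$ and using the symmetry $\widetilde W\overset{d}{=}-\widetilde W$ gives
\[
\mathbf{P}(\tau_x\le t,\,\widetilde W_{t-\tau_x}<0)=\mathbf{P}(\tau_x\le t,\,\widetilde W_{t-\tau_x}>0).
\]
The right-hand side equals $\mathbf{P}(W_t>x)=\mathbf{P}(W_t\ge x)$, because $W_t$ has a continuous law.

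The delicate point, and the main obstacle, is this step. The time $t-\tau_x$ is random, so one has to justify evaluating the independent fresh process at an $\mathcal{F}_{\tau_x}$-measurable time. I would handle it by writing the probability as $\mathbb{E}[\mathbb{I}\{\tau_x\le t\}\,\phi(t-\tau_x)]$ with $\phi(u):=\mathbf{P}(\widetilde W_u<0)=\tfrac12$ for $u>0$. The value at $u=0$ does not matter, since $\mathbf{P}(\tau_x=t)=0$. Applying the same computation for $>0$ gives the same value, $\tfrac12\mathbf{P}(\tau_x\le t)$. Summing the two terms yields $\mathbf{P}(\overline{M}_t\ge x)=2\mathbf{P}(W_t\ge x)=2Q(x/\sqrt t)$.

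\emph{Consequences.} From the tail identity, $\mathbf{P}(\overline{M}_t\ge x)=\mathbf{P}(|W_t|\ge x)$ for all $x\ge0$, so $\overline{M}_t\overset{d}{=}|W_t|$. Applying the identity to $-W$, which is again a Wiener process by symmetry, gives $-\underline{m}_t=\sup_{s\le t}(-W_s)\overset{d}{=}|W_t|$. Finally, $\mathbb{E}|W_t|=\sqrt t\,\mathbb{E}|Z|=\sqrt{2t/\pi}$, which yields both expectations in \eqref{eq:app-extrema}.
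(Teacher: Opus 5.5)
Your argument is correct and is the classical strong-Markov reflection argument found in the cited reference (Karatzas--Shreve, Sec.~2.6). The paper gives no proof beyond that citation and presents the lemma explicitly as a consequence of the strong Markov property, so your route is the intended one. One detail to tidy up: the paper states the strong Markov property only for almost surely finite stopping times, so either note that $\tau_x<\infty$ a.s.\ by recurrence, or apply the property at the bounded time $\tau_x\wedge t$, which coincides with $\tau_x$ on $\{\tau_x\le t\}\in\mathcal{F}_{\tau_x\wedge t}$.
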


\subsection{First Exit Time of the Wiener Process}\label{app:first-exit-time}
Consider a Wiener process $W$ starting from zero and define its \emph{first exit time} from the interval $(-\alpha,\beta)$, $\alpha, \beta>0$, by
\begin{equation}
    \tau
    :=
    \inf\{t\geq0:W_t \notin (-\alpha,\beta)\}.
    \label{eq:app-exit-time}
\end{equation}
The exit time $\tau$ is almost surely finite, and continuity implies $W_\tau \in \{-\alpha,\beta\}$ almost surely. Applying the optional stopping theorem gives $\mathbb{E}[W_\tau]=W_0=0$, and hence
\begin{equation}
    \mathbf{P}(W_\tau=\beta)
    =
    \frac{\alpha}{\alpha+\beta},
    \quad
    \mathbf{P}(W_\tau=-\alpha)
    =
    \frac{\beta}{\alpha+\beta}.
    \label{eq:app-exit-probabilities}
\end{equation}

The distribution of $\tau$ can be characterized by the Laplace transform~\cite[p.~212]{borodin2002handbook}
\begin{equation}
    \mathcal{L}_{\tau}(s)
    :=
    \mathbb{E}[e^{-s\tau}]
    =
    \frac{
        \cosh\left(
        \frac{\alpha-\beta}{2}\sqrt{2s}
        \right)
    }{
        \cosh\left(
        \frac{\alpha+\beta}{2}\sqrt{2s}
        \right)
    },
    \quad s\geq0.
    \label{eq:app-exit-laplace}
\end{equation}
Since this expression is analytic in a neighborhood of $s=0$, all moments of $\tau$ are finite and satisfy
\begin{equation}
    \mathcal{L}_{\tau}^{(k)}(0) = (-1)^k\mathbb{E}[\tau^k].
\end{equation}
For $s>0$, the Laplace transforms of the first exit time, separated according to the exit boundary, are given by
\begin{subequations}
\label{eq:app-exit-boundary-transforms}
\begin{align}
    \mathbb{E}\left[
        e^{-s\tau}\mathbb{I}\{W_\tau=-\alpha\}
    \right]
    &=
    \frac{\sinh(\beta\sqrt{2s})}
         {\sinh((\alpha+\beta)\sqrt{2s})},
    \label{eq:app-lower-exit-transform}\\
    \mathbb{E}\left[
        e^{-s\tau}\mathbb{I}\{W_\tau=\beta\}
    \right]
    &=
    \frac{\sinh(\alpha\sqrt{2s})}
         {\sinh((\alpha+\beta)\sqrt{2s})}.
    \label{eq:app-upper-exit-transform}
\end{align}
\end{subequations}
As $s\downarrow0$, they recover the exit probabilities in~\eqref{eq:app-exit-probabilities}.

Exit-time statistics describe when the process reaches the boundary, but do not specify its location before exit. The \emph{killed transition density} describes the locations of trajectories that have not yet exited and is defined by
\begin{equation*}
    p(t,x)\,dx
    :=
    \mathbf P(W_t\in dx,\tau>t),
    \quad  t>0,\,\,-\alpha<x<\beta.
\end{equation*}
Here, ``killed'' means that a trajectory ceases to
contribute once it exits the interval. This density has the spectral representation~\cite[p.~122]{borodin2002handbook}
\begin{equation}
    p(t,x)
    =
    \frac{2}{\alpha+\beta}
    \sum_{k=1}^{\infty}
    \sin\left(
        \frac{k\pi\alpha}{\alpha+\beta}
    \right)
    \sin\left(
        \frac{k\pi(x+\alpha)}{\alpha+\beta}
    \right)
    e^{-\lambda_k t},
    \label{eq:app-killed-density-general}
\end{equation}
where
\begin{equation*}
    \lambda_k
    =
    \frac{k^2\pi^2}{2(\alpha+\beta)^2},
    \quad k\geq1.
\end{equation*}
The total mass of this density is the \emph{survival probability}
\begin{equation}
    \mathbf P(\tau>t)
    =
    \int_{-\alpha}^{\beta}p(t,x)\,dx,
    \label{eq:app-survival-probability}
\end{equation}
which is the probability that the process remains
inside the interval throughout $[0,t]$.

\subsection{Stochastic Integrals}
\label{app:stochastic-integrals}

Let $W$ be a Wiener process with respect to $(\mathcal{F}_t)_{t\geq0}$. For a continuous adapted process $X$,
the \emph{stochastic integral}
\begin{equation}
    Y_t := \int_0^t X_s\,dW_s
    \label{eq:app-stochastic-integral}
\end{equation}
accumulates Wiener increments weighted by $X_s$.

The following result gives the two properties of stochastic integrals used in our analysis.

\begin{theorem}[{see~\cite[Ch.~3.1]{borodin2002handbook}}]
\label{thm:app-ito-isometry}
Suppose that $X$ is continuous and adapted, and that
\begin{equation*}
    \mathbb{E}\left[\int_0^t X_s^2\,ds\right]<\infty 
\end{equation*}
for every finite $t$. Then $Y$ is a continuous, square-integrable martingale starting from zero. In particular,
\begin{equation}
    \mathbb{E}[Y_t]=0,
    \quad
    \mathbb{E}[Y_t^2]
    =
    \mathbb{E}\left[\int_0^t X_s^2\,ds\right].
    \label{eq:app-ito-isometry}
\end{equation}
The above identities hold at an almost surely finite stopping time $T$ that satisfies $\mathbb{E}\left[\int_0^T X_s^2\,ds\right]<\infty$.
\end{theorem}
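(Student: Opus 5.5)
The statement is the standard It\^o-integral result (Theorem~\ref{thm:app-ito-isometry}). I would prove it in three stages: simple processes, a closure argument, and then passage to a stopping time.

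\emph{Simple integrands.} First take $X$ simple, $X_s=\sum_{i}\xi_i\mathbb{I}\{t_i<s\leq t_{i+1}\}$ with $\xi_i$ bounded and $\mathcal{F}_{t_i}$-measurable. Then $Y_t=\sum_i \xi_i(W_{t_{i+1}\wedge t}-W_{t_i\wedge t})$. The martingale property follows by conditioning each summand on $\mathcal{F}_{t_i}$, using the independence of $W_{t_{i+1}}-W_{t_i}$ from $\mathcal{F}_{t_i}$ together with its zero mean. For the isometry, expand $\mathbb{E}[Y_t^2]$. The cross terms vanish by the same conditioning argument applied at the later time index. The diagonal terms give $\mathbb{E}[\xi_i^2](t_{i+1}-t_i)$, which sums to $\mathbb{E}[\int_0^t X_s^2\,ds]$.

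\emph{General continuous adapted $X$.} Approximate $X$ on $[0,t]$ by the simple processes $X^{(m)}$ obtained by freezing $X$ at left dyadic endpoints and truncating at level $m$. Continuity gives pointwise convergence. Dominated convergence then gives $\mathbb{E}\int_0^t(X-X^{(m)})^2ds\to0$, provided the domination is arranged by truncation; if needed, localize first with $\sigma_K=\inf\{s:|X_s|\geq K\}$. By the isometry for simple processes, $Y^{(m)}_t$ is Cauchy in $L^2$. Doob's $L^2$ maximal inequality upgrades this to uniform convergence on $[0,t]$ in $L^2$, so a subsequence converges a.s. uniformly. This makes the limit $Y$ continuous. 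Martingality and $\mathbb{E}[Y_t^2]=\mathbb{E}\int_0^tX_s^2ds$ pass to the limit because $L^2$ convergence preserves conditional expectations and second moments. In particular $\mathbb{E}[Y_t]=0$.

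\emph{Stopping time $T$.} This is the main obstacle. Apply the previous step to $X\mathbb{I}\{s\leq T\}$, which is adapted because $T$ is a stopping time, and obtain $Y_{T\wedge t}=\int_0^t X_s\mathbb{I}\{s\leq T\}dW_s$. The identity $\int_0^{t} X\mathbb{I}\{\cdot\leq T\}\,dW=Y_{T\wedge t}$ I would check first for simple $X$ and discrete-valued $T$, then by approximation. This gives $\mathbb{E}[Y_{T\wedge t}]=0$ and $\mathbb{E}[Y_{T\wedge t}^2]=\mathbb{E}\int_0^{T\wedge t}X_s^2ds\leq C:=\mathbb{E}\int_0^TX_s^2ds<\infty$.

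Hence $(Y_{T\wedge t})_t$ is an $L^2$-bounded martingale, and by the isometry applied to increments it is Cauchy in $L^2$ as $t\to\infty$: $\mathbb{E}[(Y_{T\wedge t'}-Y_{T\wedge t})^2]=\mathbb{E}\int_{T\wedge t}^{T\wedge t'}X^2ds\to0$ by monotone convergence. Since $T<\infty$ a.s., $Y_{T\wedge t}\to Y_T$ a.s., so the $L^2$ limit equals $Y_T$. Passing to the limit gives $\mathbb{E}[Y_T]=0$ and $\mathbb{E}[Y_T^2]=\mathbb{E}\int_0^TX_s^2ds$. Uniform integrability, which follows from the $L^2$ bound, is the key point justifying this exchange of limit and expectation. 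It is exactly what the hypothesis $\mathbb{E}\int_0^T X_s^2ds<\infty$ provides, and is consistent with Theorem~\ref{thm:optional-stopping}.
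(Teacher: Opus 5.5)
The paper gives no proof of this statement. It quotes the result from~\cite[Ch.~3.1]{borodin2002handbook} and uses it as a black box in Lemma~\ref{lem:app-stopped-stochastic-integrals} and in the proofs of Theorems~\ref{thm:affine-age-estimator} and~\ref{thm:exponential-age-mse}. Your proposal supplies the standard textbook construction instead: simple integrands, then $L^2$ closure with Doob's inequality for continuity, then stopping. So the comparison is between a citation and a self-contained derivation. What your route adds is transparency about the clause the paper actually relies on. Your last step shows that a.s.\ finiteness of $T$ and $\mathbb{E}[\int_0^T X_s^2\,ds]<\infty$ make $(Y_{T\wedge t})_{t}$ Cauchy in $L^2$ with limit $Y_T$. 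That is exactly what justifies $\mathbb{E}[\int_0^\tau g(s)W_s^m\,dW_s]=0$ at the unbounded exit time $\tau$. That final step is correct as written.

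Two steps need tightening.

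\textbf{The closure step.} Freezing at mesh $2^{-m}$ and clipping at level $m$ with the same index does not give an integrable dominating function, and this one-index scheme can genuinely fail to converge. For example, let $N$ be $\mathcal{F}_1$-measurable with $\mathbf{P}(N=k)\propto k^{-2}$. On $\{N=k\}$, let $X$ have spikes of height $k$ and width $2^{-k}k^{-4}$ at the odd multiples of $2^{-k}$ in $[2,3]$, and vanish elsewhere. Then $\mathbb{E}[\int X_s^2\,ds]<\infty$. However, on $\{N=m\}$ the frozen process equals $m$ on half of $[2,3]$, so $\mathbb{E}[\int(X_s-X^{(m)}_s)^2\,ds]$ stays bounded away from zero. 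Your fallback of localizing first is the right instinct, but it needs a separate limit. Work in two stages:
\begin{enumerate}
\item The clipped process $X^{K}=(X\wedge K)\vee(-K)$ is bounded and continuous. Its frozen versions therefore converge in $L^2(\Omega\times[0,t])$ by bounded convergence.
\item $X^{K}\to X$ in $L^2(\Omega\times[0,t])$ by dominated convergence with dominator $X_s^2$.
\end{enumerate}
Then take a diagonal sequence.

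\textbf{The stopping step.} The integrand $X_s\mathbb{I}\{s\leq T\}$ is not continuous: it jumps at $T$ whenever $X_T\neq 0$. This is the typical case, e.g.\ $X_s=e^{-rs}$ in the proof of Theorem~\ref{thm:exponential-age-mse}. So "apply the previous step" does not apply verbatim. There are two easy repairs.
\begin{itemize}
\item Observe that the left-endpoint approximation uses only left-continuity. The closure step therefore covers left-continuous adapted integrands as well.
\item Alternatively, avoid the stopped integrand altogether. The conditional isometry $\mathbb{E}[(Y_t-Y_s)^2\,|\,\mathcal{F}_s]=\mathbb{E}[\int_s^t X_u^2\,du\,|\,\mathcal{F}_s]$ holds for simple integrands and passes to the $L^1$ limit. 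Hence $Y_t^2-\int_0^t X_u^2\,du$ is a continuous martingale. Applying Theorem~\ref{thm:optional-stopping} to $Y$ and to this martingale at the bounded stopping time $T\wedge t$ gives $\mathbb{E}[Y_{T\wedge t}]=0$ and $\mathbb{E}[Y_{T\wedge t}^2]=\mathbb{E}[\int_0^{T\wedge t}X_u^2\,du]$ directly.
\end{itemize}
With either repair, your $t\to\infty$ argument goes through unchanged.
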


The first identity in~\eqref{eq:app-ito-isometry} states that the stochastic integral has zero mean. The second identity, called \emph{It\^o's isometry}, expresses its second moment as an ordinary time integral. 

An important consequence of Theorem~\ref{thm:app-ito-isometry} is given below. 

\begin{lemma}\label{lem:app-stopped-stochastic-integrals}
Let $\tau$ be the first exit time
in~\eqref{eq:app-exit-time}, and let $g:[0,\infty)\to\mathbb{R}$
be continuous and satisfy
\begin{equation*}
    \mathbb{E}\left[\int_0^\tau g(s)^2\,ds\right]<\infty.
\end{equation*}
Then, for every integer $m\geq0$,
\begin{equation}
    \mathbb{E}\left[
        \int_0^\tau g(s)W_s^m\,dW_s
    \right]=0.
    \label{eq:app-general-stopped-integral}
\end{equation}
\end{lemma}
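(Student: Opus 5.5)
The plan is to reduce the claim to Theorem~\ref{thm:app-ito-isometry}, applied to the integrand $X_s := g(s)W_s^m$ stopped at $\tau$. The key observation is that on $\{s<\tau\}$ the process stays in $(-\alpha,\beta)$. Set $K:=\max(\alpha,\beta)$; then $|W_s|\leq K$ for all $s\leq\tau$. Hence $X_s^2 = g(s)^2 W_s^{2m}\leq K^{2m} g(s)^2$ on $[0,\tau]$. Integrating and using the hypothesis gives
\begin{equation*}
    \mathbb{E}\left[\int_0^\tau X_s^2\,ds\right]
    \leq K^{2m}\,\mathbb{E}\left[\int_0^\tau g(s)^2\,ds\right]<\infty .
\end{equation*}
The integrand $X$ is continuous and adapted, since $g$ is continuous and deterministic and $W$ is continuous and adapted. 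Moreover, $\tau$ is an almost surely finite stopping time, as recalled in Appendix~\ref{app:first-exit-time}.

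To invoke the stopping-time version of the theorem cleanly, I would work with the stopped integrand $X_s\mathbb{I}\{s\leq\tau\}$, or equivalently the stopped process $Y_{t\wedge\tau}$. First, for each finite $t$, the bound above gives $\mathbb{E}[\int_0^{t\wedge\tau}X_s^2ds]<\infty$. Therefore $Y_{t\wedge\tau}$ is a square-integrable martingale with $\mathbb{E}[Y_{t\wedge\tau}]=0$. By It\^o's isometry, $\sup_t\mathbb{E}[Y_{t\wedge\tau}^2]\leq K^{2m}\mathbb{E}[\int_0^\tau g^2ds]<\infty$. The stopped martingale is therefore $L^2$-bounded and hence uniformly integrable.

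Since $\tau<\infty$ almost surely, $Y_{t\wedge\tau}\to Y_\tau$ almost surely as $t\to\infty$. By uniform integrability (or directly by the optional stopping theorem, Theorem~\ref{thm:optional-stopping}), $\mathbb{E}[Y_\tau]=\lim_t\mathbb{E}[Y_{t\wedge\tau}]=0$, which is \eqref{eq:app-general-stopped-integral}.

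The main point of care is this passage from fixed times to the random time $\tau$. The plain statement of Theorem~\ref{thm:app-ito-isometry} assumes finiteness for every finite $t$, so the whole argument rests on the boundedness of $W$ before exit, which converts the hypothesis on $g$ into the required square-integrability. Everything else is routine.

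For the uses in the paper, the hypothesis holds in each case. For $g(s)=s$ or $3s$, it follows because $\mathbb{E}[\tau^3]<\infty$ from the analytic Laplace transform \eqref{eq:app-exit-laplace}. For $g(s)=e^{-rs}$, it holds trivially.
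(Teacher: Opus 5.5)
Your proof is correct and follows the paper's argument. The key step is the same: the bound $|W_s|\le K=\max\{\alpha,\beta\}$ for $s\le\tau$ turns the hypothesis on $g$ into $\mathbb{E}[\int_0^\tau g(s)^2W_s^{2m}\,ds]<\infty$, and Theorem~\ref{thm:app-ito-isometry} then gives the zero mean. The only difference is that you derive the stopping-time clause of that theorem yourself, via the $L^2$-bounded stopped martingale $Y_{t\wedge\tau}$ and uniform integrability, whereas the paper simply cites that clause.
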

\begin{IEEEproof}
Since $|W_s|\leq K:=\max\{\alpha,\beta\}$ for $s\leq\tau$,
\begin{equation*}
    \mathbb{E}\left[
        \int_0^\tau g(s)^2W_s^{2m}\,ds
    \right]
    \leq
    K^{2m}\mathbb{E}\left[
        \int_0^\tau g(s)^2\,ds
    \right]
    <\infty.
\end{equation*}
Thus, Theorem~\ref{thm:app-ito-isometry} gives~\eqref{eq:app-general-stopped-integral}.
% For $g(s)=s^k$, the required condition follows from
% \begin{equation*}
%     \mathbb{E}\left[\int_0^\tau s^{2k}\,ds\right]
%     =
%     \frac{\mathbb{E}[\tau^{2k+1}]}{2k+1}
%     <\infty.
% \end{equation*}
% For $g(s)=e^{-rs}$, it follows from
% \begin{equation*}
%     \mathbb{E}\left[\int_0^\tau e^{-2rs}\,ds\right]
%     \leq \frac{1}{2r}<\infty.
% \end{equation*}
\end{IEEEproof}

\subsection{It\^o's Formula and Polynomial Martingales}\label{app:wiener-martingales}
Let $W$ be a Wiener process. For a twice continuously differentiable function $f$, \emph{It\^o's formula}~\cite[Ch.~3.2]{borodin2002handbook} gives
\begin{equation*}
    f(W_t)
    =
    f(W_0)
    +
    \int_0^t f'(W_s)\,dW_s
    +
    \frac{1}{2}\int_0^t f''(W_s)\,ds.
\end{equation*}

For $f(x)=x^m$, with integer $m\geq2$, this becomes
\begin{equation}
    W_t^m
    -
    \frac{m(m-1)}{2}
    \int_0^t W_s^{m-2}\,ds
    =
    m\int_0^t W_s^{m-1}\,dW_s.
    \label{eq:app-power-ito}
\end{equation}
By~\eqref{eq:app-wiener-moments} with $p=2m-2$,
\begin{equation*}
    \mathbb{E}\left[
        \int_0^t |W_s|^{2m-2}\,ds
    \right]<\infty.
\end{equation*}
By Theorem~\ref{thm:app-ito-isometry}, the stochastic integral on the RHS of~\eqref{eq:app-power-ito} is a square-integrable martingale. Consequently,
\begin{equation*}
    W_t^m
    -
    \frac{m(m-1)}{2}
    \int_0^t W_s^{m-2}\,ds
\end{equation*}
is a martingale. This result provides a convenient way to evaluate time-integrated moments of the Wiener process up to a first exit time. Consider the first exit time in~\eqref{eq:app-exit-time}. Applying the optional stopping theorem to the above martingale gives
\begin{equation}
    \mathbb{E}[W_{\tau}^m] =
    \frac{m(m-1)}{2} \mathbb{E}\left[
        \int_0^{\tau} W_s^{m-2}\,ds
    \right]. \label{eq:app-poly-martingale}
\end{equation}

\bibliographystyle{IEEEtran}
\bibliography{ref}

\end{document}